\newcommand*\intp{p}
\newcommand*\pate{\eta}
\newcommand*\alignStepNb{2\xspace}
\newcommand*\VersionLongue{
  \newcommand\CourteLongue[2]{##2}
  \newcommand\Courte[1]{}
  \newcommand\Longue[1]{##1}
  \pagestyle{plain}
}
\newcommand*\seq{s}
\newcommand*\s{{}s\xspace}
\newcommand*\wfchain{well-founded chain\xspace}
\newcommand*\chain{looping chain\xspace}
\newcommand*\nbLits{\numn_l}
\newcommand*\concDom{E}
\newcommand*\nbSubIts{\numn_{it}}
\newcommand*\measIt{\mu}
\newcommand*\measLit{\nu}
\newcommand*\droite{\ \hfil}
\newcommand*\gauche{\hfill\ }
\newcommand*\Mathem{\underline}
\newcommand*\labeled[1]{$^{\ \left(#1\right)}$}
\newcommand*\relpos[2]{#1\!\setminus#2}
\newcommand*\litIn{\sqsubset}
\newcommand*\explicitref{$(\star)$\xspace}
\newcommand*\prem{P}
\DeclareMathOperator*\purified{purified}
\let\bigAnd=\bigwedge
\let\bigOr=\bigvee
\newcommand*\cycle[1]{\circlearrowleft\!\text{\scriptsize$(#1)$}}
\newcommand*\closed{\times}
\newcommand*\same{.}
\newcommand*\intof{}
\newcommand*\intn{\intof n}
\newcommand*\intm{\intof m}
\newcommand*\intk{\intof k}
\newcommand*\intq{\intof q}
\newcommand*\intj{\intof j}
\newcommand*\intl{\intof l}
\newcommand*\inti{\intof i}
\newcommand*\instantiateRuleCondition{\context(\sch_1)\Implies\cstr\substitution{\IRupbnd/\var}}
\newcommand*\litToSch[1]{\bigwedge_{#1}}
\newcommand*\finshift{$\stdeqshift$-finite\xspace}
\newcommand*\atom{A}
\newcommand*\Schlit{literal\xspace}
\newcommand*\Schatom{atom\xspace}
\newcommand*\IRupbnd{\expr}
\newcommand*\paramof[1]{\mathnormal{\mathsf #1}}
\newcommand*\paramm{\paramof m}
\newcommand*\param{\paramof n}
\newcommand*\shiftable{s}
\newcommand*\setofshiftables{\mathcal S}
\newcommand*\Shiftable{shiftable\xspace}
\newcommand*\childmodel[3]{#1^{#2}_{#3}}
\newcommand*\interval{I}
\newcommand*\numl{l}
\newcommand*\numi{i}
\newcommand*\numk{k}
\newcommand*\numj{j}
\newcommand*\numn{n}
\newcommand*\neutrOp{\varepsilon}
\newcommand*\conc{C}
\newcommand*\looping{\emph{Looping}\xspace}
\newcommand*\interpprop[1]{#1_p}
\newcommand*\interpenv[1]{\anenv_{#1}}
\newcommand*\StCstr{\And}
\newcommand*\LC{\mathcal{LC}}
\newcommand*\emptinesslhm{\schemaOpbis {\var'}{\cstr'}{\pat'}}
\newcommand*\cstrsplitlhm{\schemaOp\var\cstr\pat}%
\newcommand*\envapp[2]{|#1|_{#2}}
\newcommand*\neverbelong{\not\maybelong}
\newcommand*\maybelong{\litIn_\Diamond}
\newcommand*\exprf{f}
\newcommand*\expr{e}
\newcommand*\alwaysoccur{\litIn_\Box}
\newcommand*\Patt{\mathfrak P}
\newcommand*\lang{\mathcal L}
\newcommand*\sig{\Sigma}
\newcommand*\tuple[1]{\langle#1\rangle}
\newcommand*\aformula{\phi}
\newcommand*\xxdef[1]{\stackrel{\text{\tiny def}}{#1}}
\newcommand*\eqdef{\xxdef=}
\newcommand*\stdOr{\bigvee_{\var=1}^\param}
\newcommand*\stdAnd{\bigwedge_{\var=1}^\param}
\newcommand*\varof{\paramof}
\newcommand*\var{\varof i}
\newcommand*\varj{\varof j}
\newcommand*\setofexprs{\asetof\expr}
\newcommand*\cstrirred{constraint-irreducible\xspace}
\newcommand*\cstrirredy{constraint-irreducibility\xspace}
\newcommand*\representative{maximal companion\xspace}
\newcommand*\refinement{\triangleright}
\newcommand*\asetofprops{\asetof\prop}
\newcommand*\asetof[1]{\mathcal{\MakeUppercase{#1}}}
\newcommand*\Z{\mathbb Z}
\newcommand*\N{\bbbn}
\newcommand*\childparent{\prec}
\newcommand*\sep{,\ }%
\newcommand*\TSchDP{{\sc t}-\SchDP}
\newcommand*\asetofpos{\mathcal P}
\newcommand*\asetofconstraints{\mathcal C}
\newcommand*\pat{\pi}
\newcommand*\cstr{C}
\newcommand*\pos{p}
\newcommand*\aposq{q}
\newcommand*\asubstitution{\sigma}
\newcommand*\schcstr[1]{\cstr_{#1}}
\newcommand*\schpat[1]{\mathrm\Pi_{#1}}
\newcommand*\prop{P}
\newcommand*\propq{Q}
\newcommand*\abranch{b}
\newcommand*\trace{\leadsto}
\newcommand*\emptyPos{\epsilon}
\newcommand*\posOrder{\leq}
\newcommand*\Xor{\oplus}
\newcommand*\Equiv{\Leftrightarrow}
\newcommand*\literalset{\mathcal L}
\DeclareMathOperator*\context{Context}
\newcommand*\DPLL{{\sc dpll}\xspace}
\newcommand*\SchDP{{\sc \DPLL{}$^\star$}\xspace}
\newcommand*\SchCal{{\sc stab}\xspace}
\newcommand*\substitution[1]{[#1]}
\newcommand*\emptiness{\emph{Emptiness}\xspace}
\newcommand*\instantiaterule{\emph{Unfolding}\xspace}
\newcommand*\algebraic{\emph{Algebraic simplification}\xspace}
\newcommand*\propsimpl{\emph{Expansion}\xspace}
\newcommand*\propsplit{\emph{Propositional splitting}\xspace}
\newcommand*\constraintsplit{\emph{Constraint splitting}\xspace}
\newcommand*\intervalise{\emph{Interval splitting}\xspace}
\newcommand*\integervars{\mathcal{IV}}
\newcommand*\linear{linear expression\xspace}
\newcommand*\linears{linear expressions\xspace}
\newcommand*\constraint{linear constraint\xspace}
\newcommand*\constraints{linear constraints\xspace}
\newcommand*\schemaOr[3]{\ensuremath{\bigvee_{{#1}|{#2}} #3}}
\newcommand*\schemaAnd[3]{\ensuremath{\bigwedge_{{#1}|{#2}} #3}}
\newcommand*\schemaOp[3]{\ensuremath{\bigOp_{{#1}|{#2}} #3}}
\newcommand*\schemaOpbis[3]{\ensuremath{\bigOpbis_{{#1}|{#2}} #3}}
\newcommand*\schematicOr[4]{\ensuremath{\bigvee_{{#1}={#2}}^{#3} #4}}
\newcommand*\schematicAnd[4]{\ensuremath{\bigwedge_{{#1}={#2}}^{#3} #4}}
\newcommand*\schematicOp[4]{\ensuremath{\bigOp_{{#1}={#2}}^{#3} #4}}
\newcommand*\schematicOpbis[4]{\ensuremath{\bigOpbis_{{#1}={#2}}^{#3} #4}}
\DeclareMathOperator*\bigOp{\Delta}
\DeclareMathOperator*\bigOpbis{\nabla}
\DeclareMathOperator\LinExpr{\mathcal{LE}}
\DeclareMathOperator\Op{\vartriangle}
\DeclareMathOperator\Opbis{\triangledown}
\let\And=\wedge
\let\Or=\vee
\let\Implies=\Rightarrow
\newcommand*\Emph[1]{\textbf{#1}}
\newcommand*\arithmetic{translated\xspace}
\newcommand*\Arithmetic{Translated\xspace}
\newcommand*\mxxbase[1]{\textstyle#1_{base}}
\newcommand*\minbase{\mxxbase{\min}}
\newcommand*\maxbase{\mxxbase{\max}}
\newcommand*\mxxit[1]{\textstyle#1_{ind}}
\newcommand*\minit{\mxxit{\min}}
\newcommand*\maxit{\mxxit{\max}}
\newcommand*{\isdef}{\stackrel{\mbox{\tiny def}}{=}}
\newcommand*{\tab}{{\cal T}} 
\newcommand*{\lit}{L}
\newcommand*{\node}{\alpha} 
\newcommand*{\nodeb}{\beta} 
\newcommand*{\nodesch}[2]{\sch_{#2}(#1)}
\newcommand*{\nodeschfull}[2]{\sch\literalset_{#2}(#1)}
\newcommand*{\nodepos}[2]{\asetofpos_{#2}(#1)}
\newcommand*{\nodeinterpsch}[2]{\litToSch{\literalset_{#2}(#1)}}
\newcommand*{\nodeinterp}[2]{\literalset_{#2}(#1)}
\newcommand*\aniteration{\schemaOp\var\cstr\pat}
\newcommand*\eqshift{\rightrightarrows}
\newcommand*\stdeqshift{\eqshift^\param}
\newcommand*\stdsubset[1]{\finitesubset{#1}\stdeqshift}
\newcommand*\deviation{\delta}
\newcommand*{\sch}{S}
\newcommand*{\setofschemata}{{\cal S}}
\newcommand*{\pour}{\text{ for }}
\newcommand*\finitesubset[2]{{#1}/{#2}}
\newcommand*\repr[1]{[#1]}
\newcommand*\alignedAnd{\times}
\newcommand*\bowl[1]{\mathfrak B_{#1}}
\newcommand*{\interp}{{\cal I}}
\newcommand*{\interpj}{{\cal J}}
\newcommand*\truevalue{$true$\xspace}
\newcommand*\falsevalue{$false$\xspace}
\newcommand*{\trueformula}{\top}
\newcommand*{\trueconstraint}{\top}
\newcommand*{\falseconstraint}{\bot}
\newcommand*{\falseformula}{\bot}
\newcommand*{\anenv}{\rho}
\newcommand*\maxshift{\max}
\newcommand*{\measureAlone}[2]{m^{#1}_{#2}}
\newcommand*{\measure}[3]{\measureAlone{#1}{#2}(#3)}
\newcommand*{\measurebranch}[3]{m_{#1}(#2,#3)}
\newcommand*{\elementary}{regularly nested\xspace}
\newcommand*{\Elementary}{Regularly Nested\xspace}
\newcommand*{\strat}{\mathfrak S}
\begin{document}
  \title{A Decidable Class of Nested Iterated Schemata\Longue{\\(extended version)}}
  \author{Vincent Aravantinos \and Ricardo Caferra \and Nicolas Peltier}
  \institute{Grenoble University (LIG/CNRS)}

  \maketitle
  \begin{abstract}
    Many problems can be specified by patterns of propositional formulae depending on a parameter,
    e.g. the specification of a circuit usually depends on the number of bits of its input.
    We define a logic whose formulae, called \emph{iterated schemata}, allow to express such patterns.
    Schemata extend propositional logic with
    indexed propositions, e.g. $\prop_\var$, $\prop_{\var+1}$, $\prop_1$ or $\prop_{\param}$,
    and with generalized connectives, e.g. $\stdAnd$ or $\stdOr$
    \Longue{(called \emph{iterations}) }%
    where $\param$ is an (unbound) integer variable called a \emph{parameter}.
    The expressive power of iterated schemata is strictly greater than propositional logic:
    it is even out of the scope of first-order logic.
    We define a proof procedure, called \SchDP,
    that can prove that a schema is satisfiable for at least one value of its parameter,
    in the spirit of the \DPLL procedure \cite{dpll}.
    \CourteLongue{But}{However the converse problem, i.e.}
    proving that a schema is unsatisfiable \emph{for every value of the parameter}, is undecidable \cite{tab09}
    so \SchDP does not terminate in general.
    Still, \Longue{we prove that }\SchDP terminates for schemata of a syntactic subclass called \emph{\elementary{}}.
    \Longue{%
      This is the first non trivial class for which \SchDP is proved to terminate.
      Furthermore the class of \elementary schemata is the first decidable class to allow nesting of iterations,
      i.e. to allow schemata of the form $\schematicAnd\var1\param(\schematicAnd\varj1\param\dots)$.
    }%
  \end{abstract}

  \section{Introduction}
    The specification of problems in propositional logic often leads to propositional formulae that depend on a parameter:
    the $n$-queens problem depends on $n$,
    the pigeonhole problem depends on the number of considered pigeons,
    a circuit may depend on the number of bits of its input,
    etc.
    Consider for instance a specification of a carry propagate adder circuit
    i.e. a circuit that takes as input two $\param$-bit vectors and computes their sum:
    \label{ex:adder}
    \CourteLongue
    {$Adder \eqdef \schematicAnd\var1\param Sum_\var\And\schematicAnd\var1\param Carry_\var\And \neg C_1$}
    {\[Adder \eqdef \schematicAnd\var1\param Sum_\var\And\schematicAnd\var1\param Carry_\var\And \neg C_1\]}
    \CourteLongue{
      where
      $\param$ is the number of bits of the input,
      $Sum_\var\eqdef S_\var \Equiv (A_\var\Xor B_\var)\Xor C_\var$,
      $Carry_\var \eqdef C_{\var+1}\Equiv (A_\var \And B_\var) \Or (B_\var\And C_\var) \Or (A_\var\And C_\var)$,
      $\Xor$ \CourteLongue{is}{denotes} the exclusive OR,
      $A_1,\dotsc,A_\param$ (resp. $B_1,\dotsc,B_\param$)
      are the bits of the first (resp. second) operand of the circuit,
      $S_1,\dotsc,S_\param$ is the output (the \textbf Sum),
      and $C_1,\dotsc,C_\param$ are the intermediate \textbf Carries.
    }
    {
      where:
      \[\begin{aligned}
        Sum_\var   &\eqdef S_\var \Equiv (A_\var\Xor B_\var)\Xor C_\var\\
        Carry_\var &\eqdef C_{\var+1}\Equiv (A_\var \And B_\var) \Or (B_\var\And C_\var) \Or (A_\var\And C_\var)\\
        \Xor &\text{ denotes the exclusive OR}\\
        A_1,\dotsc,A_\param &\text{ denotes the first operand of the circuit}\\
        B_1,\dotsc,B_\param &\text{ denotes the second operand of the circuit}\\
        S_1,\dotsc,S_\param &\text{ denotes the output (the \textbf Sum) of the circuit}\\
        C_1,\dotsc,C_\param &\text{ denotes the intermediate \textbf Carries of the circuit}\\
      \end{aligned}\]
    }%

    Presently, automated reasoning on such specifications requires that we give a concrete value to the parameter $\param$.
    Besides the obvious loss of generality, this instantiation hides the \emph{structure} of the initial problem which can be
    however a useful information when reasoning about such specifications:
    the structure of the proof can in many cases be \emph{guided} by the \CourteLongue{one}{structure} of the original specification.
    This gave us the idea to consider parameterized formulae at the object level and to design a logic to reason about them.

    Notice that schemata not only arise naturally from practical problems,
    but also have a deep conceptual interpretation, putting bridges between logic and computation.
    As well as first\Longue{ or higher}-order logic abstracts from propositional logic via \emph{quantification},
    schemata allow to abstract via \emph{computation}\Courte{, in a complementary way}.
    Indeed, a schema can be considered as a very specific algorithm taking as input a value for the 
    parameter and generating a propositional formula depending on this value.
    So a schema can be seen as an algorithm whose codomain is the set of propositional formulae
    (its domain is the set of integers\Longue{ in this presentation, but one can imagine any type of parameter}).
    \Longue{Thus schemata can be seen as a different -- and complementary -- way to abstract from propositional logic.}

    If we want to prove, e.g. that the implementation of a parameterized specification is correct,
    we need to prove that the corresponding schema is valid \emph{for every value of the parameter}.
    As usual we actually deal with unsatisfiability:
    we say that a schema is \emph{unsatisfiable} iff 
    \CourteLongue{it is (propositionally) unsatisfiable for every value of its parameter}
    {every propositional formula obtained by giving a value to the parameter is unsatisfiable}.
    In \cite{tab09} we introduced a first proof procedure for propositional schemata, called \SchCal.
    Notice that there is an easy way to systematically look for a counter-example%
    \Longue{ (i.e. find a value of the parameter for which the schema is satisfiable)}:
    we can just enumerate all the values and check the satisfiability of the corresponding formula with a SAT solver.
    However this naive procedure does not terminate when the schema is unsatisfiable.
    On the other hand, \SchCal not only terminates (and much more efficiently) when the schema is satisfiable,
    but it can also terminate when the schema is unsatisfiable.
    However it still \emph{does not terminate in general},
    as we proved that the (un)satisfiability problem is undecidable for schemata \cite{tab09}.
    \Longue{As a consequence there cannot exist a complete calculus for schemata
    (the set of unsatisfiable schemata is not recursively enumerable\Longue).}
    Still, we proved that \SchCal terminates for a particular class of schemata,
    called \emph{regular}, which is thus decidable%
    \Longue{ (this class contains the carry propagate adder described previously)}.

    An important restriction of the class of regular schemata is that it cannot contain nested iterations,
    e.g. $\schematicOr\var1\param{\schematicOr\varj1\param{\prop_\var}\Implies\propq_\varj}$.
    Nested iterations occur frequently in the specification of practical problems.
    We take the example of a binary multiplier which computes the product of 
    two bit vectors $A=(A_1,\dots,A_\param)$ and $B=(B_1,\dots,B_\param)$ using the following decomposition:
    \CourteLongue{$A.B=A.\sum_{\var=1}^\param B_\var.2^{\var-1}=\sum_{\var=1}^\param A.B_\var.2^{\var-1}$.}
    {\[A.B=A.\sum_{\var=1}^\param B_\var.2^{\var-1}=\sum_{\var=1}^\param A.B_\var.2^{\var-1}\]}
    The circuit is mainly an iterated sum:%
    \CourteLongue{
    $\text{``}S^1=0\text{''}
    \And
    \schematicAnd\var1\param (B_\var\Implies Add(S^\var,\text{``}A.2^{\var-1}\text{''},S^{\var+1}))\And(\neg B_\var\Implies (S^{\var+1}\Equiv S^\var))$
    }
    {
    \[\text{``}S^1=0\text{''}
    \And
    \schematicAnd\var1\param (B_\var\Implies Add(S^\var,A.2^{\var-1},S^{\var+1}))\And(\neg B_\var\Implies (S^{\var+1}\Equiv S^\var))\]
    }%
    where $S^\var$ denotes the $\var^{th}$ partial sum 
    (hence $S^\param$ denotes the final result)
    and $Add(x,y,z)$ denotes any schema specifying a circuit which computes the sum $z$ of $x$ and $y$
    (for instance the previous $Adder$ schema).
    We express %
    \Longue{``$S^1=0$'' by $\schematicAnd\var1\param\neg S^1_\var$, and }%
    ``$A.2^{\var-1}$'' by the bit vector $Sh^\var=(Sh^\var_1,\dots,Sh^\var_{2\param})$
    ($Sh$ for \emph{Sh}ift):%
    \CourteLongue
    {
      $(\schematicAnd\varj1\param{Sh^1_\varj\Equiv A_\varj})
      \And
      (\schematicAnd\varj\param{2\param}{\neg Sh^1_\varj})
      \And
      (\schematicAnd\var1\param{\neg Sh^\var_1\And\schematicAnd\varj1{2\param}{(Sh^\var_{\varj+1}\Equiv Sh^\var_\varj)}})$
    }
    {\[
    \left(\schematicAnd\varj1\param{Sh^1_\varj\Equiv A_\varj}\right)
    \And
    \left(\schematicAnd\varj\param{2\param}{\neg Sh^1_\varj}\right)
    \And
    \left(\schematicAnd\var1\param{\neg Sh^\var_1\And\schematicAnd\varj1{2\param}{(Sh^\var_{\varj+1}\Equiv Sh^\var_\varj)}}\right)\]
    }%
    \Courte{and ``$S^1=0$'' by $\schematicAnd\var1\param\neg S^1_\var$.}
    This schema obviously contains nested iterations%
    \footnote{However it does not belong to the decidable class presented in Section \ref{sec:complete}.}.

    \SchCal does not terminate in general on such specifications. 
    We introduce in this paper a new proof procedure, called \SchDP,
    which is an extension of the \DPLL procedure \cite{dpll}. 
    Extending \DPLL to  schemata is a complex task,
    because the formulae depend on an \emph{unbounded} number of propositional variables 
    (e.g. $\schematicOr\var1\param\prop_\var$ ``contains'' $\prop_1,\ldots,\prop_\param$).
    Furthermore, propagating the value given to an atom is not straightforward as in \DPLL
    (in $\schematicOr\var1\param\prop_\var$ if the value of e.g. $\prop_2$ is fixed then we must 
    propagate the assignment to $\prop_\var$ but only in the case where $\var=2$).
    The main advantage of \SchDP over \SchCal is that it can operate on subformulae occurring at a deep position in the schema%
    \Longue{ (in contrast to \SchCal, which only handles root formulae, by applying decomposition rules)}. 
    This feature \CourteLongue{is}{turns out to be} essential for handling nested iterations.
    \Longue{We prove that }\SchDP is sound, complete for satisfiability detection and terminates on a class of schemata,
    called \emph{\elementary{}}%
    \CourteLongue.{, which is obtained from regular schemata by removing the restriction on nested iterations.}

    \Courte{
    A first version of the \SchDP calculus was presented in \cite{sd09}.
    Together with a new termination result we give a thoroughly revised and simplified
    presentation of the proof procedure.
    Due to space restrictions, proofs are omitted or simply sketched
    (detailed proofs can be found in \cite{rapport09}).
    }%

    The paper is organized as follows.
    Section \ref{sec:schemata} defines the syntax and semantics of iterated schemata.
    Section \ref{sec:proof_procedure} presents the \SchDP proof procedure.
    Section \ref{\CourteLongue{sec:looping_detection}{sec:termination}}
    deals with the detection of cycles in proofs, which is the main tool allowing termination.
    Section \ref{sec:complete} presents the class of \emph{\elementary schemata}, for which \Longue{we show that }\SchDP terminates.
    \Longue{Termination is also proven for some simple derivatives of this class. }%
    Section \ref{sec:conclusion} concludes the paper and \CourteLongue{overviews}{briefly presents} related works.

  \section{Schemata of Propositional Formulae}
    \label{sec:schemata}
    \Longue{Consider the usual signature $\sig\eqdef\{0,s,+,-\}$
    and a countable set of \emph{integer variables} denoted by $\integervars$.}
    Terms on \CourteLongue{the signature $\{0,s,+,-\}$}{$\sig$}
    and \Courte{on a countable set of integer variables }$\integervars$
    are called \emph{linear expressions}, whose set is written $\LinExpr$.
    As usual we simply write $\intn$ for $s^\intn(0)$ ($\intn>0$)
    and $\intn.\expr$ for $\expr+\dotsb+\expr$ ($\intn$ times).
    Linear expressions are considered
    modulo the usual properties of the arithmetic symbols
    (e.g. $s(0)+s(s(0))-0$ is assumed to be the same as $s(s(s(0)))$ and written $3$).
    \Longue{Consider the structure $\lang\eqdef\tuple{\sig;=,<,>}$ of linear arithmetic
    (i.e. same as Presburger arithmetic except that negative integers are also considered). }%
    The set of first-order formulae \CourteLongue{built on $\LinExpr$ and $=,<,>$}{of $\lang$} is called the set of \emph{linear constraints} 
    (or in short \emph{constraints}), written $\LC$.
    \CourteLongue{I}{As usual, i}f $\cstr_1,\cstr_2\in\LC$, we write $\cstr_1\models\cstr_2$
    iff $\cstr_2$ is a logical consequence of $\cstr_1$.
    This relation is well known to be decidable 
    \Longue{using decision procedures for arithmetic without multiplication }see e.g. \cite{Coo72}.
    It is also well known that linear arithmetic admits quantifier elimination.
    \CourteLongue{C}{From now on, c}losed terms of $\sig$ (i.e. integers) are denoted by $\intn,\intm,\inti,\intj,\intk,\intl$,
    \linears by $\expr,\exprf$, 
    constraints by $\cstr,\cstr_1,\cstr_2,\dots$
    and integer variables by $\param,\var,\varj$
    \CourteLongue
      {to make clear the distinction between integer variables and expressions of the meta-language}
      {(we use this particular typesetting to clearly make the distinction with variables of the meta-language)}.

    To make technical details simpler, and w.l.o.g., only schemata in negative normal form (n.n.f.) are considered.
    \CourteLongue{A}{We say that a} linear constraint \emph{encloses} a variable $\var$ iff
    there exist $\expr_1,\expr_2\in\LinExpr$ s.t. $\var$ does not occur in $\expr_1,\expr_2$
    and $\cstr\models\expr_1\leq\var\And\var\leq\expr_2$.
    \begin{definition}[Schemata]
      \label{def:schemata}
      \label{def:patterns}
      For every $\numk\in\N$, let $\asetofprops_k$ be a set of symbols.
      The set $\Patt$ of \emph{formula patterns} (or, for short, \emph{patterns}) is the smallest set s.t.
      \begin{itemize}
        \item $\trueformula,\falseformula\in\Patt$
        \item If $\numk\in\N$, $\prop\in\asetofprops_\numk$ and $\expr_1,\dotsc,\expr_\numk\in\LinExpr$
          then $\prop_{\expr_1,\dotsc,\expr_\numk}\in\Patt$ and $\neg\prop_{\expr_1,\dotsc,\expr_\numk}\in\Patt$.
        \item If $\pat_1,\pat_2\in\Patt$ 
          then $\pat_1\Or\pat_2\in\Patt$ and $\pat_1\And\pat_2\in\Patt$.
        \item If $\pat\in\Patt$, $\var\in\integervars$, $\cstr\in\LC$ and $\cstr$ encloses $\var$
          then $\schemaAnd\var\cstr\pat\in\Patt$ and $\schemaOr\var\cstr\pat\in\Patt$.
      \end{itemize}
      A \emph{schema} $\sch$ is a pair (written as a conjunction) $\pat\And\cstr$,
      where $\pat$ is a pattern and $\cstr$ is a constraint.
      $\cstr$ is called the \emph{constraint} of $\sch$, written $\schcstr\sch$.
      $\pat$ is called its \emph{pattern}, written $\schpat\sch$.
    \end{definition}
    The first three items \CourteLongue{differ}{define a language that differs} from propositional logic 
    only in \CourteLongue{the}{its} atoms which we call \emph{indexed propositions}
    ($\expr_1,\dotsc,\expr_\numk$ are called \emph{indices}).
    The real novel part is the last item.
    Patterns of the form $\schemaAnd\var\cstr\pat$ or $\schemaOr\var\cstr\pat$ are called \emph{iterations}.
    $\cstr$ is called the \emph{domain} of the iteration.
    In \cite{tab09} only domains of the form $\expr_1\leq\var\And\var\leq\expr_2$ were handled,
    but as we shall see in Section \ref{sec:proof_procedure},
    more general classes of constraints are required to define the \SchDP procedure.
    If $\cstr$ is unsatisfiable then the iteration is \emph{empty}.
    Any occurrence of $\var$ in $\pat$ is \emph{bound} by the iteration.
    A variable occurrence which is not bound is \emph{free}.
    A variable which has free occurrences in a pattern is a \emph{parameter} of the pattern.
    A pattern which is just an indexed proposition $\prop_{\expr_1,\dotsc,\expr_\numk}$ is called an \emph{\Schatom{}}.
    An \Schatom or the negation of an \Schatom is called a \emph{\Schlit{}}.

    In \cite{tab09} and \cite{sd09} a schema was just a pattern,
    however constraints appear so often that it is more convenient to integrate them to the definition of schema.
    Informally, a pattern gives a ``skeleton'' with ``holes''
    and the constraint specifies how the holes can be filled
    (this choice fits the abstract definition of schema in \cite{COR06}).
    \Longue{
    This new definition can be emulated with the definition of \cite{sd09} 
    (as one can see from the upcoming semantics $\bigOr_\cstr\trueformula$ is equivalent to $\cstr$). }%
    In the following we assume w.l.o.g. 
    that $\schcstr\sch$ entails $\param_1\geq0\And\dotsb\And\param_\numk\geq0$ 
    where $\param_1,\dots,\param_\numk$ are the parameters of $\schpat\sch$.
    \begin{example}
      \label{ex:pattern}
      \CourteLongue
      {$\sch\eqdef
      \prop_1\And\bigAnd_{1\leq\var\And\var\leq\param}(\propq_\var\And\bigOr_{1\leq\varj\leq\param+1\And\var\neq\varj}
      \neg\prop_\param\Or\prop_{\varj+1})\StCstr\param\geq1$
      is a schema.
      }
      {$\sch$ is a schema:
      \[\sch\eqdef
      \prop_1\And\bigAnd_{1\leq\var\And\var\leq\param}(\propq_\var\And\bigOr_{1\leq\varj\leq\param+1\And\var\neq\varj}
      \neg\prop_\param\Or\prop_{\varj+1})\StCstr\param\geq1\]}
      $\prop_1$, $\propq_\var$, $\prop_\param$ and $\prop_{\varj+1}$ are indexed propositions.
      \CourteLongue
      {$\bigOr_{1\leq\varj\leq\param+1\And\var\neq\varj}\neg\prop_\var\Or\prop_{\var+1}$
      and
      $\bigAnd_{1\leq\var\And\var\leq\param} (\propq_\varj\And\bigOr_{1\leq\varj\leq\param+1\And\var\neq\varj}
      \neg\prop_\var\Or\prop_{\var+1})$
      are the only iterations, of domains $1\leq\varj\leq\param+1\And\var\neq\varj$ and $1\leq\var\leq\param$.}
      {The only iterations of $\sch$ are:
      \[\bigOr_{1\leq\varj\leq\param+1\And\var\neq\varj}\neg\prop_\var\Or\prop_{\var+1}\]
      and
      \[\bigAnd_{1\leq\var\And\var\leq\param} (\propq_\varj\And\bigOr_{1\leq\varj\leq\param+1\And\var\neq\varj}
      \neg\prop_\var\Or\prop_{\var+1})\]
      Their respective domains are
      $1\leq\varj\leq\param+1\And\var\neq\varj$
      and
      $1\leq\var\leq\param$.}
      $\param$ is the only parameter of $\sch$.
      \CourteLongue
      {Finally $\schpat\sch$ is $\prop_1\And\bigAnd_{1\leq\var\And\var\leq\param}(\propq_\var\And\bigOr_{1\leq\varj\leq\param+1\And\var\neq\varj}
      \neg\prop_\param\Or\prop_{\varj+1})$
      and $\schcstr\sch$ is $\param\geq1$.}
      {Finally:\[\schpat\sch=\prop_1\And\bigAnd_{1\leq\var\And\var\leq\param}(\propq_\var\And\bigOr_{1\leq\varj\leq\param+1\And\var\neq\varj}
      \neg\prop_\param\Or\prop_{\varj+1})\]
      and $\schcstr\sch=\param\geq1$.}
    \end{example}
    Schemata are denoted by $\sch$, $\sch_1$, $\sch_2\dots$,
    parameters by $\param,\param_1,\param_2\dots$,
    bound variables by $\var,\varj$.
    $\schemaOp\var\cstr\sch$ and $\schemaOpbis\var\cstr\sch$ denote generic iterations
    (i.e. $\schemaOr\var\cstr\sch$ or $\schemaAnd\var\cstr\sch$),
    $\Op$ and $\Opbis$ denote generic binary connectives (\Longue{i.e. }$\Or$ or $\And$),
    \Longue{finally }$\schematicOp\var{\expr_1}{\expr_2}\sch$ denotes $\schemaOp\var{\expr_1\leq\var\And\var\leq\expr_2}\sch$.

    Let $\sch$ be a schema
    and $\schemaOp{\var_1}{\cstr_1}{\sch_1}$, \dots, $\schemaOp{\var_\numk}{\cstr_\numk}{\sch_\numk}$ 
    be all the iterations occurring in $\sch$.
    Then $\schcstr\sch\And\cstr_1\And\dots\And\cstr_\numk$ is called the \emph{constraint context}
    of $\sch$, written $\context(\sch)$.
    Notice that $\context(\sch)$ loses the information on the \emph{binding positions} of variables.
    This can be annoying if a variable name is bound by two different iterations or if it is both bound and free in the schema.
    So we assume that all schemata are such that this situation does not hold%
    \footnote{The proof system defined in Section \ref{sec:proof_procedure}
    preserves this property, except for the rule \emptiness which duplicates an iteration;
    but we may safely assume that the variables of one of the duplicated iterations 
    are renamed so that the\Longue{ desired} property is fulfilled.}.

    \emph{Substitutions} on integer variables map integer variables to linear arithmetic expressions.
    We write $\substitution{\expr_1/\var_1,\ldots,\expr_\numk/\var_\numk}$ 
    for the substitution mapping $\var_1,\ldots,\var_\numk$ to $\expr_1,\ldots,\expr_\numk$ respectively.
    The application of a substitution $\asubstitution$ to an arithmetic expression $\expr$,
    written $\expr\asubstitution$, is defined as usual.
    Substitution application is naturally extended to schemata
    (notice that bound variables are not replaced).
    A substitution is \emph{ground} iff it maps integer variables to integers (i.e. ground arithmetic expressions).
    An \emph{environment} $\anenv$ of a schema $\sch$ is a ground substitution
    mapping all parameters of $\sch$ and such that $\schcstr\sch\anenv$ is true.
    \begin{definition}[Propositional Realization]
      \label{def:realization}
      Let $\pat$ be a pattern and $\anenv$ a ground substitution.
      The propositional formula $\envapp\pat\anenv$ is defined as follows:
      \begin{itemize}
        \item $\envapp{\prop_{\expr_1,\dotsc,\expr_\numk}}\anenv\eqdef\prop_{\expr_1\anenv,\dotsc,\expr_\numk\anenv}$,
          $\envapp{\neg\prop_{\expr_1,\dotsc,\expr_\numk}}\anenv\eqdef\neg\prop_{\expr_1\anenv,\dotsc,\expr_\numk\anenv}$,
        \item $\envapp{\trueformula}\anenv\eqdef\trueformula$, $\envapp{\falseformula}\anenv\eqdef\falseformula$,
          $\envapp{\pat_1\And\pat_2}\anenv\eqdef\envapp{\pat_1}\anenv\And\envapp{\pat_2}\anenv$, 
          $\envapp{\pat_1\Or\pat_2}\anenv\eqdef\envapp{\pat_1}\anenv\Or\envapp{\pat_2}\anenv$
        \item $\envapp{\schemaOr\var\cstr\pat}\anenv\eqdef
          \CourteLongue
          {\bigOr\left\{
          \envapp{\pat\substitution{\inti/\var}}{\anenv\cup\substitution{\inti/\var}}
          \middle|
          \inti\in\Z\text{ s.t. }\cstr\substitution{\inti/\var}\anenv\text{ is valid}\right\}}
          {\displaystyle\bigOr_{\inti\in\Z\text{ s.t. }\cstr\substitution{\inti/\var}\anenv\text{ is valid}}
          \envapp{\pat\substitution{\inti/\var}}{\anenv\cup\substitution{\inti/\var}}}
          $
        \item $\envapp{\schemaAnd\var\cstr\pat}\anenv\eqdef
          \CourteLongue
          {\bigAnd\left\{\envapp{\pat\substitution{\inti/\var}}{\anenv\cup\substitution{\inti/\var}}\middle|
          \inti\in\Z\text{ s.t. }\cstr\substitution{\inti/\var}\anenv\text{ is valid}\right\}}
          {\displaystyle\bigAnd_{\inti\in\Z\text{ s.t. }\cstr\substitution{\inti/\var}\anenv\text{ is valid}}
          \envapp{\pat\substitution{\inti/\var}}{\anenv\cup\substitution{\inti/\var}}}
          $
      \end{itemize}
      When $\anenv$ is an environment of a schema $\sch$,
      we define $\envapp\sch\anenv$ as $\envapp{\schpat\sch}\anenv$.
      $\envapp\sch\anenv$ is called a \emph{propositional realization} of $\sch$.
    \end{definition}
    Notice that $\trueformula, \falseformula, \Or, \And, \neg$ on the right-hand members 
    of equations have their standard \emph{propositional} meanings.
    $\bigOr$ and $\bigAnd$ on the right-hand members are meta-operators 
    denoting respectively the \emph{propositional} formulae $\dotsb\Or\dotsb\Or\dotsb$ and $\dotsb\And\dotsb\And\dotsb$ 
    or $\falseformula$ and $\trueformula$ when the \CourteLongue{sets are empty}{conditions are not verified}.
    On the contrary all those symbols on the left-hand members are \emph{pattern} connectives.

    We now make precise the semantics outlined in the introduction.
    Propositional logic semantics are defined as usual.
    A \emph{propositional interpretation} \Longue{of a (propositional) formula $\aformula$ }is a function mapping every
    propositional variable \Longue{of $\aformula$ }to a truth value \truevalue or \falsevalue.
    \begin{definition}[Semantics]
      \label{def:semantics}
      Let $\sch$ be a schema.
      An \emph{interpretation $\interp$ of the schemata language} is the pair of
      an environment $\interpenv\interp$ of $\sch$
      and a propositional interpretation $\interpprop\interp$ of $\envapp\sch{\interpenv\interp}$.
      A schema $\sch$ is \emph{true} in $\interp$ iff 
      $\envapp\sch{\interpenv\interp}$ is true in $\interpprop\interp$,
      in which case $\interp$ is a \emph{model} of $\sch$.
      $\sch$ is \emph{satisfiable} iff it has a model.
    \end{definition}
    Notice that an empty iteration $\schemaOr\var\cstr\pat$ (resp. $\schemaAnd\var\cstr\pat$) is
    \CourteLongue{equivalent to $\falseformula$ (resp. $\trueformula$)}{always false (resp. true)}.%
    \begin{example}
      \label{ex:semantics}
      \CourteLongue%
      {Let $\sch\eqdef\prop_1\And\schematicAnd\var1\param(\prop_\var\Implies\prop_{\var+1})\And\neg\prop_{\param+1}\And\param\geq0$
      (as usual, $\sch_1\Implies\sch_2$ is a shorthand for $\neg\sch_1\Or\sch_2$).
      Then 
      $\envapp\sch{\param\mapsto0}=\prop_1\And\neg\prop_1$,
      $\envapp\sch{\param\mapsto1}=\prop_1\And(\prop_1\Implies\prop_2)\And\neg\prop_2$,
      $\envapp\sch{\param\mapsto2}=\prop_1\And(\prop_1\Implies\prop_2)\And(\prop_2\Implies\prop_3)\And\neg\prop_3$,
      etc.
      $\sch$ is clearly unsatisfiable.}%
      {Consider the following schema:
      \[\sch\eqdef\prop_1\And\schematicAnd\var1\param(\prop_\var\Implies\prop_{\var+1})\And\neg\prop_{\param+1}\And\param\geq0\]
      (as usual, $\sch_1\Implies\sch_2$ is a shorthand for $\neg\sch_1\Or\sch_2$).
      Then 
      \[\begin{aligned}
      \envapp\sch{\param\mapsto0}&=\prop_1\And\neg\prop_1\\
      \envapp\sch{\param\mapsto1}&=\prop_1\And(\prop_1\Implies\prop_2)\And\neg\prop_2\\
      \envapp\sch{\param\mapsto2}&=\prop_1\And(\prop_1\Implies\prop_2)\And(\prop_2\Implies\prop_3)\And\neg\prop_3\\
      \text{etc.}\\
      \end{aligned}\]
      $\sch$ is clearly unsatisfiable.
      Notice that $\param\mapsto-\numk$ is \emph{not an environment} of $\sch$ for any $\numk>0$.}%
    \end{example}
    The set of satisfiable schemata is recursively enumerable but not recursive \cite{tab09}.
    Hence there cannot be a refutationally complete proof procedure for schemata.
    \Longue{Notice that the semantics are different from the ones
    in \cite{tab09} and \cite{sd09} but easily seen to be equivalent.}%

    The next definitions will be useful in the definition of \SchDP.
    Let $\aformula$ be a propositional formula and $\lit$ a (propositional) literal.
    We say that $\lit$ occurs \emph{positively} in $\aformula$, written $\lit\litIn\aformula$,
    iff there is an occurrence of $\lit$ in $\aformula$ which is not in the scope of a negation.
    As we consider formulae in n.n.f., a negative literal occurs positively in $\aformula$ iff it simply occurs in $\aformula$.
    \begin{definition}
      \label{def:alwaysoccur}
      Let $\sch$ be a schema and $\lit$ a \Schlit
      s.t. the parameters of $\lit$ are parameters of $\sch$.
      \Longue\par
      We write $\lit\alwaysoccur\sch$ iff for every environment $\anenv$ of $\sch$,
      $\envapp\lit\anenv\litIn\envapp\sch\anenv$.
      \Longue\par
      We write $\lit\maybelong\sch$ iff there is an environment $\anenv$ of $\sch$
      s.t. $\envapp\lit\anenv\litIn\envapp\sch\anenv$.
    \end{definition}
    \begin{example}
      \label{ex:alwaysoccur}
      Consider $\sch$ as in Example \ref{ex:semantics}.
      We have $\prop_1\alwaysoccur\sch$, $\prop_{\param+1}\alwaysoccur\sch$, $\prop_2\not\alwaysoccur\sch$.
      However $\prop_2\maybelong\sch$
      and $\prop_2\alwaysoccur(\sch\And\param\geq1)$.
      Finally $\prop_0\neverbelong\sch$
      and $\prop_{\param+2}\neverbelong\sch$.
      \Longue{Notice that $\neg\prop_1\maybelong\sch$ as 
      $\sch_1\Implies\sch_2$ is a shorthand for $\neg\sch_1\Or\sch_2$.}%
    \end{example}
    \newcommand*\occurFormula{\aformula}%
    Suppose $\lit$ has the form $\prop_{\expr_1,\dotsc,\expr_\numk}$ (resp. $\neg\prop_{\expr_1,\dotsc,\expr_\numk}$).
    For a literal $\lit'\litIn\sch$ of indices $\exprf_1,\dotsc,\exprf_\numk$,
    $\occurFormula_\lit(\lit')$ denotes the formula%
    \CourteLongue
    { $\exists\var_1\dots\var_\numn(\cstr_{\var_1}\And\dotsb\And\cstr_{\var_\numn}\And\expr_1=\exprf_1\And\dotsb\And\expr_\numk=\exprf_\numk)$}%
    {:\[\exists\var_1\dots\var_\numn(\cstr_{\var_1}\And\dotsb\And\cstr_{\var_\numn}\And\expr_1=\exprf_1\And\dotsb\And\expr_\numk=\exprf_\numk)\]}
    where $\var_1,\dots,\var_\numn$ are all the bound variables of $\sch$ occurring in $\exprf_1,\dots,\exprf_\numk$
    and $\cstr_{\var_1},\dotsc,\cstr_{\var_\numn}$ are the domains of the iterations binding $\var_1,\dots,\var_\numn$.
    \Longue{Then }$\occurFormula_\lit(\sch)$ denotes
    \CourteLongue%
    {$\bigvee\{\occurFormula_\lit(\prop_{\exprf_1,\dotsc,\exprf_\numk})\mid\prop_{\exprf_1,\dotsc,\exprf_\numk}\litIn\sch\}$
    (resp. $\bigvee\{\occurFormula_\lit(\neg\prop_{\exprf_1,\dotsc,\exprf_\numk})\mid\neg\prop_{\exprf_1,\dotsc,\exprf_\numk}\litIn\sch\}$).}
    {the following formula:
    \[\bigvee\{\occurFormula_\lit({\prop_{\exprf_1,\dotsc,\exprf_\numk}})\mid\prop_{\exprf_1,\dotsc,\exprf_\numk}\litIn\sch\}
    \quad(\text{resp. }
    \bigvee\{\occurFormula_\lit({\neg\prop_{\exprf_1,\dotsc,\exprf_\numk}})\mid\neg\prop_{\exprf_1,\dotsc,\exprf_\numk}\litIn\sch\})\]}
    \begin{proposition}
      \label{thm:always_occur_decidable}
      $\lit\alwaysoccur\sch$ 
      iff $\forall\param_1,\dots,\param_\numl(\schcstr\sch\Implies\occurFormula_\lit(\sch))$ is valid,
      where $\param_1\dots\param_\numl$ are all the parameters of $\sch$.
      $\lit\maybelong\sch$
      iff $\exists\param_1,\dots,\param_\numl(\schcstr\sch\And\occurFormula_\lit(\sch))$ is valid.
    \end{proposition}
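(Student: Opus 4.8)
The plan is to derive both equivalences from a single semantic characterisation, proved by induction on the pattern, stating that a \Schlit{} occurs positively in a propositional realisation exactly when its arithmetic occurrence formula holds under the corresponding environment. Concretely, I would first establish the following key lemma: for every environment $\anenv$ of $\sch$,
\[
  \envapp\lit\anenv\litIn\envapp\sch\anenv
  \quad\text{iff}\quad
  \occurFormula_\lit(\sch)\ \text{is true under}\ \anenv
\]
where the right-hand side is evaluated in the standard model of $\lang$. Note that the free variables of $\occurFormula_\lit(\sch)$ are among the parameters of $\sch$ (those of $\lit$ being parameters of $\sch$ by hypothesis, and the bound variables being existentially quantified), so $\occurFormula_\lit(\sch)\anenv$ is a closed arithmetic statement, either true or false.

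Granting the lemma, the proposition is immediate. By Definition \ref{def:alwaysoccur}, $\lit\alwaysoccur\sch$ holds iff $\envapp\lit\anenv\litIn\envapp\sch\anenv$ for every environment $\anenv$, i.e. (by the lemma) iff $\occurFormula_\lit(\sch)$ is true under every $\anenv$. Since an environment is precisely a ground assignment of $\param_1,\dots,\param_\numl$ making $\schcstr\sch$ true, this is the validity of $\forall\param_1\dots\param_\numl(\schcstr\sch\Implies\occurFormula_\lit(\sch))$. Dually, $\lit\maybelong\sch$ holds iff some environment satisfies $\occurFormula_\lit(\sch)$, i.e. iff $\exists\param_1\dots\param_\numl(\schcstr\sch\And\occurFormula_\lit(\sch))$ is valid.

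It remains to prove the lemma, which I would do by structural induction on the pattern, generalising the statement so that $\anenv$ ranges over ground assignments of \emph{all} free variables of the current subpattern and $\occurFormula_\lit(-)$ is the disjunction associated with that subpattern. The cases $\trueformula,\falseformula$ are vacuous (no literal occurs, the disjunction is empty hence false). For a literal $\lit'$, the realisation $\envapp{\lit'}\anenv$ equals $\envapp\lit\anenv$ iff $\lit'$ has the same symbol and polarity as $\lit$ and its indices agree with those of $\lit$ under $\anenv$, which is exactly the quantifier-free conjunction of index equalities evaluated under $\anenv$. The binary cases are routine: by Definition \ref{def:realization} realisation commutes with $\And,\Or$, and in n.n.f.\ positive occurrence distributes over both connectives, so the claim for $\pat_1\Op\pat_2$ reduces to the disjunction of the claims for $\pat_1$ and $\pat_2$, matching the disjunctive shape of $\occurFormula_\lit$. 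The iteration case $\schemaOp\var\cstr{\pat'}$ is the crux: $\envapp{\schemaOp\var\cstr{\pat'}}\anenv$ is the $\Op$-combination of the $\envapp{\pat'\substitution{\inti/\var}}{\anenv\cup\substitution{\inti/\var}}$ over all $\inti\in\Z$ with $\cstr\substitution{\inti/\var}\anenv$ valid, so positive occurrence in it holds iff there is such an $\inti$ with $\envapp\lit\anenv\litIn\envapp{\pat'\substitution{\inti/\var}}{\anenv\cup\substitution{\inti/\var}}$. Using the substitution identity $\envapp{\pat'\substitution{\inti/\var}}{\anenv\cup\substitution{\inti/\var}}=\envapp{\pat'}{\anenv\cup\substitution{\inti/\var}}$ and the induction hypothesis for $\pat'$ under $\anenv\cup\substitution{\inti/\var}$, this becomes: there exists $\inti$ making both $\cstr$ and $\occurFormula_\lit(\pat')$ true under $\anenv\cup\substitution{\inti/\var}$, i.e.\ $\bigl(\exists\var(\cstr\And\occurFormula_\lit(\pat'))\bigr)$ holds under $\anenv$. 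Distributing $\exists\var$ and the conjunct $\cstr$ over the disjunction and merging with the existentials already present yields $\occurFormula_\lit(\schemaOp\var\cstr{\pat'})$, closing the induction.

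The main obstacle is precisely this last step, and specifically the bookkeeping of domains: the meta-level quantification over valid instantiations $\inti$ must be matched with the object-level $\exists\var$ \emph{together with} the domain $\cstr$, and this must be carried out for every iteration enclosing the occurrence of $\lit'$, including those whose bound variable does not occur in the indices of $\lit'$. Indeed, if such an iteration is empty then $\lit'$ vanishes from the realisation, so its domain must still be conjoined inside the occurrence formula; the statement is correct exactly when $\occurFormula_\lit(\lit')$ carries the domains of all iterations enclosing $\lit'$ (not merely those binding variables that appear in its indices). Once this invariant is maintained through the induction, the remaining work is the routine pushing of existential quantifiers through disjunctions, and the resulting arithmetic statements are effectively checkable by the decidability of validity in $\lang$ recalled above.
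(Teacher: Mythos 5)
Your proof takes essentially the same route as the paper's: fix an environment $\anenv$ of $\sch$, show that $\envapp\lit\anenv\litIn\envapp\sch\anenv$ holds exactly when the occurrence formula is true under $\anenv$ (the paper argues this by induction on the number of nested iterations, you by structural induction on the pattern, which amounts to the same induction), then let $\anenv$ range over all environments to turn this pointwise equivalence into the two stated validity criteria.

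But your closing observation is more than bookkeeping, and it is the one place where you genuinely depart from the paper --- correctly. The paper defines $\occurFormula_\lit(\lit')$ by quantifying over, and conjoining the domains of, only those iterations whose bound variables occur in the \emph{indices} of $\lit'$, and its proof sketch keeps exactly that restriction; your induction instead conjoins the domain of \emph{every} iteration enclosing $\lit'$, and that is what makes the key lemma true. With the paper's definition, read literally, the proposition fails: take $\sch\eqdef\schematicAnd\var1\param{\prop_1}\StCstr\param\geq0$ and $\lit=\prop_1$. No bound variable occurs in the index of the inner $\prop_1$, so the paper's $\occurFormula_\lit(\sch)$ is just $1=1$ and the criterion yields $\prop_1\alwaysoccur\sch$, yet $\envapp\sch{\param\mapsto0}=\trueformula$ contains no occurrence of $\prop_1$. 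Your formula, $\exists\var(1\leq\var\And\var\leq\param\And1=1)$, gives the right answer. Nor is the discrepancy marginal for the calculus: the iterations introduced by \propsimpl are exactly of this shape (their bound variable is fresh and absent from the wrapped literal's index), so under the paper's definition the disunification constraint they carry would be invisible to $\maybelong$, and purity detection (which relies on $\lit^c\neverbelong\sch$) would misbehave on schemata produced by \SchDP. In short: same overall argument, but your handling of the iteration case makes explicit, and repairs, a point that the paper's definition and sketch gloss over.
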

    \Longue{
    \begin{proof}(Sketch)
      Let $\anenv$ be an environment of $\sch$.
      Assume that $\lit$ has the form $\prop_{\expr_1,\dotsc,\expr_\numk}$
      (the case $\neg\prop_{\expr_1,\dotsc,\expr_\numk}$ is similar).
      From Definition \ref{def:realization}, it is easily seen (by induction on the number of nested iterations) that
      $\envapp\lit\anenv\litIn\envapp\sch\anenv$
      iff there is a literal $\prop_{\exprf_1,\dotsc,\exprf_\numk}\litIn\sch$
      s.t. $\lit\anenv=\prop_{\exprf_1,\dotsc,\exprf_\numk}(\anenv\cup\substitution{\numi_1/\var_1,\dotsc,\numi_\numn/\var_\numn})$
      where $\var_1,\dots,\var_\numn$ are all the bound variables occurring in $\exprf_1,\dots,\exprf_\numk$
      and $\numi_1,\dotsc,\numi_\numn$ are such that
      s.t. $\cstr_1(\anenv\cup\substitution{\numi_1/\var_1,\dotsc,\numi_\numn/\var_\numn})$, \dots,
      $\cstr_\numn(\anenv\cup\substitution{\numi_1/\var_1,\dotsc,\numi_\numn/\var_\numn})$ are valid.
      It is then obvious that
      $\envapp\lit\anenv\litIn\envapp\sch\anenv$
      iff
      $\occurFormula\anenv$ is valid.
      The result follows easily.
      \qed
    \end{proof}}
    \begin{example}
      Consider $\sch$ as \Longue{defined }in Example \ref{ex:semantics}.
      For any \Longue{expression }$\expr$,
      $\prop_\expr\alwaysoccur\sch$\Longue{ (resp. $\prop_\expr\maybelong\sch$)}
      iff
      $\forall\param(\param\geq0)\Implies[\expr=1\Or
      \exists\var(1\leq\var\And\var\leq\param\And\expr=\var)
      \Or\exists\var(1\leq\var\And\var\leq\param\And\expr=\var+1)
      \Or\expr=\param+1]$
      \Longue{(resp. $\exists\param\exists\var(\param\geq0)\And(1\leq\var\And\var\leq\param)\And(\expr=1\Or\expr=\var\Or\expr=\var+1\Or\expr=\param+1)$)}
      is valid.
    \end{example}
    Then, by decidability of linear arithmetic, both $\alwaysoccur$ and $\maybelong$ are decidable.
    Besides, it is easy to compute the set $\literalset(\sch)\eqdef\{\lit\mid\lit\alwaysoccur\sch\}$
    for a \Longue{given }schema $\sch$%
    \CourteLongue.{: one just take any propositional realization $\aformula$ of $\sch$,
    and check for every literal $\lit\litIn\aformula$ if $\lit\alwaysoccur\sch$.
    If yes then it belongs to $\literalset(\sch)$ otherwise it does not.
    It is enough to do this with only one propositional realization 
    as for any $\lit\in\literalset(\sch)$, we must have $\lit\litIn\aformula$ for \emph{every} propositional realization $\aformula$.}%

  \section{A Proof Procedure: \SchDP}
  \label{sec:proof_procedure}
  We provide now a set of (sound) deduction rules
  (in the spirit of the Davis-Putnam-Logemann-Loveland procedure for propositional logic \cite{dpll})
  complete w.r.t. satisfiability
  (we know that it is not possible to get refutational completeness).
  Compared to other proof procedures \cite{tab09}
  \SchDP allows to rewrite subformulae occurring at deep positions inside a schema
  --- in particular occurring in the scope of iterated connectives:
  this is crucial to handle nested iterations.
  \Longue{\subsection{Extension Rules}}%
    \SchDP is a tableaux-like procedure: 
    rules are given to construct a tree whose root is the formula that one wants to refute.
    The formula is refuted iff all the branches are contradictory.

    As usual with tableaux related methods,
    the aim of branching is to browse the possible interpretations of the schema.
    As a schema interpretation assigns a truth value to each atom and a number to each parameter,
    there are two branching rules:
    one for atoms, called \propsplit{} 
    (this rule assigns a value to propositional variables, as the splitting rule in \DPLL),
    and one for parameters, called \constraintsplit{}.
    However \constraintsplit does not give a value to the parameters, 
    but rather restricts their values by refining the constraint of the schema (i.e. $\schcstr\sch$),
    e.g. the parameter can be either greater or lower than a given integer,
    leading to two branches in the tableaux.
    Naturally, in order to analyze a schema, one has to investigate the contents of iterations.
    So a relevant constraint to use for the branching is the one that states the emptiness of some iteration.
    In the branch where the iteration is empty, we can replace it by its neutral element 
    (i.e. $\trueformula$ for $\bigAnd$ and $\falseformula$ for $\bigOr$),
    which is done by \constraintsplit
    (this may also entails the emptiness of some other iterations,
    and thus their replacement by their neutral elements too,
    this is handled by \algebraic).
    Then in the branch where the iteration is not empty, we can unfold the iteration:
    this is done by the \instantiaterule rule.

    Iterations might occur in the scope of other iterations.
    Thus their domains might depend on variables bound by the outer iterations.
    \constraintsplit is of no help in this case,
    indeed it makes a branching only according to the values of the \emph{parameter}:
    bound variables are out of its scope.
    Hence we define the rule \emptiness that can make a ``deep'' branching,
    i.e. a branching not in the tree, but in the schema itself:
    it ``separates'' an iteration into two distinct ones,
    depending on the constraint stating the emptiness of the inner iteration,
    e.g. $\schematicOr\var1\param \schematicOr\varj3\var \prop_\var \And\param \geq 2$
    is replaced by $\schematicOr\var3\param \schematicOr\varj3\var \prop_\var \Or \schematicOr\var12 \falseformula \And\param \geq 2$.
    \Longue{The reader can notice that \constraintsplit and \emptiness are very similar.
    It could be possible to merge both into only one rule e.g. by considering infinite iterations
    but this would complicate all the formalism for only a little gain in the proof system.
    Furthermore, \emptiness differs ``conceptually'' from \constraintsplit in the sense that
    its role is not to browse interpretations but only to analyse a formula.}%

    \constraintsplit strongly affects the application of \propsplit.
    Indeed \propsplit only applies on atoms occurring in \emph{\CourteLongue{all}{every}} instance\Courte{s} of the schema
    (\Longue{which is }formalized by Definition \ref{def:alwaysoccur}),
    and we saw in Example \ref{ex:alwaysoccur} that this depends on the constraint\Longue{ of the schema}.
    Once an atom \CourteLongue{$\atom$ is}{has been} given the value $\truevalue$ (resp. $\falsevalue$)
    we can \CourteLongue{replace}{substitute} it \CourteLongue{by}{with} $\trueformula$ (resp. $\falseformula$).
    \CourteLongue{But}{However} this is not as simple as in the propositional case as \CourteLongue{$\atom$}{this atom} may occur 
    in a realization of the schema without occurring in the schema itself
    (e.g. $\prop_1$ in $\schematicAnd\var1\param\prop_\var$ \explicitref),
    so we cannot just \CourteLongue{replace it by $\trueformula$}{substitute $\trueformula$ to it}.
    The simplification is performed by the rule \propsimpl
    which wraps the indexed propositions that are more general than the considered atom ($\prop_\var$ in \explicitref)
    with an iteration whose domain is a disunification constraint stating that the proposition is distinct from the \Longue{considered }atom
    (this gives for \explicitref: $\schematicAnd\var1\param{\schemaAnd\varj{\var\neq1\And\varj=0}{\prop_\var}}$).
    The introduced iteration is very specific because the bound variable always equals $0$
    (\Longue{actually }this variable is not used 
    \Longue{and does not even occur in the wrapped proposition }%
    but we assign it $0$ to satisfy the condition
    in Definition \ref{def:patterns} that it has to be enclosed by the domain).
    Whereas usual iterations shall be considered as ``for loops'',
    this \CourteLongue{one is}{iteration shall be considered as} an ``if then else''.
    It \Longue{all }makes sense when \emptiness or \constraintsplit is applied:
    \emph{if} the condition holds (i.e. if the wrapped \Longue{indexed }proposition differs from the atom) 
    \emph{then} the contents of the iteration hold (i.e. we keep the indexed proposition\Longue{ as is})
    \emph{else} the iteration is empty (i.e. we replace it by its neutral element).
    In \explicitref, \emptiness applies:
    \CourteLongue
    {$\schemaAnd\var{1\leq\var\leq\param\And\exists\varj(\var\neq1\And\varj=0)}{\schemaAnd\varj{\var\neq1\And\varj=0}{\prop_\var}}
      \And
      \schemaAnd\var{1\leq\var\leq\param\And\forall\varj(\var=1\Or\varj\neq0)}\trueformula$ }
    {\[\schemaAnd\var{1\leq\var\leq\param\And\exists\varj(\var\neq1\And\varj=0)}{\schemaAnd\varj{\var\neq1\And\varj=0}{\prop_\var}}
    \And
    \schemaAnd\var{1\leq\var\leq\param\And\forall\varj(\var=1\Or\varj\neq0)}\trueformula\]}%
    (of course the domains can be simplified to allow reader-friendly presentation\Longue{:
    $\schemaAnd\var{2\leq\var\leq\param}{\schemaAnd\varj{\var\neq1\And\varj=0}{\prop_\var}}
      \And
      \schemaAnd\var\falseconstraint\trueformula$}%
    ).
    Then \algebraic gives:
    \CourteLongue
    {$\schemaAnd\var{1\leq\var\leq\param\And\exists\varj(\var\neq1\And\varj=0)}{\prop_\var}$,}
    {\[\schemaAnd\var{1\leq\var\leq\param\And\exists\varj(\var\neq1\And\varj=0)}{\prop_\var}\]}
    i.e. $\schematicAnd\var2\param{\prop_\var}$, as expected.
    \CourteLongue{T}{All t}his process may seem cumbersome,
    but it is actually a uniform and powerful way of propagating constraints about nested iterations along the schema.
    \Longue{The alternative would be to consider different expansion rules depending on the fact that 
    $\exprf_1,\ldots,\exprf_\numk$ occur in an iteration or not, which would be rather tedious.}%

    Finally we may know that an iteration is empty without knowing
    which value of the bound variable satisfies the domain constraint%
    \CourteLongue.{ e.g. if a constraint, that we know not to be empty, contains $\expr\leq\var\And\exprf\leq\var$
    then how can we know which rank of $\expr$ or $\exprf$ can indeed be reached?}
    \CourteLongue{Then}{In such cases,} the \intervalise rule adds some constraints on the involved expressions to ensure this knowledge.

    We now define \SchDP formally.
    \Courte{\vspace*{-0.1cm}}
    \begin{definition}[Tableau]
      A \emph{tableau} is a tree $\tab$
      s.t. each node $\node$ in $\tab$ is labeled with a pair
      $(\nodesch\node\tab,\nodeinterp\node\tab)$
      containing a schema and a finite set of literals.
    \end{definition}
    \Courte{\vspace*{-0.1cm}}
    If $\node$ is the root of the tree then $\nodeinterp\node\tab=\emptyset$ and
    $\nodesch\node\tab$ is called the \emph{root schema}.
    The transitive closure of the child-parent relation is written $\childparent$.
    For a set of literals $\literalset$, $\litToSch\literalset$ denotes the pattern $\bigAnd_{\lit\in\literalset}\lit$.

    As usual a tableau is generated from another tableau by applying extension rules
    written $\frac\prem\conc$ (resp. $\frac\prem{\conc_1|\conc_2}$)
    where $\prem$ is the premise and $\conc$ (resp. $\conc_1,\conc_2$) the conclusion(s).
    Let $\node$ be a leaf of a tree $\tab$,
    if the label of $\node$ matches the premise
    then we can \emph{extend} the tableau by adding to $\node$
    a child (resp. two children) labeled with $\conc\asubstitution$ (resp. $\conc_1\asubstitution$ and $\conc_2\asubstitution$),
    where $\asubstitution$ is the matching substitution.
    A leaf $\node$ is \emph{closed} iff
    $\schpat{\nodesch\node\tab}$ is equal to $\falseformula$ or $\schcstr{\nodesch\node\tab}$ is unsatisfiable.
    
    When used in a premise, $\sch[\pat]$
    means that the schema $\pat$ occurs in $\sch$;
    then in a conclusion, $\sch[\pat']$ denotes
    $\sch$ in which $\pat$ has been substituted with $\pat'$.
    \Courte{\vspace*{-0.1cm}}
    \begin{definition}[\SchDP rules]
      \label{def:rules}
      \let\temparraystretch=\arraystretch%
      \renewcommand{\arraystretch}{1.2}%
      The \emph{extension rules} are:
      \Courte{\vspace*{-0.1cm}}
      \begin{itemize}
        \item \propsplit. 
          \Courte{\vspace*{-0.5cm}}
          \[
           \begin{array}{c|c}
             \multicolumn2c{(\sch,\literalset)}\\\hline
             (\sch,\literalset\cup \prop_{\expr_1,\dotsc,\expr_\numk})\ \ &\ (\sch,\literalset\cup\neg\prop_{\expr_1,\dotsc,\expr_\numk})\\
           \end{array}
           \Courte{\vspace*{-0.2cm}}
          \]
          if
          either $\prop_{\expr_1,\dots,\expr_\numk}\alwaysoccur\sch$
          or $\neg\prop_{\expr_1,\dots,\expr_\numk}\alwaysoccur\sch$,
          and neither $\prop_{\expr_1,\dots,\expr_\numk}\maybelong\litToSch\literalset\And\schcstr\sch$
          nor $\neg\prop_{\expr_1,\dots,\expr_\numk}\maybelong\litToSch\literalset\And\schcstr\sch$%
          .
          \vspace*{0.03cm}
        \item \constraintsplit.
          For $(\bigOp,\neutrOp)\in\{(\bigAnd,\trueformula),(\bigOr,\falseformula)\}$:
          \[\begin{array}{c|c}
              \multicolumn2c{(\sch[\cstrsplitlhm],\literalset)}\\\hline
              (\sch[\cstrsplitlhm]\And\exists\var\cstr,\literalset)
              \ \ & \
              (\sch[\neutrOp]\And\forall\var\neg\cstr,\literalset)
          \end{array}
          \Courte{\vspace*{-0.25cm}}
          \]
          if $\schcstr\sch\And\forall\var\neg\cstr$ is satisfiable
          and free variables of $\cstr$ other than $\var$ are parameters.

        \item Rewriting:
          \Courte{\vspace*{-0.3cm}}
            \[
              \begin{array}c
              (\sch_1,\literalset)\\\hline
              (\sch_2,\literalset)
              \end{array}
              \Courte{\vspace*{-0.2cm}}
            \]
            where $\schcstr{\sch_2}=\schcstr{\sch_1}$ and $\schpat{\sch_1}\to\schpat{\sch_2}$ by the following rewrite system:
            \begin{itemize}
              \item \algebraic. For every pattern $\pat$:
                \CourteLongue
                {\smallskip\par$\begin{aligned}
                    \neg\trueformula              & \to\falseformula & 
                    \pat\And\trueformula     & \to\pat     & \pat\And\falseformula & \to\falseformula & 
                    \textstyle\schemaAnd\var\cstr\trueformula & \to\trueformula  & \pat\And\pat     & \to\pat\\
                    \neg\falseformula             & \to\trueformula  & 
                    \pat\Or\trueformula      & \to\trueformula  & \pat\Or\falseformula  & \to\pat     & 
                    \textstyle\schemaOr\var\cstr\falseformula & \to\falseformula & 
                    \pat\Or \pat        & \to\pat\\
                    \multicolumn{6}{l}{\text{if $\context(\sch_1)\And \exists\var\cstr$ is unsatisfiable:}}
                    &\textstyle\schemaAnd \var\cstr\pat&\to\trueformula
                    &\textstyle\schemaOr \var\cstr\pat&\to\falseformula\\
                  \multicolumn{8}{l}{\text{if $\context(\sch_1)\Implies \exists\var\cstr$ is valid 
                  and $\pat$ does not contain $\var$:}}
                  & \textstyle\schemaOp\var\cstr\pat &\to\pat\\
                \end{aligned}$}
                {\[\begin{aligned}
                    \neg\trueformula              & \to\falseformula & 
                    \pat\And\trueformula     & \to\pat     & \pat\And\falseformula & \to\falseformula & 
                    \schemaAnd\var\cstr\trueformula & \to\trueformula  & \pat\And\pat     & \to\pat\\
                    \neg\falseformula             & \to\trueformula  & 
                    \pat\Or\trueformula      & \to\trueformula  & \pat\Or\falseformula  & \to\pat     & 
                    \schemaOr\var\cstr\falseformula & \to\falseformula & 
                    \pat\Or \pat        & \to\pat\\
                    \multicolumn{6}{l}{\text{if $\context(\sch_1)\And \exists\var\cstr$ is unsatisfiable:}}
                    &\schemaAnd \var\cstr\pat&\to\trueformula
                    &\schemaOr \var\cstr\pat&\to\falseformula\\
                  \multicolumn{8}{l}{\text{if $\context(\sch_1)\Implies \exists\var\cstr$ is valid 
                  and $\pat$ does not contain $\var$:}}
                  & \schemaOp\var\cstr\pat &\to\pat\\
                  \end{aligned}
                \]}%
              \item \instantiaterule.
                For $(\Op,\bigOp)\in\left\{(\And,\bigAnd),(\Or,\bigOr)\right\}$:
                \[\begin{aligned}
                  \schemaOp\var\cstr\pat&
                  \to\pat\substitution{\IRupbnd/\var}\Op\schemaOp\var{\cstr\And\var\neq\IRupbnd}\pat
                  \quad\text{ if }\instantiateRuleCondition\text{ is valid}
                  \end{aligned}
                  \Courte{\vspace*{-0.2cm}}
                \]
                $\IRupbnd$ can be chosen arbitrarily%
                \footnote{e.g. in Section \ref{sec:elem_term} we choose the maximal integer fulfilling the desired property.}.
              \item \emptiness.
                For $(\Op,\bigOp)\in\{(\And;\bigAnd),(\Or;\bigOr)\}$,
                $(\bigOpbis,\neutrOp)\in\{(\bigAnd;\trueformula),(\bigOr;\falseformula)\}$:
                \[
                  \schemaOp\var\cstr{(\pat[\emptinesslhm])} 
                  \to
                  \schemaOp\var{\cstr\And\exists\var'\cstr'}{(\pat[\emptinesslhm])} 
                  \Op
                  \schemaOp\var{\cstr\And\forall\var'\neg\cstr'}{(\pat[\neutrOp])}
                  \Courte{\vspace*{-0.2cm}}
                \]
                if $\context(\sch_1)\And\forall\var'\neg \cstr'$ is satisfiable and $\var$ occurs free in $\cstr'$.
              \item \propsimpl.
                \[\begin{aligned}
                  \prop_{\expr_1,\dotsc,\expr_\numk}&
                  \to\schemaAnd\var{(\expr_1\neq \exprf_1\Or\dotsb\Or \expr_\numk\neq \exprf_\numk)\And\var=0}{\prop_{\expr_1,\dotsc,\expr_\numk}}
                  &&\text{ if }\prop_{\exprf_1,\dotsc,\exprf_\numk}\in\literalset\\
                  \prop_{\expr_1,\dotsc,\expr_\numk}&
                  \to\schemaOr\var{(\expr_1\neq \exprf_1\Or\dotsb\Or \expr_\numk\neq \exprf_\numk)\And\var=0}{\prop_{\expr_1,\dotsc,\expr_\numk}}
                  &&\text{ if }\neg\prop_{\exprf_1,\dotsc,\exprf_\numk}\in\literalset\\
                  \end{aligned}
                  \Courte{\vspace*{-0.2cm}}
                \]
                if $\context(\sch_1)\And \expr_1=\exprf_1\And\dotsb\And \expr_\numk=\exprf_\numk$ is satisfiable.
                $\var$ is a fresh variable.
            \end{itemize}

        \item \intervalise.
          For $\intk,\intl\in\N$, $\bigOp\in\left\{\bigAnd,\bigOr\right\}$, $\lhd\in\left\{<,\leq,\geq,>\right\}$:
          \[
            \begin{array}{c|c}
              \multicolumn2c{(\sch[\schemaOp\var{\cstr\And\intk.\var\lhd \expr_1\And\intl.\var\lhd\expr_2}\pat],
              \literalset)}
              \\\hline
              (\sch[\schemaOp\var{\cstr\And\intk.\var\lhd\expr_1}\pat] \And\intl.\expr_1\!\lhd\intk.\expr_2,\literalset)
              \ \ &\ 
              (\sch[\schemaOp\var{\cstr\And\intl.\var\lhd\expr_2}\pat] \And 
              \intl. \expr_1\!\not{\!\!\lhd}\intk.\expr_2,\literalset)
              \end{array}
              \Courte{\vspace*{-0.2cm}}
           \]
           if every free variable of $\cstr$ is either $\var$ or a parameter,
           all variables of $\expr_1,\expr_2$ are parameters
           and $\numk>0$, $\numl>0$.
      \end{itemize}
      \renewcommand{\arraystretch}{\temparraystretch}
    \end{definition}

      \Courte{
        A \emph{derivation} is a (possibly infinite) sequence of tableaux $(\tab_\numi)_{\numi\in\interval}$
        s.t. $\interval$ is either $\N$ or $[0..\numk]$ for some $\numk\geq0$, and s.t. for all $\numi>0$,
        $\tab_\numi$ is obtained from $\tab_{\numi-1}$ by applying a rule.
        A derivation is \emph{fair} iff no leaf can be indefinitely ``freezed''
        i.e.
        either there is $\numi\in\interval$ s.t. $\tab_\numi$
        contains an irreducible, not closed, leaf or if for all $\numi\in\interval$ and every leaf $\node$ in
        $\tab_\numi$ there is $\numj\geq\numi$ s.t. a rule is applied on $\node$ in $\tab_\numj$.
        \begin{theorem}[Soundness and Completeness w.r.t. Satisfiability]
          \label{thm:soundness_and_completeness}
          Consider a fair derivation $(\tab_\numi)_{\numi\in\interval}$.
          $\tab_0$ is satisfiable iff there is $\numi\in\interval$ s.t. $\tab_\numi$ contains an irreducible and not closed leaf.
        \end{theorem}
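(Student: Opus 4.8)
The plan is to route both directions through one semantic \emph{invariant}. Call a leaf labelled $(\sch,\literalset)$ \emph{satisfiable} if some interpretation $\interp$ is a model of $\sch$ and makes $\envapp\lit{\interpenv\interp}$ true in $\interpprop\interp$ for every $\lit\in\literalset$, and call a tableau satisfiable if one of its leaves is; since $\tab_0$ has the single leaf $(\sch,\emptyset)$ this is just satisfiability of the root schema. The backbone is the \textbf{Invariant Lemma}: for each rule of Definition~\ref{def:rules}, the premise leaf is satisfiable iff at least one conclusion leaf is, and, more precisely, any model $\interp$ of the premise is a model of one conclusion \emph{with the same environment $\interpenv\interp$} (for a split, the conclusion selected by the truth value $\interp$ gives to the split literal, resp.\ by whether $\interpenv\interp$ satisfies $\exists\var\cstr$ or $\forall\var\neg\cstr$). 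Because a rule replaces a single leaf by its conclusions and leaves the rest of the tree untouched, the lemma yields that all the $\tab_\numi$ are equisatisfiable with $\tab_0$.

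First I would prove the Invariant Lemma rule by rule, a direct check against Definition~\ref{def:realization}. \propsplit is the tautology that an interpretation fixes the truth value of a literal. \constraintsplit and \emptiness rest on the partition of environments, resp.\ of the values of the outer bound variable, induced by $\exists\var\cstr$ versus $\forall\var\neg\cstr$, together with the remark after Definition~\ref{def:semantics} that an empty iteration equals its neutral element $\neutrOp$. The \algebraic rewrites are propositional equivalences valid in every environment, the conditional ones reading the emptiness/omission tests off $\context(\sch)$. \instantiaterule merely peels off the instance $\var=\IRupbnd$, which lies in the domain because $\context(\sch)\Implies\cstr\substitution{\IRupbnd/\var}$ is valid. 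The two slightly delicate rules are \propsimpl, whose soundness uses that the wrapped proposition coincides with a literal already recorded in $\literalset$ exactly on the instances annihilated by the introduced domain, and \intervalise, which needs the arithmetic fact that under $\intl.\expr_1\lhd\intk.\expr_2$ (resp.\ its negation) one of the two bounds $\intk.\var\lhd\expr_1$, $\intl.\var\lhd\expr_2$ subsumes the other, so dropping the redundant conjunct preserves the realization.

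The ($\Leftarrow$) direction then reduces to the \textbf{Model Lemma}: an irreducible, not closed leaf $(\sch,\literalset)$ is satisfiable. I would pick an environment $\anenv$ of $\sch$ (one exists because $\schcstr\sch$ is satisfiable) and take the interpretation $\interp$ with $\interpenv\interp=\anenv$ whose propositional part makes true the atoms forced true by the positive literals of $\literalset$ under $\anenv$, false those forced by negative ones, and true all the rest; consistency of this assignment follows from the side condition of \propsplit, which forbids recording an already decided literal. That $\envapp\sch\anenv$ is true under $\interpprop\interp$ is then shown by induction on $\schpat\sch$, and is exactly the \DPLL model-extraction argument transported to schemata: irreducibility w.r.t.\ \algebraic removes every $\trueformula,\falseformula$ and every empty iteration so that (with $\schpat\sch\neq\falseformula$) no conjunct is trivially false; irreducibility w.r.t.\ \propsplit, read through Definition~\ref{def:alwaysoccur} and Proposition~\ref{thm:always_occur_decidable}, forces every always-occurring atom to be decided by $\literalset$; and irreducibility w.r.t.\ \propsimpl forces every decided atom to have already been wrapped away, so that each literal still surviving in $\envapp\sch\anenv$ is either free or satisfied by $\literalset$. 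Combined with the Invariant Lemma read backwards along $\tab_\numi,\dots,\tab_0$, this gives $\tab_0$ satisfiable.

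For ($\Rightarrow$) I would fix a model $\interp$ of $\tab_0$ and, using the constructive form of the Invariant Lemma, follow the branch of leaves of which $\interp$ is a model: at each step $\interp$ selects a unique satisfied child, so the branch is well defined and all its leaves are satisfiable. If it is finite, its last leaf is irreducible and satisfiable, hence not closed (closed leaves being unsatisfiable), which is the desired conclusion; the whole point is thus to prove the branch finite. This is the \textbf{main obstacle}, and the only place where freezing $\interpenv\interp$ matters: although \SchDP does not terminate in general, once the parameters take the values prescribed by $\interpenv\interp$ every domain, being enclosing, defines a \emph{finite} set of instances, so $\envapp\sch{\interpenv\interp}$ is one fixed finite propositional formula along the whole branch and \SchDP acts on it essentially as \DPLL does. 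I would capture this with a lexicographic, well-founded measure on $(\sch,\literalset)$ relative to $\interpenv\interp$: its lowest component is the total number $n(\sch,\interpenv\interp)$ of iteration instances still held in iterated form (finite by enclosure), which decreases by exactly one at each \instantiaterule; above it sits the number of atoms of the fixed formula not yet decided by $\literalset$ (decreased by \propsplit); and at the top a bound on the remaining applications of the reshaping rules \emptiness, \constraintsplit, \intervalise and \propsimpl. The crux is that each reshaping rule is \emph{self-disabling} --- its guard becomes unsatisfiable on the fragment it has just produced (e.g.\ after \emptiness the outer constraint entails $\exists\var'\cstr'$, and after \propsimpl the freshly added domain contradicts the index-matching guard) --- so although these rules may momentarily enlarge the schema and even raise $n$, they can fire only finitely often; arranging the lexicographic order so that this domination is sound, while \instantiaterule and \propsplit strictly decrease the lower components, is where I expect the real difficulty to lie. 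Finiteness of the branch finally contradicts the second clause of fairness, which would require a rule to be applied to the stabilised irreducible leaf; hence the first clause holds and some $\tab_\numi$ carries an irreducible, not closed leaf.
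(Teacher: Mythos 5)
Your overall architecture --- an equisatisfiability (invariant) lemma for the rules, a model lemma for irreducible non-closed leaves, and, for the converse direction, a model-relative well-founded measure combined with fairness and a minimality argument --- is exactly the paper's decomposition (its Lemmas~\ref{lem:equ} and~\ref{lem:irr}, Theorem~\ref{thm:soundness} and Theorem~\ref{thm:completeness}). The gap is the measure itself, and it is not a detail that can be deferred: it is the central difficulty of this theorem, and the arrangement you propose is precisely the one the paper argues \emph{cannot} work. You place ``a bound on the remaining applications of the reshaping rules'' (\emptiness, \propsimpl, \constraintsplit, \intervalise) at the top of a lexicographic order, with the number of unexpanded iteration instances below it. But \instantiaterule duplicates the pattern $\pat$ of the unfolded iteration: every occurrence of an indexed proposition in $\pat$, and every inner iteration of $\pat$, is copied, so the number of possible applications of \propsimpl and of \emptiness can strictly \emph{increase} at the very step that decreases your lower components, and the lexicographic decrease fails. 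Worse, for nested iterations your bottom component is also wrong: unfolding an iteration whose body contains an inner iteration creates a fresh copy of that inner iteration, so the total number of instances ``still held in iterated form'' can grow rather than shrink by one. Self-disabling of each reshaping rule on the fragment it just produced does not save the argument, because other rules re-create opportunities elsewhere. The paper resolves this by burying the counts \emph{inside} a recursive structural measure: an iteration is measured by $\measIt(\sum_{\inti}\measure3\interp{\pat\substitution{\inti/\var}},\nbSubIts)$, where $\nbSubIts$ counts the inner iterations on which \emptiness can still fire; a literal is measured by $\measLit(\nbLits)$, where $\nbLits$ counts the \propsimpl opportunities against $\literalset$; and the arithmetic facts $\measIt(x,\numk)+y<\measIt(x+y,\numk)$, $\measIt(x,\numk)+\measIt(y,\numk)<\measIt(x+y,\numk+1)$ and $\measLit(\numk+1)>\measIt(\measLit(\numk),0)$ are exactly what make \instantiaterule, \emptiness and \propsimpl simultaneously decreasing --- duplication is harmless there because each copy's contribution was already counted in the sum over instances. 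Without this (or an equivalent device) your ($\Rightarrow$) direction does not go through.

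A secondary, repairable weakness is your Model Lemma. At an irreducible, non-closed leaf it is not enough to say that \algebraic has removed empty iterations and then to satisfy the ``surviving'' literals: one must show that \emph{no iteration survives at all}, which requires irreducibility w.r.t.\ \constraintsplit, \emptiness and \intervalise together with quantifier elimination in linear arithmetic to put domains into interval form, so that \instantiaterule would otherwise apply; and then that no literal survives either (an undecided literal triggers \propsplit, a decided one is eliminated by \propsimpl followed by \algebraic), so the pattern at such a leaf is literally $\trueformula$. Your construction ``assign \truevalue to all undecided atoms'' would in any case be unsound if an undecided \emph{negative} literal survived; the correct conclusion, and the one the paper proves, is that nothing survives.
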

        Appendix \ref{sec:example} (and \cite{rapport09}) contains an example of a tableau generated by \SchDP.
      }

      \CourteLongue\section\subsection{Looping Detection}
      \label{sec:looping_detection}
      The above extension rules do not terminate in general,
      but this is not surprising as the satisfiability problem is undecidable \cite{tab09}.
      Non-termination comes from the fact that iterations can be infinitely unfolded 
      (consider e.g. $\schematicOr\var1\param\prop_\var\And\neg\prop_\var$),
      thus leading to infinitely many new schemata.
      However\CourteLongue{, often,}{ it is often the case that} newly obtained schemata have already been seen
      (up to some relation that remains to be defined)
      i.e. the procedure is \emph{looping}
      (e.g. $\schematicOr\var1\param\prop_\var\And\neg\prop_\var$ \Longue{will }generate\Courte{s}
      $\schematicOr\var1{\param-1}\prop_\var\And\neg\prop_\var$,
      then $\schematicOr\var1{\param-2}\prop_\var\And\neg\prop_\var$,
      then \dots
      which are all equal up to a shift of $\param$).
      This is actually an algorithmic interpretation of a proof by mathematical induction.
      We now define precisely the notion of looping.
      \Longue\par%
      We start with a very general definition:
      \begin{definition}[Looping]
        \label{def:looping_schemata}
        Let $\sch_1,\sch_2$ be two schemata
        having the same parameters $\param_1,\dotsc,\param_\numk$,
        we say that $\sch_1$ \emph{loops} on $\sch_2$ iff 
        for every model $\interp$ of $\sch_1$ 
        there is a model $\interpj$ of $\sch_2$
        s.t. $\interpenv\interpj(\param_\numj) < \interpenv\interp(\param_\numj)$ for some $\numj\in1..\numk$
        and $\interpenv\interpj(\param_\numl) \leq \interpenv\interp(\param_\numl)$ for every $\numl\neq\numj$.
        The induced relation among schemata is called the \emph{looping relation}.
      \end{definition}
      Looping is undecidable
      (e.g. if $\sch_2=\falseformula$ then $\sch_1$ loops on $\sch_2$ iff $\sch_1$ is unsatisfiable).
      It is trivially transitive.
      An advantage of Definition \ref{def:looping_schemata} is that,
      contrarily to the definitions found in \cite{tab09,sd09},
      it is independent of the \Longue{considered }proof procedure.
      \Longue{However we still have to make precise the link with \SchDP:}
      \begin{definition}
        \label{def:looping_nodes}
        Let $\node,\nodeb$ be nodes in a tableau $\tab$.
        $\nodeschfull\node\tab$ denotes the schema $\nodesch\node\tab\And\nodeinterpsch\node\tab$.
        Then $\nodeb$ \emph{loops} on $\node$ iff $\nodeschfull\nodeb\tab$ loops on $\nodeschfull\node\tab$%
      .\end{definition}
      \Longue{Following the terminology of \cite{Brotherston}, 
      $\nodeb$ is called a \emph{bud} node and $\node$ its \emph{companion} node. }%
      The \looping rule closes a leaf that loops on some existing node of the tableau.
      From now on, \SchDP denotes the extension rules, plus the \looping rule.
      \Courte{Theorem \ref{thm:soundness_and_completeness} still holds \cite{rapport09}.}
      \Longue{\par An example of a tableau generated by \SchDP can be found in Appendix \ref{sec:example}.

      \subsection{Soundness and Completeness}
        \begin{definition}
         \label{def:tableaux_semantics}
         Let $\interp$ be an interpretation and $\node$ a leaf of a tableau $\tab$.
         We write $\interp \models_\tab \node$ iff $\interp \models \nodeschfull\node\tab$
         (or simply $\interp \models \node$ when $\tab$ is obvious from the context).
         We write $\interp\models\tab$ iff there exists a leaf $\node$ in $\tab$ s.t. $\interp\models\node$.
         The definitions of \emph{model} and \emph{satisfiability} naturally extend.
        \end{definition}
        \begin{lemma}
         \label{lem:equ}
         Let $\tab$, $\tab'$ be tableaux s.t. $\tab'$ is obtained by applying an extension rule
         (i.e. any rule except the \looping rule)
         on a leaf $\node$ of $\tab$.
         Let $\interp$ be an interpretation.
         $\interp \models \node$
         iff there exists a child $\nodeb$ of $\node$ in $\tab'$ s.t. $\interp \models \nodeb$.
        \end{lemma}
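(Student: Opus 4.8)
The plan is to fix the interpretation $\interp$ and argue by case analysis on the extension rule applied at $\node$. First I would unfold the semantics: by Definitions \ref{def:semantics} and \ref{def:tableaux_semantics}, $\interp\models\node$ holds iff $\interpenv\interp$ satisfies $\schcstr{\nodesch\node\tab}$ and $\interpprop\interp$ satisfies the realization $\envapp{\nodeschfull\node\tab}{\interpenv\interp}$, where $\nodeschfull\node\tab=\nodesch\node\tab\And\nodeinterpsch\node\tab$. Since every child shares the parameters of $\node$, the single pair $(\interpenv\interp,\interpprop\interp)$ can be tested against each child as well, so everything reduces to comparing, under this fixed pair, the constraint and the realization of $\node$ with those of its children. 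In every case the ``if'' direction is the easy one, since each child's constraint entails $\schcstr{\nodesch\node\tab}$ and, restricted to that constraint, the child's realization implies the parent's.

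I would first dispatch the two arithmetic branching rules, \constraintsplit and \intervalise. Their conclusions differ from the premise only by conjoining to the schema constraint a formula over the \emph{parameters} only --- respectively $\exists\var\cstr$ versus $\forall\var\neg\cstr$, and $\intl.\expr_1\lhd\intk.\expr_2$ versus its negation --- together with a local edit of one iteration domain. Under the fixed environment $\interpenv\interp$ each such formula is either \truevalue or \falsevalue, and the two alternatives are exclusive and exhaustive, so exactly one branch is selected. It then remains to see that the realization is unchanged on the chosen branch: for \constraintsplit this is the fact (noted after Definition \ref{def:semantics}) that an iteration is empty exactly when $\forall\var\neg\cstr$ holds and that an empty iteration realizes to its neutral element $\neutrOp$; for \intervalise it is the elementary observation that, since $\intk,\intl>0$, the comparison $\intl.\expr_1\lhd\intk.\expr_2$ decides which of the two bounds $\intk.\var\lhd\expr_1$ and $\intl.\var\lhd\expr_2$ is redundant, so discarding the redundant one does not change the iteration's set of admissible indices.

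For \propsplit the branching is propositional. Here the side condition that $\prop_{\expr_1,\dotsc,\expr_\numk}\alwaysoccur\sch$ or $\neg\prop_{\expr_1,\dotsc,\expr_\numk}\alwaysoccur\sch$ is exactly what I would invoke: by Definition \ref{def:alwaysoccur} it guarantees that the ground atom $\envapp{\prop_{\expr_1,\dotsc,\expr_\numk}}{\interpenv\interp}$ occurs in $\envapp\sch{\interpenv\interp}$, hence lies in the domain of $\interpprop\interp$ and receives a definite truth value. The two children conjoin the complementary literals $\prop_{\expr_1,\dotsc,\expr_\numk}$ and $\neg\prop_{\expr_1,\dotsc,\expr_\numk}$ to $\nodeinterpsch\node\tab$; exactly one of these is consistent with that truth value, which yields the ``only if'' direction, while ``if'' follows by discarding the added literal. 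Without the always-occurring hypothesis $\interpprop\interp$ might not assign the atom at all, so this is where the condition is genuinely used.

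The remaining, single-child rules \algebraic, \instantiaterule, \emptiness and \propsimpl are where I expect the real work. As $\schcstr{\sch_2}=\schcstr{\sch_1}$, it suffices to show that $\envapp{\schpat{\sch_1}}{\interpenv\interp}$ and $\envapp{\schpat{\sch_2}}{\interpenv\interp}$ have the same truth value in $\interpprop\interp$. The key device is a \emph{replacement lemma}: if $\pat$ is rewritten to $\pat'$ at a position enclosed by iterations of domains $\cstr_{\var_1},\dotsc,\cstr_{\var_\numn}$, and $\envapp\pat{\anenv}$ and $\envapp{\pat'}{\anenv}$ are logically equivalent for every extension $\anenv$ of $\interpenv\interp$ to $\var_1,\dotsc,\var_\numn$ satisfying $\context(\sch_1)$ (and, for \propsimpl, modulo the literals asserted by $\nodeinterpsch\node\tab$), then the two whole realizations are logically equivalent. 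I would prove this by induction on the depth of the rewritten position, the crucial point being that $\context(\sch_1)$ accumulates precisely the domains of the enclosing iterations, so the local hypothesis is available for every index tuple that actually contributes to the realization. Granted the lemma, each rewrite becomes a local identity: \algebraic is Boolean/idempotence simplification plus ``empty iteration $=\neutrOp$'' and ``constant body over a nonempty domain collapses''; \instantiaterule is the split $\schemaOp\var\cstr\pat\equiv\pat\substitution{\IRupbnd/\var}\Op\schemaOp\var{\cstr\And\var\neq\IRupbnd}\pat$, sound exactly because $\context(\sch_1)\Implies\cstr\substitution{\IRupbnd/\var}$ puts $\IRupbnd$ in the domain; and \emptiness is the partition of $\{\var\mid\cstr\}$ into $\{\var\mid\cstr\And\exists\var'\cstr'\}$ and $\{\var\mid\cstr\And\forall\var'\neg\cstr'\}$, using the empty-iteration fact to replace the inner iteration by $\neutrOp$ on the second block. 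The genuinely delicate case is \propsimpl, where $\prop_{\expr_1,\dotsc,\expr_\numk}$ and its wrapping $\schemaAnd\var{(\expr_1\neq\exprf_1\Or\dotsb)\And\var=0}{\prop_{\expr_1,\dotsc,\expr_\numk}}$ are \emph{not} equivalent as bare patterns: when the indices coincide with $\exprf_1,\dotsc,\exprf_\numk$ the wrapping collapses to $\trueformula$, and soundness then rests on the fact that the rewritten atom is precisely the propositional variable asserted by the literal $\prop_{\exprf_1,\dotsc,\exprf_\numk}\in\nodeinterp\node\tab$, which $\nodeinterpsch\node\tab$ already forces to be \truevalue. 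Thus the main obstacle is to formulate and prove the replacement lemma in a form strong enough to cover deep positions (with correct bookkeeping of $\context(\sch_1)$) while simultaneously being relativised to the current literal set, so that \propsimpl is subsumed.
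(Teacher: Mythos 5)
Your proposal is correct and follows essentially the same route as the paper's proof: a rule-by-rule case analysis in which \constraintsplit and \intervalise are settled by completeness of linear arithmetic (exactly one branch's added parameter constraint holds, and the realization is unchanged on that branch), \propsplit is settled by the $\alwaysoccur$ side condition giving the atom a definite truth value, and the one-conclusion rewrite rules are settled by showing $\interpprop\interp(\envapp{\sch_1}{\interpenv\interp})=\interpprop\interp(\envapp{\sch_2}{\interpenv\interp})$, with \propsimpl singled out as the one rule where this equality holds only modulo the literals in $\nodeinterpsch\node\tab$ rather than as an equivalence of realizations. Your explicit ``replacement lemma'' with its $\context(\sch_1)$ bookkeeping is simply a spelled-out version of what the paper dismisses as ``easily proved (using the side conditions of each rewrite)'', so the two arguments coincide in substance.
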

        \begin{proof}
          For \propsplit there are two branches $\nodeb_1$, $\nodeb_2$.
          If $\interp\models\node$ then
          either $\interpprop\interp \models \envapp{\prop_{\expr_1,\dots,\expr_k}}{\interpenv\interp}$
          or     $\interpprop\interp \models \envapp{\neg\prop_{\expr_1,\dots,\expr_k}}{\interpenv\interp}$,
          and consequently either $\interp\models\nodeb_1$ or $\interp\models\nodeb_2$.
          Conversely it is easily seen that if we have 
          $\interp\models\nodeb_1$ or $\interp\models\nodeb_2$
          then 
          $\interp\models\node$.

          Similarly we write $\nodeb_1$, $\nodeb_2$ for the two branches of \constraintsplit.
          By completeness of linear arithmetic,
          either $\models\exists\var(\cstr\interpenv\interp)$
          or $\models\forall\var\neg(\cstr\interpenv\interp)$
          for any interpretation $\interp$
          (notice that by the application condition of the rule,
          all variables occurring in $\cstr$ are parameters,
          thus $\var$ is the only free variable of $\cstr\interpenv\interp$).
          Suppose $\interp\models\node$, 
          then in the first case $\interp\models\nodeb_1$,
          in the second case the iteration is empty so $\interp\models\nodeb_2$.
          Conversely if $\interp\models\nodeb_1$ then it is trivial that $\interp\models\node$
          and        if $\interp\models\nodeb_2$ then $\models\forall\var\neg(\cstr\interpenv\interp)$
          and thus $\envapp\cstrsplitlhm{\interpenv\interp}=\neutrOp$
          (following the notations of \constraintsplit), so $\interp\models\node$.

          We do not detail all the rewrite rules, which have only one conclusion.
          Suppose $\nodeb$ is obtained from $\node$ by rewriting a schema $\sch_1$ into $\sch_2$.
          Let $\interp$ be a model of $\node$ or $\nodeb$
          (whether one proves the ``only if'' or the ``if'' implication of the lemma).
          It is easily proved (using the side conditions of each rewrite)
          that 
          $\interpprop\interp(\envapp{\sch_1}{\interpenv\interp})
          =\interpprop\interp(\envapp{\sch_2}{\interpenv\interp})$,
          i.e. for any $\interp$ the propositional realizations under $\interpenv\interp$ 
          of $\sch_1$ and $\sch_2$ have the same value w.r.t. $\interpprop\interp$.
          Actually we even have that for all rules except \propsimpl,
          and for every environment $\anenv$ of $\sch_1$,
          $\envapp{\sch_1}\anenv$ and $\envapp{\sch_2}\anenv$
          are equivalent.

          Consider \intervalise.
          Suppose we have $\interp\models\node$
          then either $(\intl.\expr_1\lhd\intk.\expr_2)\interpenv\interp$
          or $(\intl.\expr_1\not{\!\!\lhd}\intk.\expr_2)\interpenv\interp$
          is valid (by completeness of linear arithmetic).
          Furthermore, it is easily seen that for every $\var$,
          $\intl.\expr_1\lhd\intk.\expr_2$ and $\intk.\var\lhd\expr_1$ entail $\intl.\var\lhd\expr_2$ 
          (as $\intk,\intl>0$; one has to carefully make the distinction between the cases $\expr_1=0$ and $\expr_1\neq0$),
          and $\intl.\expr_1\not{\!\!\lhd}\intk.\expr_2$ and $\intl.\var\lhd\expr_2$ entail $\intk.\var\lhd\expr_1$.
          Consequently, in both cases removing the entailed constraint does not affect the propositional realization of the iteration,
          and thus $\interp\models\nodeb_1$ or $\interp\models\nodeb_2$.
          Now suppose $\interp\models\nodeb_1$.
          Then $(\intl.\expr_1\lhd\intk.\expr_2)\interpenv\interp$ is valid.
          Thus we have in the exact same way: $\forall\var(\numk.\var\lhd\expr_1\Implies\numl.\var\lhd\expr_1)\interpenv\interp$.
          And thus $(\numk.\var\lhd\expr_1)\interpenv\interp$
          is equivalent to $(\numk.\var\lhd\expr_1\And\numl.\var\lhd\expr_1)\interpenv\interp$.
          So $\interp\models\node$.
          The case $\interp\models\nodeb_2$ is similar.
          \qed
        \end{proof}
        A leaf is \emph{irreducible} iff no rule of \SchDP applies to it.
        \begin{lemma}
         \label{lem:irr}
         If a leaf $\node$ in $\tab$ is irreducible and not closed then $\tab$ is satisfiable.
        \end{lemma}
        \begin{proof}
         We first show that $\nodesch\node\tab$ does not contain iterations.
         If there are iterations then there are iterations which are not contained into any other iteration.
         Let $\schemaOp\var\cstr\pat$ be such an iteration.
         $\schemaOp\var\cstr\pat$ cannot be empty by irreducibility of \constraintsplit and \emptiness.
         So by irreducibility of \intervalise and elimination of quantifiers in linear arithmetic,
         $\cstr$ can be restricted to a non-empty disjunction of inequalities $\expr_1\leq\var\And\var\leq\expr_2$.
         Thus $\cstr\substitution{\expr_1/\var}$ is valid and \instantiaterule can apply which is impossible.
         Hence there cannot be any iteration which is not contained into any other iteration,
         thus there cannot be any iteration at all.
         So $\nodesch\node\tab$ is constructed only with $\And,\Or,\neg$ and indexed propositions.
         Hence it is easily seen that any literal $\lit$ s.t. $\lit\litIn\nodesch\node\tab$
         satisfies $\lit\alwaysoccur\nodesch\node\tab$.
         Thus either $\lit\in\nodeinterp\node\tab$ or $\lit^c\in\nodeinterp\node\tab$
         by irreducibility of \propsplit.
         As a consequence if there existed such a literal,
         \propsimpl and then \algebraic would have applied,
         turning every occurrence of $\lit$ into $\falseformula$ or $\trueformula$.
         Hence $\nodesch\node\tab$ does not contain any literal,
         and by irreducibility of \algebraic
         $\nodesch\node\tab$ is either $\falseformula$, impossible as the branch is not closed,
         or $\trueformula$, which is satisfiable.
         Finally $\literalset$ cannot contain two contradictory \Schlit{}s,
         because the application conditions of \propsplit ensure that 
         $\lit$ is added to $\nodeinterp\node\tab$ only if 
         neither $\lit   \maybelong \nodeinterpsch\node\tab\And\schcstr{\nodesch\node\tab}$
         nor     $\lit^c \maybelong \nodeinterpsch\node\tab\And\schcstr{\nodesch\node\tab}$.
         We conclude with Lemma \ref{lem:equ} that the initial tableau is satisfiable.
         \qed
        \end{proof}
        \begin{theorem}[Soundness]
         \label{thm:soundness}
         Let $\tab$ be a tableau. 
         If a tableau $\tab'$ is obtained from $\tab$ by application of the extension rules,
         and if $\tab'$ contains an irreducible and not closed leaf
         then $\tab$ is satisfiable.
        \end{theorem}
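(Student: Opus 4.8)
(Sketch)
The plan is to read off satisfiability of $\tab$ from the two preceding lemmas, which already carry all the real content. First I would apply Lemma~\ref{lem:irr} to $\tab'$: since $\tab'$ contains an irreducible and not closed leaf, $\tab'$ is satisfiable, so there exist an interpretation $\interp$ and a leaf $\nodeb$ of $\tab'$ with $\interp\models\nodeb$, i.e.\ $\interp\models\tab'$ in the sense of Definition~\ref{def:tableaux_semantics}. The remaining work is to push this model back through every rule application separating $\tab$ from $\tab'$.

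Writing $\tab=\tab_0,\tab_1,\dots,\tab_m=\tab'$ for the finite sequence of tableaux produced by the successive extension rules, I would proceed by induction on $m$, the crux being the one-step claim: if $\tab_{\numi+1}$ is obtained from $\tab_\numi$ by a single extension rule and $\interp\models\tab_{\numi+1}$, then $\interp\models\tab_\numi$. To establish it, let $\node$ be the leaf of $\tab_\numi$ to which the rule is applied and let $\nodeb$ be a leaf of $\tab_{\numi+1}$ with $\interp\models\nodeb$. If $\nodeb$ is one of the children of $\node$ created by the rule, then the right-to-left direction of Lemma~\ref{lem:equ} gives $\interp\models\node$, and since $\node$ is a leaf of $\tab_\numi$ we obtain $\interp\models\tab_\numi$. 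Otherwise $\nodeb$ is untouched by the rule, hence is already a leaf of $\tab_\numi$, and again $\interp\models\tab_\numi$. Iterating this claim from $\tab_m$ down to $\tab_0$ yields $\interp\models\tab$, so $\tab$ is satisfiable.

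I expect no genuine obstacle at this level: Lemma~\ref{lem:equ} has already discharged, rule by rule, the delicate model-preservation arguments (including the side conditions of \algebraic, \instantiaterule, \emptiness, \propsimpl and \intervalise), and Lemma~\ref{lem:irr} has settled that an irreducible open leaf is satisfiable. The only point that will require a little care is the bookkeeping distinguishing a freshly created child of $\node$ from an unchanged leaf shared with $\tab_\numi$, together with the routine induction on the number of rule applications---neither of which is a real difficulty.
\qed
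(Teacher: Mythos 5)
Your proposal is correct and follows essentially the same route as the paper, whose proof of this theorem is simply that it ``follows immediately from Lemmata \ref{lem:equ} and \ref{lem:irr}''. You have merely made explicit the bookkeeping the paper leaves implicit: Lemma \ref{lem:irr} yields a model of $\tab'$, and the backward direction of Lemma \ref{lem:equ}, iterated over the finite sequence of rule applications (distinguishing children of the expanded leaf from untouched leaves), transports that model back to $\tab$.
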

        \begin{proof}
         This follows immediately from Lemmata \ref{lem:equ} and \ref{lem:irr}.
         \qed
        \end{proof}
         
        We now prove that the procedure is complete w.r.t. satisfiability
        i.e. that if $\sch$ has a model then every sequence of tableaux constructed from $\sch$
        (in a fair way) eventually contains an irreducible and not closed branch.
        To do this we assume the existence of a model,
        then we define a well-founded measure w.r.t. this model
        and we show that it strictly decreases at each rule application (Lemma \ref{lem:decrease}).
        Thus there will be a leaf s.t. this measure is minimal, so no rule can apply on it.
        We then use Lemma \ref{lem:equ} to show that this leaf cannot be closed.
        This is formalized in the proof of Theorem \ref{thm:completeness}.

        Intuitively, we take a model $\interp$
        and apply \SchDP by focusing only on the branch for which $\interp$ is a model
        (by Lemma \ref{lem:equ}, there always exists such a branch).
        Then, in this branch, all iterations will progressively be unfolded.
        This process will stop because the iteration has a fixed length in $\interp$.
        Concretely, the iterations will be unfolded rank by rank until it only remains an empty iteration
        which will then be removed either by \constraintsplit or \emptiness.
        Once this is done for all iterations, what remains is the propositional realization of the original schema w.r.t. $\interpenv\interp$
        (except that in the meantime, some literals may have been evaluated, leading to some possible simplifications).
        So we have a propositional formula and \algebraic applies until we obtain $\trueformula$,
        i.e. a node which is irreducible and not closed.

        As the reader will see, the presented measure is not trivial
        (in particular $\measureAlone3\interp$).
        We first outline the encountered problems that justify such a definition.
        From the explanations of the previous \S,
        the measure must be greater when the schema contains an iteration
        (and the longer the iteration is, the greater shall be the measure).
        For instance such a measure would be strictly lower after an application of \instantiaterule.
        \emptiness divides an iteration into two iterations such that the sum 
        of their lengths is equal to the length of the original iteration, thus the measure remains the same.
        This is easily circumvented, e.g. by squaring the length of iterations.
        A bigger problem occurs with \propsimpl which \emph{adds} iterations where there was no iteration before.
        A natural solution is to define another measure that decreases on \propsimpl,
        e.g. the number of possible applications of the rule.
        Then we give this measure a higher priority (via a lexicographic ordering).
        But this does not work because \instantiaterule duplicates the pattern $\pat$
        (following the notations of the rule)
        and can thus increase the number of possible applications of \propsimpl.
        Similar problems are encountered with \emptiness:
        this rule also makes a kind of unfolding
        but, following the notations of the rule, $\cstr\And\exists\var'\cstr'$ can be unsatisfiable.
        In such a case, it means that the unfolding is fake,
        it just allows us to introduce the information $\forall\var'\neg\cstr'$ in the rightmost iteration.
        So in this case we have just introduced a new iteration, without even decomposing the original one.
        Once again we could define another measure, e.g. the number of possible applications of \emptiness,
        but this is increased by \instantiaterule.

        We now present formally our solution which requires the two following ``auxiliary'' functions:
        \[
        \begin{aligned}
          \measIt(x,0)       & =(x+2)^2                  & \measLit(0)       & =1\\
          \measIt(x,\numk+1) & =(\measIt(x,\numk)+x+2)^2 & \measLit(\numk+1) & =\measIt(\measLit(\numk),0)+1\\
        \end{aligned}
        \]
        The following results are easily proved (most of them by induction).
        They sum up all the properties of $\measLit$ and $\measIt$ that are useful to prove that the measure decreases.
        \begin{proposition}
          \label{thm:measure_prop}
          \noindent
          \begin{enumerate}
            \item $\forall x,\numk\in\N$, $\measIt(x,\numk)\geq4$
            \item $\forall x,\numk\in\N$, $\measIt(x,\numk)\geq x$
            \item $\forall x,y,\numk\in\N$, $x<y\Implies\measIt(x,\numk)<\measIt(y,\numk)$
            \item $\forall \numk_1,\numk_2\in\N$, $\numk_1<\numk_2\Implies\measLit(\numk_1)<\measLit(\numk_2)$
            \item $\forall x,y,\numk\in\N$ s.t. $y\geq1$, $\measIt(x,\numk)+y<\measIt(x+y,\numk)$
            \item $\forall x,y,\numk\in\N$, $\measIt(x,\numk)+\measIt(y,\numk)<\measIt(x+y,\numk+1)$
            \item $\forall \numk\in\N$, $\measLit(\numk+1)>\measIt(\measLit(\numk),0)$
          \end{enumerate}
        \end{proposition}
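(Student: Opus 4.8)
The plan is to establish the seven items essentially in the order in which they are stated, since the later ones lean on the earlier ones, and to carry each out either by a short induction on $\numk$ or by a direct computation on natural numbers, exploiting throughout only the fact that $z\mapsto z^2$ is (strictly) increasing on $\N$.

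First I would prove items (1)--(3) by straightforward inductions on $\numk$. For (1) the base case is $\measIt(x,0)=(x+2)^2\geq 4$ because $x+2\geq 2$, and in the step the quantity $\measIt(x,\numk)+x+2$ being squared is already at least $2$, so its square is at least $4$ (the induction hypothesis is not even needed). For (2) the base case is $(x+2)^2\geq x$, and in the step $(\measIt(x,\numk)+x+2)^2\geq(x+2)^2\geq x$. For (3) I would note that from $\measIt(x,\numk)<\measIt(y,\numk)$ and $x+2<y+2$ one gets a strict inequality of the two squared sums by monotonicity of squaring. These three are routine.

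The main obstacle is item (5), which I would also prove by induction on $\numk$. In the base case I would expand $(x+y+2)^2=(x+2)^2+2y(x+2)+y^2$ and observe that the surplus $2y(x+2)+y^2\geq 4y$ exceeds $y$ as soon as $y\geq 1$. In the step the difficulty is that the additive slack $+y$ must survive both the inner recurrence and the outer squaring. Writing $a=\measIt(x,\numk)+x+2$ and $b=\measIt(x+y,\numk)+x+y+2$, the induction hypothesis gives $\measIt(x+y,\numk)>\measIt(x,\numk)+y$, whence $b>a+2y$; since $a\geq 1$ (by (1)), squaring yields $b^2>(a+2y)^2\geq a^2+y$, which is exactly $\measIt(x,\numk+1)+y<\measIt(x+y,\numk+1)$. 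Keeping careful track of which inequalities are strict is the only delicate point here.

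Finally I would dispatch the remaining items without any further induction on $\numk$. For (6) I would use weak monotonicity (from (3)) to get $\measIt(x+y,\numk)\geq\max(\measIt(x,\numk),\measIt(y,\numk))$, which is at least $4$ by (1); then $\measIt(x+y,\numk+1)\geq(\measIt(x+y,\numk))^2\geq 4\max(\measIt(x,\numk),\measIt(y,\numk))\geq 2(\measIt(x,\numk)+\measIt(y,\numk))$, which strictly dominates the desired sum. Item (4) reduces by transitivity to the single-step inequality $\measLit(\numk)<\measLit(\numk+1)$, immediate from $\measLit(\numk+1)=(\measLit(\numk)+2)^2+1$ together with (2). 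Item (7) is immediate from the defining recurrence $\measLit(\numk+1)=\measIt(\measLit(\numk),0)+1$. Thus the whole proposition amounts to one genuine induction, namely (5), surrounded by elementary estimates.
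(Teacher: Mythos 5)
Your proof is correct: all seven inductions and direct estimates check out (including the delicate inductive step in item (5), where $b>a+2y$ and $a\geq 1$ indeed give $b^2>a^2+y$). The paper itself offers no detailed argument — it merely remarks that the results "are easily proved (most of them by induction)" — and your inductions on $\numk$ plus elementary estimates are exactly the kind of routine verification the paper intends.
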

        We can now define the measure.
        Let $\interp$ be an interpretation and $\node$ a node of a tableau $\tab$.
        We set $\measurebranch\interp\node\tab\eqdef
        (\measure1\interp\node,\measure2\interp\node,\measure3\interp{\nodesch\node\tab},\measure4{}\node,\measure5{}\node)$
        ordered using the lexicographic extension of the usual ordering on natural numbers,
        where $\measure1\interp\node$, $\measure2\interp\node$, $\measure3\interp{\nodesch\node\tab}$, $\measure4{}\node$, $\measure5{}\node)$
        are defined as follows:
        \begin{enumerate}
          \item For a parameter $\param$ of $\nodesch\node\tab$,
            $\measure1{\interp,\param}\node\eqdef \interpenv\interp(\param)$.
           $\measure1\interp\node$ is defined as the multiset extension of 
           $\measure1{\interp,\param}\node$ to all parameters of $\nodesch\node\tab$.
         \item ${\measure2\interp\node}$ is the number of atoms (different from $\falseformula,\trueformula$) that occur in
           $\envapp{\nodesch\node\tab}{\interpenv\interp}$ but not in $\envapp{\nodeinterpsch\node\tab}{\interpenv\interp}$.
         \item  $\measure3\interp{\nodesch\node\tab}$ is defined by induction on the structure of $\schpat{\nodesch\node\tab}$:
           \begin{itemize}
             \item{$\measure3\interp\trueformula \isdef\measure3\interp\falseformula \isdef 1$}
             \item{$\measure3\interp{\neg\pat} \isdef \measure3\interp\pat+1$.}
             \item{$\measure3\interp{\pat_1 \Op \pat_2} \isdef \measure3{\interp}{\pat_1} 
               + \measure3\interp{\pat_2}$
               }
             \item{$\measure3\interp{\schemaOp\var\cstr\pat} \isdef 
               \measIt(\sum_{\inti\in\concDom} \measure3\interp{\pat\substitution{\inti/\var}}, \nbSubIts)$ 
               where $\concDom$ is the set $\{\inti\in\Z \mid \cstr\interpenv\interp\substitution{\inti/\var}\text{ is valid}\}$
               ($\concDom$ is finite since $\cstr$ encloses $\var$)
               and $\nbSubIts$ is the number of iterations $\emptinesslhm$ occurring in $\pat$
               s.t. \emptiness can apply on $\schemaOp\var\cstr\pat[\emptinesslhm]$
               (with the notations of \emptiness).
               }
             \item{$\measure3\interp{\prop_{\expr_1,\dots,\expr_\numk}} \isdef \measLit(\nbLits)$
               where $\nbLits$ is the number of literals $\prop_{\exprf_1,\dots,\exprf_\numk}\in\nodeinterp\node\tab$
               s.t. \propsimpl applies on $\prop_{\expr_1,\dots,\expr_\numk}$
               .
               }
           \end{itemize}
         \item $\measure4{}\node$ is the number of possible applications of \intervalise on $\node$.
         \item $\measure5{}\node$ is the number of iterations $\schemaOp\var\cstr\pat$ of $\nodesch\node\tab$ s.t.
           $\schcstr{\nodesch\node\tab}\And\forall\var\neg\cstr$ is satisfiable.
        \end{enumerate}

        An \emph{extended child} of a node $\node$ is 
        a child of $\node$ if $\node$ is not a bud node,
        or the companion node of $\node$ otherwise.
        For extended children we have the following weaker version of Lemma \ref{lem:equ}:
        \begin{proposition}
          \label{thm:ext_equ}
          Let $\tab$, $\tab'$ be tableaux s.t. $\tab'$ is obtained by applying any rule of \SchDP on a leaf $\node$ of $\tab$.
          If $\node$ is satisfiable then there exists a satisfiable extended child of $\node$ in $\tab'$.
        \end{proposition}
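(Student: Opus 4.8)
The plan is to split on which kind of rule produces $\tab'$ from $\node$, mirroring the case split built into the definition of \emph{extended child}. This proposition is the exact analogue of Lemma~\ref{lem:equ} extended to cover the \looping rule, so for the extension rules there is essentially nothing new to do, and the only genuine content lies in the \looping case.

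First I would treat the extension rules (every rule of \SchDP except \looping). In this situation $\node$ is not a bud node in $\tab'$, hence by definition its extended children are precisely its children. Since $\node$ is satisfiable, fix an interpretation $\interp$ with $\interp\models\node$. The forward (``only if'') direction of Lemma~\ref{lem:equ} then yields a child $\nodeb$ of $\node$ in $\tab'$ with $\interp\models\nodeb$, so $\nodeb$ is a satisfiable extended child and we are done.

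Next I would treat the \looping rule. Applying it on $\node$ turns $\node$ into a bud node that loops on some companion node $\nodeb$ which already occurs in $\tab$ (the rule closes a leaf that loops on an \emph{existing} node), hence $\nodeb$ is present in $\tab'$ as well; by definition the unique extended child of $\node$ is then $\nodeb$. By Definition~\ref{def:looping_nodes}, $\nodeschfull\node\tab$ loops on $\nodeschfull\nodeb\tab$ in the sense of Definition~\ref{def:looping_schemata}. Since $\node$ is satisfiable, $\nodeschfull\node\tab$ has a model; instantiating Definition~\ref{def:looping_schemata} with $\sch_1=\nodeschfull\node\tab$ and $\sch_2=\nodeschfull\nodeb\tab$ produces a model of $\nodeschfull\nodeb\tab$. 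Thus $\nodeb$ is satisfiable, as required.

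The step I expect to carry the weight is the \looping case, but even there the difficulty is conceptual rather than technical: the whole point of phrasing Definition~\ref{def:looping_schemata} as ``every model of $\sch_1$ admits a model of $\sch_2$ (with a strict parameter decrease)'' is that it delivers exactly the one-way implication ``bud satisfiable implies companion satisfiable'' needed here. Notably, the parameter-decrease conditions of the looping relation are \emph{not} used in this proof; they are reserved for the later termination argument, where the well-founded measure must strictly decrease. The only bookkeeping to be careful about is confirming that applying \looping genuinely makes $\node$ a bud and that its companion is an existing node, so that the extended child is well defined in $\tab'$.
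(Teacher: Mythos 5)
Your proof is correct and follows exactly the paper's (very terse) argument: the non-bud case is delegated to Lemma~\ref{lem:equ}, and the bud case follows from Definitions~\ref{def:looping_nodes} and~\ref{def:looping_schemata}, using only the existence of a model of the companion and not the parameter-decrease condition. You have simply spelled out the details the paper leaves implicit.
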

        \begin{proof}
          Indeed if $\node$ is a bud node then it follows from Definitions \ref{def:looping_schemata} and \ref{def:looping_nodes},
          otherwise it follows from Lemma \ref{lem:equ}.
          \qed
        \end{proof}
        Let $\interp$ be a model of $\node$ and $\nodeb$ a satisfiable extended child of $\node$,
        $\childmodel\interp\node\nodeb$ is defined as follows:
        if $\node$ is a bud node then there is $\interpj$ s.t. 
        $\interpj(\param)<\interp(\param)$ for some parameter $\param$ and $\interpj\models\nodeb$.
        We set $\childmodel\interp\node\nodeb\eqdef\interpj$.
        If $\node$ is not a bud node then $\childmodel\interp\node\nodeb\eqdef\interp$.
        We can now prove the main lemma, which states that the measure strictly decreases when applying a rule.
        \begin{lemma}
         \label{lem:decrease}
         Let $\tab$, $\tab'$ be tableaux s.t.
         $\tab'$ is deduced from $\tab$ by applying a rule on a leaf $\node$.
         If there is a model $\interp$ of $\node$
         then for every satisfiable extended child $\nodeb$ of $\node$ in $\tab'$
         we have $\measurebranch{\childmodel\interp\node\nodeb}\nodeb{\tab'} < \measurebranch\interp\node\tab$.
        \end{lemma}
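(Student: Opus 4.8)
The plan is to isolate the \looping rule and then run through the extension rules one at a time, in each case pinpointing the single coordinate of $\measurebranch\interp\node\tab$ that strictly decreases while checking that no coordinate of higher priority increases. Abbreviate $\interpj\eqdef\childmodel\interp\node\nodeb$; note that $\interpj$ is a model of $\nodeb$, explicitly in the bud case and (for the tracked branch) by Lemma~\ref{lem:equ} otherwise. First I would treat the case where $\node$ is a bud node: then the rule applied is \looping, $\nodeb$ is the companion of $\node$, and by Definition~\ref{def:looping_schemata} together with the choice of $\interpj$ we have $\interpenv\interpj(\param)<\interpenv\interp(\param)$ for some parameter $\param$ and $\interpenv\interpj(\param')\leq\interpenv\interp(\param')$ for all the others. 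As $\node$ and $\nodeb$ share their parameters, the multiset $\measure1{\interpj}\nodeb$ is strictly below $\measure1\interp\node$, so the tuple already drops at its first coordinate. In every other case $\interpj=\interp$, the set of parameters can only shrink and their values are untouched, so $\measure1\interp\node$ never increases and it remains to find the decreasing coordinate.

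For \propsplit the pattern and the constraint are untouched, so $\measure1\interp\node$ is unchanged; the atom of the newly stored literal occurs in $\envapp\sch{\interpenv\interp}$ (since $\lit\alwaysoccur\sch$ or $\lit^c\alwaysoccur\sch$, Definition~\ref{def:alwaysoccur}) yet, by the side condition, did not occur in the realization of the stored literals, so the second coordinate $\measure2\interp\node$ falls by at least one. For \constraintsplit the branch that replaces the iteration by its neutral element strictly lowers the third coordinate, because an iteration always contributes at least $4$ (Proposition~\ref{thm:measure_prop}(1)) whereas its replacement contributes $1$; the branch that conjoins $\exists\var\cstr$ to the constraint increases none of the first four coordinates (the pattern, the realization, $\concDom$ and the \intervalise count are all preserved) and removes the split iteration from the fifth coordinate, since its domain together with $\forall\var\neg\cstr$ is now unsatisfiable, so that fifth coordinate strictly falls. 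For \intervalise, in the branch selected by $\interpenv\interpj$ the dropped domain conjunct is entailed by the retained one and the freshly added top-level inequality --- exactly the entailments established in the proof of Lemma~\ref{lem:equ} --- so the realization, the index set $\concDom$ and the first three coordinates are unchanged while one application of \intervalise vanishes and the fourth coordinate falls. This is the one rule for which it is essential that $\interpj$ model $\nodeb$: under a model of the other branch, dropping the conjunct could enlarge $\concDom$ and raise the third coordinate.

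The four rewrites keep the constraint, hence keep $\measure1\interp\node$, and they also keep $\measure2\interp\node$ --- the realization is preserved for \algebraic, \instantiaterule and \emptiness by Lemma~\ref{lem:equ}, and for \propsimpl it can only lose atoms --- so for all of them the outcome is decided in the third coordinate. For \algebraic each rewrite replaces a subpattern by one of strictly smaller structural measure: erasing an iteration drops its contribution from at least $4$ to $1$, and deleting a neutral element or a repeated conjunct lowers the additive count, using Proposition~\ref{thm:measure_prop}(1)--(2). For \instantiaterule, writing $S$ for the summed body measures over $\concDom$ and $m\geq1$ for the measure of the instance peeled off at $\IRupbnd$, the new value $m+\measIt(S-m,\nbSubIts')$ lies below $\measIt(S,\nbSubIts)$: restricting the domain by $\var\neq\IRupbnd$ cannot create \emptiness sites so $\nbSubIts'\leq\nbSubIts$, $\measIt$ is increasing in its second argument, and Proposition~\ref{thm:measure_prop}(5) then finishes the job.

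The genuinely delicate cases are \emptiness and \propsimpl, which is precisely what the two-argument shape of $\measIt$ and the auxiliary function $\measLit$ were designed to absorb; I expect these to be the main obstacle. \emptiness replaces one iteration of index $\nbSubIts$ by two whose indices are at most $\nbSubIts-1$ --- the inner iteration on which it fired no longer admits \emptiness, in the left part because the outer domain now forces $\exists\var'\cstr'$ and in the right part because it has been erased --- so, bounding the two new contributions by $\measIt(S_1,\nbSubIts-1)$ and $\measIt(S_2,\nbSubIts-1)$ (using part~(3) to absorb the neutral-element replacement in the right part and the monotonicity of $\measIt$ in its second argument) and then invoking Proposition~\ref{thm:measure_prop}(6), their sum stays strictly below $\measIt(S_1+S_2,\nbSubIts)=\measIt(S,\nbSubIts)$. \propsimpl wraps an indexed proposition in an ``if-then-else'' iteration whose domain forbids the very literal that fired the rule, so that literal no longer contributes to the count $\nbLits$ of the wrapped proposition; the proposition then has measure at most $\measLit(\nbLits-1)$, the enclosing iteration carries index $0$ and a one-point domain, and Proposition~\ref{thm:measure_prop}(4) and~(7) give $\measIt(\measLit(\nbLits-1),0)<\measLit(\nbLits)$, the original value. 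The remaining steps are the routine arithmetic collected in Proposition~\ref{thm:measure_prop}.
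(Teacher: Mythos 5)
Your proof is correct and follows essentially the same route as the paper's: the same rule-by-rule lexicographic analysis, with \looping handled by the first component, \propsplit by the second, the four rewrite rules by the third via items (1)--(2), (5), (6) and (7) of Proposition~\ref{thm:measure_prop}, \intervalise by the fourth and \constraintsplit by the fifth. Your explicit caveat on \intervalise --- that the argument needs $\interpj\models\nodeb$, since under a model of the other branch the set $\concDom$ could grow --- is a subtlety the paper glosses over (its proof asserts $\concDom$ is preserved ``in both branches''), but it matches how the lemma is actually invoked in Theorem~\ref{thm:completeness}, so the two arguments agree in substance.
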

        \begin{proof}
         By inspection of the extension rules:
         \begin{center}
         \begin{tabular}{cccccccc}
           Rule &$\measure1\interp\node$&$\measure2\interp\node$&$\measure3\interp{\nodesch\node\tab}$&$\measure4{}\node$&$\measure5{}\node$\\\hline
           \propsplit       & $\leq$ & $<$    &        &        &     \\\hline
           \constraintsplit & $\leq$ & $\leq$ & $\leq$ & $\leq$ & $<$ \\\hline
           \algebraic       & $\leq$ & $\leq$ & $<$    &        &     \\\hline
           \instantiaterule & $\leq$ & $\leq$ & $<$    &        &     \\\hline
           \emptiness       & $\leq$ & $\leq$ & $<$    &        &     \\\hline
           \propsimpl       & $\leq$ & $\leq$ & $<$    &        &     \\\hline
           \intervalise     & $\leq$ & $\leq$ & $\leq$ & $<$    &     \\\hline
           \looping         & $<$    &        &        &        &     \\\hline
         \end{tabular}
         \end{center}
         $\leq$ (resp. $<$) means that the corresponding measure \emph{does not increase} 
         (resp. \emph{strictly decreases}) by application of the rule.
         \begin{itemize}
           \item For $\measure1\interp\node$, \looping strictly decreases by Definition \ref{def:looping_schemata}.
             In all other cases, $\childmodel\interp\node\nodeb=\interp$ so all parameters have the same values,
             thus $\measure1\interp\node$ is constant.
           \item When \propsplit applies,
             either $\prop_{\expr_1,\dots,\expr_\numk}\alwaysoccur\sch$
             or $\neg\prop_{\expr_1,\dots,\expr_\numk}\alwaysoccur\sch$ 
             (following the notations of \propsplit),
             so either $\envapp{\prop_{\expr_1,\dots,\expr_\numk}}{\interpenv\interp}\litIn\envapp{\nodesch\node\tab}{\interpenv\interp}$
             or $\envapp{\neg\prop_{\expr_1,\dots,\expr_\numk}}{\interpenv\interp}\litIn\envapp{\nodesch\node\tab}{\interpenv\interp}$.
             Hence either $\prop_{\expr_1,\dots,\expr_\numk}$
             or $\neg\prop_{\expr_1,\dots,\expr_\numk}$ is added to $\nodeinterp\node\tab$,
             thus $\measure2\interp\node$ strictly decreases.
             It is obvious that other rules (except \looping) cannot increase the number of atoms in $\envapp{\nodesch\node\tab}{\interpenv\interp}$
             (even rules that duplicate a pattern, namely \instantiaterule or \emptiness:
             indeed the number of atoms is increased in the schema but its propositional realization remains the same,
             see the proof of Lemma \ref{lem:equ})
             thus they cannot increase $\measure2\interp\node$.
             \looping has already been shown to be decreasing so
             we do not mind that this rule possibly increases
             due to the lexicographic ordering.
             The same argument allows us to omit the \propsplit rule in the following,
             and similarly for the each subsequent measure.
           \item We detail the case of $\measure3\interp{\nodesch\node\interp}$, rule by rule
             (the notations used here --- $\nbLits$, $\nbSubIts$, $\concDom$ --- are the same 
             as in the definition of $\measureAlone3\interp$):
             \begin{enumerate}
               \item \constraintsplit:
                 the pattern does not change but one must take care that $\context(\nodesch\node\tab)$ does change,
                 so $\nbLits$ and $\nbSubIts$ may increase.
                 However \constraintsplit only strengthens the context and so cannot increase those numbers.
               \item \algebraic: obvious by inspection of all rules.
                 The two first items of Proposition \ref{thm:measure_prop} enable to conclude for rules involving an iteration.
               \item \instantiaterule:
                 the result follows from the fifth item of Proposition \ref{thm:measure_prop}
                 (taking $y=\measure3\interp{\pat\substitution{\expr/\var}}$;
                 $\measure3\interp\pat\geq1$ for every pattern $\pat$ so indeed $y\geq1$).
                 Similarly to \constraintsplit, $\nbLits$ and $\nbSubIts$ cannot increase.
               \item For \emptiness, the result follows from the sixth item of Proposition \ref{thm:measure_prop}:
                 it is easily seen that $\nbSubIts$ strictly decreases
                 from the application conditions of \emptiness
                 and because those conditions are not satisfied anymore after the rewrite.
                 $\measIt$ and $\nbSubIts$ have been precisely defined to handle this rule.
               \item For \propsimpl, the result follows from the seventh item of Proposition \ref{thm:measure_prop}:
                 it is obvious that $\nbLits$ strictly decreases during the rewrite.
                 $\measLit$ and $\nbLits$ have been precisely defined to handle this rule.
               \item \intervalise only changes the domain of an iteration.
                 With a similar reasoning as in Lemma \ref{lem:equ},
                 one easily gets that in both branches the set $\concDom$ remains the same.
                 Furthermore, similarly to the \constraintsplit case, $\nbLits$ and $\nbSubIts$ cannot increase.
             \end{enumerate}
           \item The last two measures follow the conditions of the corresponding decreasing rules,
             easily entailing their corresponding behaviors.
             It is obvious that \constraintsplit cannot increase $\measure4{}\node$.
         \qed
         \end{itemize}
        \end{proof}
        A \emph{derivation} is a (possibly infinite) sequence of tableaux $(\tab_\numi)_{\numi\in\interval}$
        s.t. $\interval$ is either $[0..\numk]$ for some $\numk\geq0$, or $\N$ and s.t. for all $\numi>0$,
        $\tab_\numi$ is obtained from $\tab_{\numi-1}$ by applying one of the rules. 
        A derivation is \emph{fair} iff either there is $\numi\in\interval$ s.t. $\tab_\numi$
        contains an irreducible, not closed, leaf or if for all $\numi\in\interval$ and every leaf $\node$ in
        $\tab_\numi$ there is $\numj\geq\numi$ s.t. a rule is applied on $\node$ in $\tab_\numj$ 
        (i.e. no leaf can be indefinitely ``freezed'').
        \begin{theorem}[Model Completeness]
          \label{thm:completeness}
          Let $\tab_0$ be a satisfiable tableau.
          If $(\tab_\numi)_{\numi\in\interval}$ is a fair derivation then 
          there are $\numk\in\interval$ and a leaf $\node_\numk$ in $\tab_\numk$ 
          s.t. $\node_\numk$ is irreducible and not closed.
        \end{theorem}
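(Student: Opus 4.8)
The plan is to argue by contradiction, relying on the well-founded measure $\measurebranch\interp\node\tab$ and its strict decrease from Lemma~\ref{lem:decrease}. Assume that no $\tab_\numi$ of the derivation contains an irreducible and not closed leaf. Then the first alternative in the definition of fairness never holds, so the second one does: every leaf of every $\tab_\numi$ eventually has a rule applied to it. Since $\tab_0$ is satisfiable, Definition~\ref{def:tableaux_semantics} gives a model $\interp$ and a leaf $\node$ of $\tab_0$ with $\interp\models\node$. From this seed I would build an infinite chain of satisfiable nodes whose measures strictly decrease, which is absurd.

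The heart of the argument is a single generic step: from a satisfiable node $\node$ with model $\interp$, produce a satisfiable extended child $\nodeb$ together with a model $\childmodel\interp\node\nodeb$ of $\nodeb$. If $\node$ is a leaf, it is not closed (a closed leaf is unsatisfiable, contradicting $\interp\models\node$), and by fairness a rule is applied to it at some later tableau $\tab'$; the reasoning behind Proposition~\ref{thm:ext_equ}, instantiated for $\interp$, then furnishes an extended child satisfied by $\interp$. If $\node$ is not a leaf --- which occurs precisely when the previous step jumped to a \emph{companion} node via the \looping rule --- its extended children already exist, and the same reasoning provides one satisfied by $\interp$. In both cases I set $\interpj\eqdef\childmodel\interp\node\nodeb$, so that $\interpj\models\nodeb$.

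Next I would apply Lemma~\ref{lem:decrease} at the moment the relevant rule was applied to $\node$, obtaining $\measurebranch{\childmodel\interp\node\nodeb}\nodeb{\tab'}<\measurebranch\interp\node\tab$. A small but important observation is that the value of the measure of a node depends only on its label $(\nodesch\node\tab,\nodeinterp\node\tab)$ and on the chosen model, not on the index of the tableau in which the node is examined, since labels are never modified after creation. Hence this inequality, established once for a parent/extended-child pair, stays valid whenever the chain traverses that pair, and the five-component tuple strictly decreases at each step. As this tuple lives in the lexicographic product of the multiset extension of $(\N,<)$ (for the parameter component) with four copies of $(\N,<)$, the ordering is well-founded, so the infinite descending chain cannot exist --- contradiction. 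Therefore some $\tab_\numk$ contains an irreducible and not closed leaf.

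The main obstacle is the \looping rule. Because the extended child of a bud node is its companion, the chain may jump ``backwards'' to an already expanded part of the tableau, and one must preclude an endless cycle through companions. This is exactly the role of the first measure component $\measure1\interp\node$: by Definition~\ref{def:looping_schemata} a \looping step strictly lowers the interpretation of some parameter while not raising the others, so the multiset of parameter values strictly decreases; well-foundedness of the multiset ordering then forbids infinite looping. Checking that this ``backward'' step, as well as every other rule, conforms to the uniform descending-chain argument is where the genuine care lies.
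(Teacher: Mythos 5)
Your proposal is correct and uses essentially the same approach as the paper: the paper takes a satisfiable leaf of \emph{minimal} measure (obtained via Lemma~\ref{lem:equ}) and shows by Proposition~\ref{thm:ext_equ} and Lemma~\ref{lem:decrease} that it must be irreducible and not closed, while you run exactly the same ingredients as an infinite-descent contradiction, which is just the contrapositive phrasing of the same well-foundedness argument. One slip worth noting: your parenthetical ``a closed leaf is unsatisfiable'' is false in this calculus, since a leaf closed by the \looping rule can be satisfiable, but this does not damage the argument because such a leaf is a bud, Proposition~\ref{thm:ext_equ} still yields its companion as a satisfiable extended child, and your own discussion of bud/companion jumps and the decrease of $\measureAlone1\interp$ covers precisely that case.
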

        \begin{proof}
         By Lemma \ref{lem:equ}, for every model $\interp$ of $\tab_0$ and for all $\numk\in\interval$,
         $\tab_\numk$ contains a leaf $\node_\numk$ s.t. $\interp \models \node_\numk$.
         Consider such $\interp,\numk,\node_\numk$ s.t.
         $\measurebranch\interp{\node_\numk}{\tab_\numk}$ is minimal
         (exist since $\measurebranch\interp{\node_\numk}{\tab_\numk}$ is well-founded).
         By Lemma \ref{lem:equ}, $\node_\numk$ is not closed.
         Suppose that $\node_\numk$ is not irreducible.
         Then, since the derivation is fair, there is $\numl>\numk$ s.t.
         a rule is applied on $\node_\numk$ in the tableau $\tab_\numl$.
         By Proposition \ref{thm:ext_equ} there exists a satisfiable extended child $\nodeb$ of $\node_\numk$ in $\tab_\numl$
         and $\measurebranch{\childmodel\interp\node\nodeb}\nodeb{\tab_\numl}<\measurebranch\interp{\node_\numk}{\tab_\numk}$
         by Lemma \ref{lem:decrease}.
         This is impossible by minimality of $\measurebranch\interp{\node_\numk}{\tab_\numk}$.
         Hence $\node_\numk$ is irreducible.
         \qed
        \end{proof}

  \section{Looping Refinements}%
  \label{sec:termination}}%
    The notion of loop introduced in Definition \ref{def:looping_nodes} is undecidable,
    thus, in practice, we use decidable refinements of looping.
    \Longue{\begin{definition}
      A \Longue{binary }relation between schemata is a \emph{looping refinement} iff it is a subset of the looping relation.
    \end{definition}}%
    Termination proofs work by showing that the set of schemata
    which are generated by the procedure is finite up to some \Longue{(decidable) }looping refinement.
    We make precise this notion:
    \begin{definition}
      \label{def:lemma}
      \Courte{A binary relation between schemata is a \emph{looping refinement} iff it is a subset of the looping relation.}
      \CourteLongue{\let\lemma=\sch}{\def\lemma{\repr\sch}}%
      Let $\setofschemata$ be a set of schemata and $\refinement$ a looping refinement.
      A schema $\lemma\in\setofschemata$ is a \emph{$\refinement$-\representative}%
      \Longue{(or just \emph\representative when $\refinement$ is obvious from the context) }%
      w.r.t. $\setofschemata$ iff
      there is no $\sch'\in\setofschemata$ s.t. $\lemma\refinement\sch'$.
      The set of all $\refinement$-\representative{}s w.r.t. $\setofschemata$ is written $\finitesubset\setofschemata\refinement$.
      If \CourteLongue{it}{$\finitesubset\setofschemata\refinement$} is finite then
      we say that $\setofschemata$ is \emph{finite up to $\refinement$}.
    \end{definition}
    \Longue{Notice that we use the notations $\repr\sch$ and $\finitesubset\setofschemata\refinement$ as if 
    we were talking of an equivalence class and a quotient set
    \emph{but $\refinement$ is generally not an equivalence.}
    However the underlying intuition is often very close 
    and we think that using this notation makes it easier to understand the proofs,
    as soon as the reader is clearly aware that this is not an equivalence relation.}

    \subsection{Equality up to a Shift}
      \label{sec:eq_shift}
      We now present perhaps the simplest refinement of looping.
      A \emph\Shiftable is a schema, a \constraint, a pattern, a \linear or a tuple of those.
      The refinement is defined on \Shiftable\s (and not only on schemata)
      in order to handle those objects in a uniform way.
      This is useful in the termination proof of Section \ref{sec:complete}\Courte{ (and even more in \cite{rapport09})}.
      \begin{definition}
      \label{def:eqshift}
        Let $\shiftable$, $\shiftable'$ be \Shiftable\s and $\param$ a variable.
        If $\shiftable'=\shiftable\substitution{\param-\intk/\param}$ for some $\intk>0$,
        then $\shiftable'$ is \emph{equal to $\shiftable$ up to a shift of $\intk$ on $\param$},
        written $\shiftable'\stdeqshift\shiftable$ 
        (or $\shiftable'\stdeqshift_\intk\shiftable$ when we want to make $\intk$ explicit).
      \end{definition}
      \CourteLongue
      {The fact that we use syntactic equality makes the refinement less powerful but simpler to implement and easier to reason with.}
      {Notice that we use a \emph{syntactical} equality
      e.g. we do not care about associativity or commutativity of $\And$ and $\Or$ when the \Shiftable\s are schemata,
      nor do we use \constraint equivalence when the \Shiftable\s are \constraints.
      This makes this refinement less powerful but trivial to implement and easier to reason with.}
      \CourteLongue
      {It is easy to check that $\stdeqshift$ is a looping refinement when restricted to schemata having $\param$ as a parameter.}
      {\begin{proposition}
        \label{thm:eqshift_various}
        Let $\param$ be a variable, the restriction of $\stdeqshift$ to schemata 
        having $\param$ as a parameter is a looping refinement.
      \end{proposition}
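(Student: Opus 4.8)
The plan is to unfold the two definitions involved and show directly that the restricted relation is contained in the looping relation, i.e. that whenever $\sch_1\stdeqshift\sch_2$ holds for schemata $\sch_1,\sch_2$ both having $\param$ as a parameter, $\sch_1$ loops on $\sch_2$ in the sense of Definition \ref{def:looping_schemata}. So fix $\sch_1,\sch_2$ with $\sch_1=\sch_2\substitution{\param-\intk/\param}$ for some $\intk>0$. First I would check the parameter condition of Definition \ref{def:looping_schemata}: the substitution $\substitution{\param-\intk/\param}$ leaves every parameter other than $\param$ untouched and keeps $\param$ free (it still occurs in $\param-\intk$), so the hypothesis that $\param$ is a parameter of both schemata is exactly what guarantees that $\sch_1$ and $\sch_2$ have the same parameters.

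The key technical ingredient is a shift lemma relating the two propositional realizations: for every ground substitution $\anenv$ assigning a value to all parameters,
\[\envapp{\sch_2\substitution{\param-\intk/\param}}\anenv=\envapp{\sch_2}{\anenv'},\quad\text{where }\anenv'(\param)=\anenv(\param)-\intk\text{ and }\anenv'\text{ agrees with }\anenv\text{ on all other parameters},\]
and likewise $\schcstr{\sch_1}\anenv$ is true iff $\schcstr{\sch_2}\anenv'$ is true. I would prove this by induction on the pattern, following the clauses of Definition \ref{def:realization}. The literal and Boolean cases are immediate from the fact that $\expr\substitution{\param-\intk/\param}\anenv=\expr\anenv'$ for every \linear $\expr$. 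The only delicate case is the iteration $\schemaOp\var\cstr\pat$: since $\var$ is bound while $\param$ is free, the standing assumption that no variable name is simultaneously bound and free lets $\substitution{\param-\intk/\param}$ commute with the instantiation $\substitution{\inti/\var}$ used to unfold the iteration, so the index set $\{\inti\mid\cstr\substitution{\inti/\var}\anenv'\text{ valid}\}$ of $\sch_2$ coincides with the corresponding index set of $\sch_1$, and the induction hypothesis then applies to each instance $\pat\substitution{\inti/\var}$ under the extended environments $\anenv\cup\substitution{\inti/\var}$ and $\anenv'\cup\substitution{\inti/\var}$.

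With the shift lemma in hand the conclusion is immediate. Let $\interp$ be any model of $\sch_1$ and set $n=\interpenv\interp(\param)$; note $\interpenv\interp$ satisfies $\schcstr{\sch_1}$ by Definition \ref{def:semantics}. Define $\interpj$ by $\interpenv\interpj\eqdef\interpenv\interp\substitution{n-\intk/\param}$ (unchanged on the other parameters) and $\interpprop\interpj\eqdef\interpprop\interp$. By the lemma, $\interpenv\interpj$ satisfies $\schcstr{\sch_2}$, hence is an environment of $\sch_2$, and $\envapp{\sch_2}{\interpenv\interpj}=\envapp{\sch_1}{\interpenv\interp}$; since $\interpprop\interp$ satisfies the latter formula, the unchanged propositional interpretation $\interpprop\interpj=\interpprop\interp$ satisfies the former, so $\interpj$ is a model of $\sch_2$. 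Finally $\interpenv\interpj(\param)=n-\intk<n=\interpenv\interp(\param)$ because $\intk>0$, while $\interpenv\interpj(\param_l)=\interpenv\interp(\param_l)$ for every other parameter $\param_l$; this is exactly the decrease required by Definition \ref{def:looping_schemata}. Hence $\sch_1$ loops on $\sch_2$, which is what we needed.

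The hard part will be the shift lemma, and within it the iteration case: one must verify carefully that shifting the value of $\param$ inside the environment has precisely the same effect as the syntactic shift on the domains of (possibly nested) iterations whose constraints mention $\param$. This is where the discipline keeping bound and free variable names disjoint is essential, and it is the only step that is not purely bookkeeping; the rest of the argument is a direct transfer of a model along the shift.
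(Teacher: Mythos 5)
Your proposal is correct and follows essentially the same route as the paper's proof: transfer a model of $\sch_1$ to $\sch_2$ by shifting the value of $\param$ down by $\intk$ in the environment while keeping the propositional interpretation, using the identity $\envapp{\sch_1}{\interpenv\interp}=\envapp{\sch_2}{\interpenv\interpj}$. The only difference is that the paper declares this identity (and the corresponding fact for the constraint) ``obvious,'' whereas you sketch its proof by induction on the pattern, which is a legitimate filling-in of the same step.
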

        \begin{proof}
          Let $\sch_1,\sch_2$ be schemata s.t. $\sch_1\stdeqshift_\intk\sch_2$ for some $\intk>0$.
          Let $\interp$ be a model of $\sch_1$.
          We define $\interpj$ s.t. 
          $\interpenv\interpj(\param) \eqdef \interpenv\interp(\param)-\intk$,
          $\interpenv\interpj(\paramm) \eqdef \interpenv\interp(\paramm)$ for $\paramm\neq\param$
          and $\interpprop\interpj\eqdef\interpprop\interp$.
          It is obvious that $\envapp{\sch_1}{\interpenv\interp} = \envapp{\sch_2}{\interpenv\interpj}$
          and as $\interpprop\interpj=\interpprop\interp$,
          $\interpj\models\sch_2$.
          It is also obvious that $\interpenv\interpj(\param) < \interpenv\interp(\param)$.
          \qed
        \end{proof}
      }
      \Longue{
        \begin{proposition}
          \label{thm:triangle}
          For all \Shiftable\s $\shiftable$, $\shiftable_1$, $\shiftable_2$,
          if $\shiftable_1\stdeqshift\shiftable$ and $\shiftable_2\stdeqshift\shiftable$ 
          then either $\shiftable_1\stdeqshift\shiftable_2$ or 
          $\shiftable_2\stdeqshift\shiftable_1$ or $\shiftable_1=\shiftable_2$.
        \end{proposition}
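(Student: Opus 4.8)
The plan is to unfold the definition of $\stdeqshift$ and reduce the whole statement to an elementary fact about composing shift substitutions on the single parameter $\param$. By Definition~\ref{def:eqshift}, the hypotheses $\shiftable_1\stdeqshift\shiftable$ and $\shiftable_2\stdeqshift\shiftable$ provide integers $\intk_1,\intk_2>0$ with $\shiftable_1=\shiftable\substitution{\param-\intk_1/\param}$ and $\shiftable_2=\shiftable\substitution{\param-\intk_2/\param}$. The conclusion will then fall out of a comparison of $\intk_1$ and $\intk_2$, once we know how two successive shifts combine.

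The key (and essentially only) auxiliary result I would establish first is the \emph{composition of shifts}: for every \Shiftable $t$ and all $a,b\in\N$,
\[
t\substitution{\param-a/\param}\substitution{\param-b/\param}=t\substitution{\param-(a+b)/\param}.
\]
This is proved by a routine structural induction on $t$, following the definition of substitution application on linear expressions, patterns, constraints, schemata, and tuples. The only arithmetic input is the identity $(\param-a)-b=\param-(a+b)$; since $\param$ is a free parameter it is never captured by a binding iteration, so the two substitutions act uniformly on the same occurrences and compose exactly as stated.

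With this lemma in hand I would perform a three-way case split on the witnesses. If $\intk_1=\intk_2$, then $\shiftable_1=\shiftable\substitution{\param-\intk_1/\param}=\shiftable\substitution{\param-\intk_2/\param}=\shiftable_2$, which is the third disjunct. If $\intk_1<\intk_2$, set $\intk=\intk_2-\intk_1>0$; applying the composition lemma with $a=\intk_1$ and $b=\intk$ gives
\[
\shiftable_1\substitution{\param-\intk/\param}=\shiftable\substitution{\param-(\intk_1+\intk)/\param}=\shiftable\substitution{\param-\intk_2/\param}=\shiftable_2,
\]
so $\shiftable_2\stdeqshift_\intk\shiftable_1$, which is the second disjunct. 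The case $\intk_1>\intk_2$ is symmetric and yields $\shiftable_1\stdeqshift\shiftable_2$, the first disjunct. As the three cases are exhaustive, one of the disjuncts always holds.

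I do not anticipate any genuine obstacle: the statement is a ``confluence of shifts'' fact, and all the content is concentrated in the composition lemma, whose proof is mechanical. The only point deserving a word of care is that a single \Shiftable may admit several witnessing shift amounts $\intk$ (for instance if it is periodic in $\param$). This causes no trouble, since we simply fix one witness $\intk_1$ for $\shiftable_1$ and one witness $\intk_2$ for $\shiftable_2$, and the case analysis above delivers one of the three disjuncts regardless of which witnesses were chosen.
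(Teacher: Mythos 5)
Your proof is correct. The paper states Proposition~\ref{thm:triangle} without giving any proof at all (it is treated as immediate from Definition~\ref{def:eqshift}), and your argument --- the composition identity $\shiftable\substitution{\param-a/\param}\substitution{\param-b/\param}=\shiftable\substitution{\param-(a+b)/\param}$ followed by the three-way comparison of the witnesses $\intk_1,\intk_2$ --- is precisely the routine reasoning the paper takes for granted, including the correct handling of non-unique witnesses and of the fact that $\param$, being a parameter, is never captured by an iteration binder.
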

      }%
      \CourteLongue{Furthermore}{Finally}
      $\stdeqshift$ is transitive but neither reflexive\Longue{ (e.g. $\prop_\param\not\stdeqshift\prop_\param$)},
      nor irreflexive\Longue{ (e.g. $\prop_1\stdeqshift\prop_1$)}.
      It is irreflexive for \Shiftable\s containing $\param$,
      and reflexive for \Shiftable\s not containing $\param$%
      \Longue{ (in which case equality up to a shift just amounts to equality)}.
      \Longue{
        \begin{definition}
          A set of \Shiftable\s $\setofshiftables$ s.t. all its different elements are comparable 
          w.r.t. $\stdeqshift$ is called a \emph{\chain}.
          We extend the notion of \representative to \Shiftable\s:
          a \Shiftable $\shiftable$ is a \emph{\representative{}} w.r.t. a set of \Shiftable\s $\setofshiftables$ 
          iff there is no $\shiftable'\in\setofshiftables$ s.t. $\shiftable\stdeqshift\shiftable'$.
          If a \chain $\setofshiftables$ contains a \Shiftable $\shiftable$ which is a \representative w.r.t. $\setofshiftables$
          then $\setofshiftables$ is a \emph{\wfchain{}}.
        \end{definition}
        From the previous remarks a \chain has the form:
        $\dotsb\stdeqshift\shiftable_{i-1}\stdeqshift\shiftable_i\stdeqshift\shiftable_{i+1}\stdeqshift\dotsb$,
        hence justifying the name ``\chain''.
        Then, by considering all its totally comparable subsets, any set of schemata can be seen as a union of \chain\s.
        A \wfchain has the form
        $\dotsb\stdeqshift\shiftable_2\stdeqshift\shiftable_1\stdeqshift\shiftable_0$,
        where $\shiftable_0$ is a \representative w.r.t. the chain.
      }%

      We focus now on sets which are \emph{finite up to equality up to a shift}, in short ``\finshift''%
      \Longue{ (i.e. sets which are finite unions of \wfchain\s)}:
      termination proofs go by showing that the set of all schemata possibly generated by \SchDP is \finshift,
      thus ensuring that the \looping rule will eventually apply.
      To prove such results we need to reason by induction on the structure of a schema.
      To do this properly we need closure properties for \finshift sets
      i.e. if we know that two sets are \finshift,
      we would like to be able to combine them and preserve the \finshift property.
      This is generally not possible,
      e.g. for two \finshift sets of \Shiftable\s $\setofshiftables_1$ and $\setofshiftables_2$,
      the set $\setofshiftables_1\times\setofshiftables_2$
      (remember that \Shiftable\s are closed by tuple construction)
      is generally \emph{not} \finshift.
      For instance take
      $\setofshiftables_1=\{\prop_\param,\prop_{\param-1},\prop_{\param-2},,\dotsc\}$
      ($\setofshiftables_1$ is \finshift with $\stdsubset{\setofshiftables_1}=\{\prop_\param\}$)
      and $\setofshiftables_2=\{\prop_\param\}$ (which is finite and thus \finshift).
      Then $\{(\prop_\param,\prop_\param), (\prop_{\param-1},\prop_\param), (\prop_{\param-2},\prop_\param), \dotsc\}$ is not \finshift:
      indeed for every $\numi\in\N$, $(\prop_{\param-\numi},\prop_\param)$ is a \representative in $\setofshiftables_1\times\setofshiftables_2$,
      there is thus an infinite set of \representative\s.
      Consequently $\setofshiftables_1\times\setofshiftables_2$ is not \finshift.
      \Longue{This example also shows that \finshift sets are not even closed by cartesian product with a finite set. }%
      Hence we have to restrict our closure operators.
      \begin{definition}
        \label{def:deviation}
        Let $\param$ be a variable.
        A \Shiftable $\shiftable$ is \emph\arithmetic w.r.t. $\param$ iff
        for every \linear $\expr$ occurring in $\shiftable$ and containing $\param$ 
        there is $\intk\in\Z$ s.t. $\expr=\param+\intk$
        (i.e. neither $\intk.\param$ nor $\param+\var$ are allowed,
        where $\intk\in\Z$, $\intk\neq0$ and $\var\in\integervars$).

        Assume that $\shiftable$ is \arithmetic w.r.t. $\param$.
        The \emph{deviation} of $\shiftable$ w.r.t. $\param$, written $\deviation(\shiftable)$,
        is defined as
        $\deviation(\shiftable)\eqdef\max\{\intk_1-\intk_2\mid 
        \intk_1,\intk_2\in\Z,
        \param+\intk_1, \param+\intk_2\text{ occur in }\shiftable
        \}$.
        $\deviation(\shiftable)\eqdef0$ if $\shiftable$ does not contain $\param$.
        Let $\numk\in\N$, we write $\bowl\numk$ \CourteLongue{$\eqdef$}{for the set} $\{\shiftable\mid\deviation(\shiftable)\leq\numk\}$.
      \end{definition}
      \begin{theorem}
        \label{thm:congruence}
        Let $\setofshiftables_1$ and $\setofshiftables_2$ be two sets of \Shiftable\s \arithmetic 
        w.r.t. a variable $\param$.
        If $\setofshiftables_1$ and $\setofshiftables_2$ are \finshift
        then, for any $\numk\in\N$,\Longue{ the set}
        $\setofshiftables_1\times\setofshiftables_2\cap \bowl\numk$\Longue{,
        written $\setofshiftables_1\alignedAnd_\numk\setofshiftables_2$,}
        is \finshift.
      \end{theorem}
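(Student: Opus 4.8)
The plan is to bound directly the number of $\stdeqshift$-\representative{}s of $\setofshiftables_1\times\setofshiftables_2\cap\bowl\numk$. Recall from Section~\ref{sec:eq_shift} that a \finshift set is a finite union of \wfchain\s; let $a_1,\dots,a_p$ and $b_1,\dots,b_q$ be the finitely many elements of $\finitesubset{\setofshiftables_1}\stdeqshift$ and $\finitesubset{\setofshiftables_2}\stdeqshift$, i.e.\ the tops of the chains composing $\setofshiftables_1$ and $\setofshiftables_2$. Then every element of $\setofshiftables_1$ (resp.\ $\setofshiftables_2$) is $a_\numi$ (resp.\ $b_\numj$) or one of its downward shifts $a_\numi\substitution{\param-\intk/\param}$ (resp.\ $b_\numj\substitution{\param-\intk/\param}$), $\intk>0$. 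The key structural observation I would use is that a shift acts \emph{diagonally} on a tuple, $(\shiftable,\shiftable')\substitution{\param-\intk/\param}=(\shiftable\substitution{\param-\intk/\param},\shiftable'\substitution{\param-\intk/\param})$: both components slide by the same amount, so the \emph{relative offset} between the $\param$-positions of $\shiftable$ and $\shiftable'$ is a $\stdeqshift$-invariant.

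Next I would assign to each product element $(\shiftable,\shiftable')$ --- written $\shiftable=a_\numi\substitution{\param-t/\param}$ and $\shiftable'=b_\numj\substitution{\param-t'/\param}$ with $t,t'\geq0$ --- the triple consisting of $\numi$, $\numj$ and the offset $t'-t$. Writing $m_\numi$ (resp.\ $m'_\numj$) for the least integer constant added to $\param$ in $a_\numi$ (resp.\ $b_\numj$), both $m_\numi-t$ and $m'_\numj-t'$ occur as $\param$-constants of the tuple, so its deviation is at least $|(m_\numi-t)-(m'_\numj-t')|=|(m_\numi-m'_\numj)+(t'-t)|$. Hence the defining condition of $\bowl\numk$ confines the offset $t'-t$ to an interval of length $2\numk+1$ that depends only on $\numi$ and $\numj$; in particular only finitely many offsets arise for each of the $pq$ pairs $(\numi,\numj)$. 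Here the \arithmetic hypothesis is essential, since it guarantees every occurrence of $\param$ has the shape $\param+\intk$, so these least constants are well defined. If one component, say $\shiftable$, contains no $\param$, it is shift-invariant, the offset is irrelevant, and the class is then indexed by $(\numi,\numj)$ alone.

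Finally I would note that inside a fixed class the elements are exactly the diagonal downward shifts of one common tuple --- fixing $\numi$, $\numj$ and the offset leaves only the global shift amount free --- so they form a single \chain, totally ordered by that amount. The shift amounts occurring in the product form a subset of $\N$, hence admit a least one, whose associated tuple is the unique $\stdeqshift$-maximal element of the class; any \representative of the product falling in this class must be that element. Each of the finitely many classes therefore contributes at most one \representative, giving at most $pq\,(2\numk+1)$ of them in total, so $\setofshiftables_1\times\setofshiftables_2\cap\bowl\numk$ is \finshift.

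The main obstacle is the offset bound of the second step: translating the purely syntactic side condition ``$\deviation\leq\numk$'' into an arithmetic bound on $t'-t$. This is delicate because the deviation is a single $\max-\min$ ranging over the $\param$-constants of \emph{both} components simultaneously, so one has to exhibit the two witnessing constants explicitly and argue they genuinely occur, and one must separately treat the degenerate classes in which a component is free of $\param$ (where the offset carries no information but the class is already a single chain). The remaining ingredients --- the diagonality of shifts and the well-ordering argument --- are routine.
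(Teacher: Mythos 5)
Your proposal is correct and is essentially the paper's own proof: both parameterize the pairs in $\setofshiftables_1\times\setofshiftables_2\cap\bowl\numk$ by a pair of chain tops together with a relative shift offset, use the deviation bound defining $\bowl\numk$ to confine that offset to at most $2\numk+1$ values per pair of tops, and observe that each resulting class is a chain whose maximal element is the only candidate \representative{}, so that there are only finitely many \representative{}s. The remaining differences are cosmetic: the paper measures the offset via maximal rather than minimal $\param$-constants, and it establishes well-foundedness of each class by projecting onto the component chains and appealing to the hypothesis, where you invoke the well-ordering of the shift amounts in $\N$.
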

      \Longue{
        One can notice that, in the counter-example given before Definition \ref{def:deviation},
        the deviations of schemata in $\setofshiftables_1\times\setofshiftables_2$ are unbounded.
        \begin{proof}
          We construct a bijective function $f:
          \stdsubset{\setofshiftables_1}\times\stdsubset{\setofshiftables_2}\times[-\numk..\numk]
          \to\stdsubset{\setofshiftables_1\alignedAnd_\numk\setofshiftables_2}$:
          as $\stdsubset{\setofshiftables_1}$, $\stdsubset{\setofshiftables_2}$ and $[-\numk..\numk]$ are finite,
          $\stdsubset{\setofshiftables_1\alignedAnd_\numk\setofshiftables_2}$ is finite,
          hence the result.
          Informally $f$ associates to each pair of \representative\s a \representative in $\setofshiftables_1\alignedAnd_\numk\setofshiftables_2$,
          however there are as many new \representative\s as there are possible deviations
          (actually twice as many), 
          hence the dependency on $[-\numk..\numk]$.
          Let $\repr{\shiftable_1}\in\stdsubset{\setofshiftables_1}$,
          $\repr{\shiftable_2}\in\stdsubset{\setofshiftables_2}$ and $d\in[-\numk..\numk]$,
          we now construct $f(\repr{\shiftable_1},\repr{\shiftable_2},d)$.

          First of all if $\repr{\shiftable_1}$ or $\repr{\shiftable_2}$ does not contain $\param$
          then $f(\repr{\shiftable_1},\repr{\shiftable_2},d)\eqdef(\repr{\shiftable_1},\repr{\shiftable_2})$
          independently of $d$.
          Then for any pair $(\shiftable_1,\shiftable_2)\in\setofshiftables_1\alignedAnd_\numk\setofshiftables_2$
          s.t. $\shiftable_1$ or $\shiftable_2$ does not contain $\param$,
          it is easily seen that
          $(\shiftable_1,\shiftable_2)\stdeqshift (\repr{\shiftable_1},\repr{\shiftable_2})$
          (we let the reader observe that this would not necessarily be the case 
          if both $\shiftable_1$ and $\shiftable_2$ contained $\param$).
          Furthermore $(\repr{\shiftable_1},\repr{\shiftable_2})$ is a \representative.
          Indeed suppose that there is another $(\shiftable_1',\shiftable_2')$
          s.t. $(\repr{\shiftable_1},\repr{\shiftable_2})\stdeqshift(\shiftable_1',\shiftable_2')$
          then necessarily $\repr{\shiftable_1}\stdeqshift\shiftable_1'$
          which contradicts the fact that $\repr{\shiftable_1}$ is a \representative w.r.t. $\setofshiftables_1$.

          So from now on we assume that both $\repr{\shiftable_1}$ and $\repr{\shiftable_2}$ contain $\param$.
          Hence every \Shiftable $\shiftable$ s.t. $\shiftable\stdeqshift\shiftable_1$ or $\shiftable\stdeqshift\shiftable_2$
          also contains $\param$.
          As a consequence, for every \Shiftable $\shiftable$,
          $\maxshift(\shiftable) \eqdef \max\{\intk\in\Z \mid \param+\intk\text{ occurs in }\shiftable\}$ is well-defined.
          
          We first prove that 
          $\{(\shiftable_1,\shiftable_2)
            \mid
            \shiftable_1\stdeqshift\repr{\shiftable_1},
            \shiftable_2\stdeqshift\repr{\shiftable_2},
            \maxshift(\shiftable_1)-\maxshift(\shiftable_2)=d
          \}$
          is a \chain.
          Thus we prove that for all $\shiftable_1,\shiftable_1'\in\setofshiftables_1$
          and $\shiftable_2,\shiftable_2'\in\setofshiftables_2$
          s.t.
          $\shiftable_1\stdeqshift\repr{\shiftable_1}$, 
          $\shiftable_1'\stdeqshift\repr{\shiftable_1}$,
          $\shiftable_2\stdeqshift\repr{\shiftable_2}$,
          $\shiftable_2'\stdeqshift\repr{\shiftable_2}$,
          $\maxshift(\shiftable_1)-\maxshift(\shiftable_2)=d$
          and $\maxshift(\shiftable_1')-\maxshift(\shiftable_2')=d$
          there is $\intk>0$ s.t.
          either $(\shiftable_1',\shiftable_2')\stdeqshift_\intk(\shiftable_1,\shiftable_2)$
          or $(\shiftable_1,\shiftable_2)\stdeqshift_\intk(\shiftable_1',\shiftable_2')$
          or $(\shiftable_1,\shiftable_2)=(\shiftable_1',\shiftable_2')$.
          By Proposition \ref{thm:triangle}
          we have
          either $\shiftable_1\stdeqshift_\intk\shiftable_1'$,
          $\shiftable_1'\stdeqshift_\intk\shiftable_1$ or $\shiftable_1=\shiftable_1'$
          for some $\intk>0$
          and either $\shiftable_2\stdeqshift_{\intk'}\shiftable_2'$,
          $\shiftable_2'\stdeqshift_{\intk'}\shiftable_2$ or $\shiftable_2=\shiftable_2'$
          for some $\intk'>0$.
          Suppose $\shiftable_1'\stdeqshift_\intk\shiftable_1$
          then $\maxshift(\shiftable_1)-\maxshift(\shiftable_1')=\intk$.
          From
          $\maxshift(\shiftable_1)-\maxshift(\shiftable_2)=d$
          and $\maxshift(\shiftable_1')-\maxshift(\shiftable_2')=d$
          it easily follows that $\maxshift(\shiftable_2)-\maxshift(\shiftable_2')=\intk$.
          As $\intk>0$, this entails that we cannot have
          $\shiftable_2\stdeqshift\shiftable_2'$ or $\shiftable_2=\shiftable_2'$.
          Hence the only possibility is $\shiftable_2'\stdeqshift\shiftable_2$,
          and more precisely $\shiftable_2'\stdeqshift_{\intk'}\shiftable_2$ with $\intk'=\intk$.
          As a consequence $(\shiftable_1',\shiftable_2')\stdeqshift_\intk(\shiftable_1,\shiftable_2)$.
          The case $\shiftable_1\stdeqshift_\intk\shiftable_1'$ is symmetric,
          and the case $\shiftable_1=\shiftable_1'$ easily entails $\shiftable_2=\shiftable_2'$ 
          by taking $\intk=0$ in the previous equations.
          
          Then we prove that it is a \wfchain.
          Notice that if $(\shiftable_1,\shiftable_2)\stdeqshift(\shiftable_1',\shiftable_2')$
          then $\shiftable_1\stdeqshift\shiftable_1'$.
          So if a \chain
          $\dotsb\stdeqshift
          (\shiftable_1^{\numi-1},\shiftable_2^{\numi-1})\stdeqshift
          (\shiftable_1^\numi,\shiftable_2^\numi)\stdeqshift
          (\shiftable_1^{\numi+1},\shiftable_2^{\numi+1})\stdeqshift
          \dotsb$
          does not contain a \representative then one of the \chain\s
          $\dotsb\stdeqshift
          \shiftable_1^{\numi-1}\stdeqshift
          \shiftable_1^\numi\stdeqshift
          \shiftable_1^{\numi+1}\stdeqshift
          \dotsb$
          or
          $\dotsb\stdeqshift
          \shiftable_2^{\numi-1}\stdeqshift
          \shiftable_2^\numi\stdeqshift
          \shiftable_2^{\numi+1}\stdeqshift
          \dotsb$
          does not contain a \representative, either.
          By hypothesis this is false in our case.
          As a consequence there is indeed a \representative for the \chain
          $\{(\shiftable_1,\shiftable_2)
            \mid
            \shiftable_1\stdeqshift\repr{\shiftable_1},
            \shiftable_2\stdeqshift\repr{\shiftable_2},
            \maxshift(\shiftable_1)-\maxshift(\shiftable_2)=d
          \}$,
          we set $f(\repr{\shiftable_1},\repr{\shiftable_2},d)$ to be this \representative.
          It is now trivial that
          $\stdsubset{(\setofshiftables_1\alignedAnd_\numk\setofshiftables_2)}=
          f[\stdsubset{\setofshiftables_1}\times\stdsubset{\setofshiftables_2}\times[-\numk..\numk]]$:
          for any pair $(\shiftable_1,\shiftable_2)\in\setofshiftables_1\alignedAnd_\numk\setofshiftables_2$,
          $(\shiftable_1,\shiftable_2)\stdeqshift 
          f(\repr{\shiftable_1},\repr{\shiftable_2},\maxshift(\shiftable_1)-\maxshift(\shiftable_2))$
          and $f(\repr{\shiftable_1},\repr{\shiftable_2},\maxshift(\shiftable_1)-\maxshift(\shiftable_2))$
          is a \representative w.r.t. $\setofshiftables_1\alignedAnd_\numk\setofshiftables_2$.
          Notice that $|\maxshift(\shiftable_1)-\maxshift(\shiftable_2)|\leq\numk$ because
          $(\shiftable_1,\shiftable_2)\in\bowl\numk$.
          \qed
        \end{proof}
      }%
      As trivial corollaries we get (where all \Longue{the involved }\Shiftable\s are \arithmetic w.r.t. $\param$):
      \begin{itemize}
        \item $\{\sch_1\Op\sch_2\mid\sch_1\in\setofschemata_1,\sch_2\in\setofschemata_2\}\cap\bowl\numk$, 
          where $\Op\in\{\And,\Or\}$,
          is \finshift when $\setofschemata_1$ and $\setofschemata_2$ are \finshift.
        \item $\{(\schemaAnd i\cstr{\schpat\sch})\And{\schcstr\sch}\mid\sch\in\setofschemata,
          \cstr\in\asetofconstraints\}\cap\bowl\numk$
          is \finshift when $\setofschemata$ and $\asetofconstraints$ are \finshift%
          \Longue{ }.
        \Longue{\item $\{\expr_1=\expr_2\mid\expr_1\in\setofexprs_1,\expr_2\in\setofexprs_2\}\cap\bowl\numk$
          is \finshift when $\setofexprs_1$ and $\setofexprs_2$ are sets of linear expressions, \finshift
          (this corollary will be useful in the proof of Lemma \ref{thm:cstr_finite_up_to_shift},
          which explains why equality up to a shift is defined on \Shiftable\s and not only on schemata).}%
      \end{itemize}

    \subsection{Refinement Extensions}
      \label{sec:extensions}
      \CourteLongue{W}{Equality up to a shift is generally not powerful enough to detect cycles, so w}%
      e now define simple extensions that allow better detection.
      Consider for example the schema $\sch$ defined in Example \ref{ex:semantics}.
      Using \SchDP there is a branch which contains:
      $\sch'\eqdef
      \prop_1\And\schematicAnd\var1{\param-1}(\prop_\var\Implies\prop_{\var+1})\And\neg\prop_\param\And\neg\prop_{\param+1}
      \And\param\geq0\And\param-1\geq0$.
      $\sch'$ loops on $\sch$
      but $\sch'$ is not equal to $\sch$ up to a shift.
      However $\neg\prop_{\param+1}$ is pure in $\sch'$
      (i.e. $\prop_{\param+1}\not\litIn\sch'$)
      so $\neg\prop_{\param+1}$ may be evaluated to $\truevalue$.
      Therefore we obtain $\prop_1\And\schematicAnd\var1{\param-1}(\prop_\var\Implies\prop_{\var+1})\And\neg\prop_\param\And\param\geq0\And\param-1\geq0$,
      i.e. $\sch\substitution{\param-1/\param}\And\param\geq0$.
      But $\param-1\geq0$ entails $\param\geq0$ so we can remove $\param\geq0$ and finally get $\sch\substitution{\param-1/\param}$.
      \Longue\par%
      We now generalise this example, thereby introducing two new looping refinements:
      the \emph{pure literal} extension and the \emph{redundant constraint} extension%
      \Longue{ (both of them actually take an existing looping refinement and extend it into a more powerful one,
      hence the name ``extension'')}.
      \Longue{%
      We could have defined them as rules rather than looping refinements,
      however this way the results can be useful not only to \SchDP but also to any other system working with iterated schemata
      e.g. they are applicable without any modification to the system \SchCal defined in \cite{tab09}.
      \subsubsection{Pure Literals.}}%
      \Courte\par%
        As usual a literal $\lit$ is \emph{(propositionally) pure} in a formula $\aformula$ iff 
        its complement does not occur positively in $\aformula$.
        The pure literal rule is standard in propositional theorem proving:
        it consists in evaluating a literal $\lit$ to $\truevalue$ in a formula $\aformula$
        if $\lit$ is pure in $\aformula$.
        It is well-known that this operation preserves satisfiability%
        \Longue{ but it is now often omitted as looking for occurrences of a literal generally costs more than the benefits of its removal.
        In our case, however, dropping this optimization frequently results in non termination}.
        \Longue\par
        The notion of pure literal has to be adapted to schemata.
        The conditions on $\lit$ must be strengthened in order to take iterations into account.
        For instance, if $\lit=\prop_\param$ and $\sch=\schematicOr\var1{2\param} \neg \prop_\var$ 
        then $\lit$ is not pure in $\sch$
        since $\neg\prop_\var$ is the complement of $\lit$ for $\var=\param$ (and $1\leq\param\leq 2\param$).
        \CourteLongue{However}{On the other hand} $\prop_{2\param+1}$ is pure in $\sch$ (since $2\param+1\not\in[1..2\param]$).
        \Longue{It is actually easy to see that $\maybelong$ is the right tool to formalize this notion.}

        \begin{definition}
          \label{def:pure_literal}
          A literal $\lit$ is \emph{pure} in a schema $\sch$ 
          iff for every environment $\anenv$ of $\sch$,
          $\envapp\lit\anenv$ is propositionally pure in $\envapp\sch\anenv$.
        \end{definition}
        It is easily seen that $\lit$ is pure in $\sch$ iff $\lit^c\neverbelong\sch$,
        thus by decidability of $\maybelong$, it is decidable to determine if a literal is pure or not.

      The substitution of an indexed proposition $\prop_{\expr_1,\dotsc,\expr_\numk}$ by a pattern $\pat'$
      in a pattern $\pat$, written $\pat\substitution{\pat'/\prop_{\expr_1,\dotsc,\expr_\numk}}$,
      is defined as follows:
      \CourteLongue{
        $\prop_{\expr_1,\dotsc,\expr_\numk}\substitution{\pat'/\prop_{\expr_1,\dotsc,\expr_\numk}}\eqdef\pat'$;
        $\propq_{\exprf_1,\dotsc,\exprf_\numk}\substitution{\pat'/\prop_{\expr_1,\dotsc,\expr_\numk}}\eqdef\propq_{\exprf_1,\dotsc,\exprf_\numk}$  
        if $\prop\neq\propq$ or $\exprf_\numi\neq\expr_\numi$ for some $\numi\in[1..\numk]$;
        $(\pat_1\Op\pat_2)\substitution{\pat'/\prop_{\expr_1,\dotsc,\expr_\numk}} \eqdef
        \pat_1\substitution{\pat'/\prop_{\expr_1,\dotsc,\expr_\numk}}\Op\pat_2\substitution{\pat'/\prop_{\expr_1,\dotsc,\expr_\numk}}$;
        $(\schemaOp\var\cstr\pat)\substitution{\pat'/\prop_{\expr_1,\dotsc,\expr_\numk}}\\\eqdef
        \schemaOp\var\cstr{\pat\substitution{\pat'/\prop_{\expr_1,\dotsc,\expr_\numk}}}$.
      }
      {\[\begin{aligned}
        \prop_{\expr_1,\dotsc,\expr_\numk}\substitution{\pat'/\prop_{\expr_1,\dotsc,\expr_\numk}}  &\eqdef \pat'\\
        \propq_{\exprf_1,\dotsc,\exprf_\numk}\substitution{\pat'/\prop_{\expr_1,\dotsc,\expr_\numk}} &\eqdef\propq_{\exprf_1,\dotsc,\exprf_\numk}
        \hspace*{1.2cm}\text{ if }\prop\neq\propq\text{ or }\exprf_\numi\neq\expr_\numi\text{ for some }\numi\in[1..\numk]\\
        (\pat_1\Op\pat_2)\substitution{\pat'/\prop_{\expr_1,\dotsc,\expr_\numk}} &\eqdef
        \pat_1\substitution{\pat'/\prop_{\expr_1,\dotsc,\expr_\numk}}\Op\pat_2\substitution{\pat'/\prop_{\expr_1,\dotsc,\expr_\numk}}
        \hspace*{1.55cm}(\Op\in\{\Or,\And\})\\
        (\schemaOp\var\cstr\pat)\substitution{\pat'/\prop_{\expr_1,\dotsc,\expr_\numk}} &\eqdef
        \schemaOp\var\cstr{\pat\substitution{\pat'/\prop_{\expr_1,\dotsc,\expr_\numk}}}
        \hspace*{3.55cm}(\bigOp\in\{\bigvee,\bigwedge\})\\
      \end{aligned}
      \]}%
      \Longue{Notice that this is a trivial syntactic substitution, 
      e.g. $(\neg\prop_1\And\schematicOr\var1\param\prop_\var)\substitution{\trueformula/\prop_1}
      =\neg\trueformula\And\schematicOr\var1\param\prop_\var$
      and not $\neg\trueformula\And(\trueformula\Or\schematicOr\var2\param\prop_\var)$.
      Actually the latter would be a mistake because we do not know whether $\param\geq1$ or not. }%
      \CourteLongue{For}{The definition naturally extends to} a schema $\sch$\CourteLongue{, we set}{ with}
      $\sch\substitution{\pat'/\prop_{\expr_1,\dotsc,\expr_\numk}} \eqdef \schpat\sch\substitution{\pat'/\prop_{\expr_1,\dotsc,\expr_\numk}}$.
      \CourteLongue{\par
        It is easy to show that for a literal $\lit$ which is pure
        in a schema $\sch$, if $\sch$ (resp. $\sch\substitution{\trueformula/\lit}$) has a model $\interp$
        then $\sch\substitution{\trueformula/\lit}$ (resp. $\sch$) has a model $\interpj$
        s.t. $\interpenv\interp(\param)=\interpenv\interpj(\param)$ for every parameter $\param$ of $\sch$.}
      {\begin{proposition}\label{thm:pure_lit}
        Let $\lit$ be a literal pure in a schema $\sch$. 
        If $\sch$ has a model $\interp$
        then $\sch\substitution{\trueformula/\lit}$ has a model $\interpj$
        s.t. $\interpenv\interp(\param)=\interpenv\interpj(\param)$ for every parameter $\param$ of $\sch$.

        Conversely if $\sch\substitution{\trueformula/\lit}$ has a model $\interp$
        then $\sch$ has a model $\interpj$
        s.t. $\interpenv\interp(\param)=\interpenv\interpj(\param)$ for every parameter $\param$ of $\sch$.
      \end{proposition}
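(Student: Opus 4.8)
The plan is to keep the environment fixed and to modify only the propositional part of the model so as to force $\lit$ to be true, exploiting the fact that purity makes the relevant atom occur with a single polarity. Throughout I would fix an environment $\anenv$ of $\sch$ and write $l_0\eqdef\envapp\lit\anenv$ for the concrete propositional literal obtained from $\lit$; since the variables of $\lit$ are parameters of $\sch$ (and in particular $\lit$ contains no bound variable), $l_0$ is well defined and independent of any iteration binding. By Definition~\ref{def:pure_literal}, purity of $\lit$ is equivalent to $\lit^c\neverbelong\sch$, so the complement of $l_0$ never occurs positively in $\envapp\sch\anenv$; equivalently, the atom underlying $l_0$ occurs in $\envapp\sch\anenv$ with one polarity only. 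Since the substitution leaves $\schcstr\sch$ untouched, the same environment may be used for $\sch$ and for $\sch\substitution{\trueformula/\lit}$.

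First I would prove a commutation lemma by structural induction on patterns: for every environment $\anenv$, the realization $\envapp{\sch\substitution{\trueformula/\lit}}\anenv$ is obtained from $\envapp\sch\anenv$ by replacing \emph{some} occurrences of $l_0$ by $\trueformula$, with no other change. The only nontrivial base case is the indexed proposition matching $\lit$ syntactically: it realizes to $l_0$ and is rewritten to $\trueformula$, while every other literal is left untouched; because $\lit$ has no bound variable, each substituted position indeed realizes to exactly $l_0$. The inductive steps for $\Op$ and for $\schemaOp\var\cstr\pat$ are immediate from Definition~\ref{def:realization}.

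For the forward implication, let $\interp$ be a model of $\sch$ and define $\interpj$ by $\interpenv\interpj\eqdef\interpenv\interp$, letting $\interpprop\interpj$ agree with $\interpprop\interp$ everywhere except that the atom underlying $l_0$ is assigned the value that makes $l_0$ true. As that atom occurs with a single polarity in $\envapp\sch{\interpenv\interp}$, the realization is monotone (resp. antitone) in it, so flipping its value in the favourable direction preserves truth and $\interpprop\interpj\models\envapp\sch{\interpenv\interp}$. Under $\interpprop\interpj$ the literal $l_0$ is true, hence $l_0$ and $\trueformula$ are interchangeable; by the commutation lemma $\envapp{\sch\substitution{\trueformula/\lit}}{\interpenv\interp}$ then takes the same value as $\envapp\sch{\interpenv\interp}$, so $\interpj\models\sch\substitution{\trueformula/\lit}$ with $\interpenv\interpj(\param)=\interpenv\interp(\param)$ for every parameter $\param$.

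The converse is symmetric: from a model $\interp$ of $\sch\substitution{\trueformula/\lit}$ I would make the same flip, noting that the atom underlying $l_0$ still occurs with a single polarity in $\envapp{\sch\substitution{\trueformula/\lit}}{\interpenv\interp}$ (the substitution only deletes occurrences of $l_0$ and never introduces $l_0^c$), whence $\interpprop\interpj\models\envapp{\sch\substitution{\trueformula/\lit}}{\interpenv\interp}$; the commutation lemma transfers this back to $\envapp\sch{\interpenv\interp}$, giving $\interpj\models\sch$. A negative literal $\lit$ is handled identically, forcing the underlying atom to be false instead. I expect the main obstacle to be precisely the mismatch between the \emph{syntactic} substitution $\substitution{\trueformula/\lit}$ and the \emph{semantic} notion of purity: distinct literals of $\sch$ may realize to the very same atom $l_0$, so the substitution rewrites only part of its occurrences. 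The monotonicity argument is exactly what renders this discrepancy harmless, and it is the reason the purity hypothesis is indispensable.
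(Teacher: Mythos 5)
Your proof is correct and takes essentially the same route as the paper's: both fix the environment, pass to the propositional realization, and use purity to bridge the mismatch between the syntactic schema-level substitution and the realized occurrences of $\envapp\lit\anenv$ that it fails to catch. The only difference is presentational: the paper invokes the classical propositional pure-literal result as a black box together with the identity $\envapp\sch\anenv\substitution{\trueformula/\envapp\lit\anenv}=\envapp{\sch\substitution{\trueformula/\lit}}\anenv\substitution{\trueformula/\envapp\lit\anenv}$, whereas you inline that result via the monotonicity/flipping argument and your commutation lemma, which formalize the same insight.
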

        \begin{proof}
          Let $\interp$ be a model of $\sch$.
          $\envapp\sch{\interpenv\interp}$ is thus satisfiable.
          As $\lit$ is pure in $\sch$, $\envapp\lit{\interpenv\interp}$ is pure in $\envapp\sch{\interpenv\interp}$
          (and thus in $\envapp{\sch\substitution{\trueformula/\lit}}{\interpenv\interp}$).
          So by the classical result that the satisfiability of a propositional formula is preserved when removing a pure literal,
          $\envapp\sch{\interpenv\interp}\substitution{\trueformula/\envapp\lit{\interpenv\interp}}$ is satisfiable.
          As $\envapp\sch{\interpenv\interp}\substitution{\trueformula/\envapp\lit{\interpenv\interp}}
          =\envapp{\sch\substitution{\trueformula/\lit}}{\interpenv\interp}\substitution{\trueformula/\envapp\lit{\interpenv\interp}}$,
          $\envapp{\sch\substitution{\trueformula/\lit}}{\interpenv\interp}$ is also satisfiable.
          We define $\interpprop\interpj$ as one of its models
          and $\interpenv\interpj$ as $\interpenv\interp$.
          $\interpj$ is obviously a model of $\sch\substitution{\trueformula/\lit}$
          and indeed $\interpenv\interp(\param)=\interpenv\interpj(\param)$ for every parameter $\param$ of $\sch$.
          The proof of the converse is symmetric.
          \qed
        \end{proof}}
      A schema $\sch$ in which all pure literals have been substituted with $\trueformula$
      is written $\purified(\sch)$.
      \begin{definition}
        \label{def:pure_ext}
        Let $\refinement$ be a looping refinement.
        We call \emph{the pure extension} of $\refinement$ the relation $\refinement'$:
        $\sch_1\refinement'\sch_2\Leftrightarrow\purified(\sch_1)\refinement\purified(\sch_2)$.
      \end{definition}
      \CourteLongue
        {$\refinement'$ is easily proved to be a looping refinement.}
        {\begin{proposition}
          \label{thm:pure_ext_loops}
          The pure extension of a looping refinement is a looping refinement.
          \end{proposition}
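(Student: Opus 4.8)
The plan is to prove the stronger statement that $\refinement'$ is \emph{contained} in the looping relation; by definition this is exactly what it means for $\refinement'$ to be a looping refinement, the looping relation itself being given by Definition \ref{def:looping_schemata}. So I would fix two schemata $\sch_1,\sch_2$ with $\sch_1\refinement'\sch_2$, which by Definition \ref{def:pure_ext} unfolds to $\purified(\sch_1)\refinement\purified(\sch_2)$, and show that $\sch_1$ loops on $\sch_2$.

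The key tool is an iterated form of Proposition \ref{thm:pure_lit}. Since $\purified(\sch)$ is obtained from $\sch$ by finitely many substitutions of $\trueformula$ for a pure literal, applying Proposition \ref{thm:pure_lit} once per such substitution and composing the resulting model correspondences yields: \emph{$\sch$ has a model $\interp$ if and only if $\purified(\sch)$ has a model $\interp'$ with $\interpenv{\interp'}(\param)=\interpenv\interp(\param)$ for every parameter $\param$}, and symmetrically. Call this property $(\star)$. Both directions are available precisely because Proposition \ref{thm:pure_lit} is itself stated as an equivalence that preserves parameter values.

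With $(\star)$ in hand the argument is short. Because $\refinement$ is a looping refinement, $\purified(\sch_1)\refinement\purified(\sch_2)$ gives that $\purified(\sch_1)$ loops on $\purified(\sch_2)$. Now take any model $\interp$ of $\sch_1$. By $(\star)$ applied to $\sch_1$ there is a model $\interp'$ of $\purified(\sch_1)$ with the same parameter values. By the looping hypothesis there is a model $\interpj'$ of $\purified(\sch_2)$ and an index $\numj$ with $\interpenv{\interpj'}(\param_\numj)<\interpenv{\interp'}(\param_\numj)$ and $\interpenv{\interpj'}(\param_\numl)\leq\interpenv{\interp'}(\param_\numl)$ for $\numl\neq\numj$. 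Finally, by $(\star)$ applied to $\sch_2$ in the converse direction there is a model $\interpj$ of $\sch_2$ with $\interpenv\interpj(\param)=\interpenv{\interpj'}(\param)$ for every parameter $\param$. Chaining the three parameter equalities through these inequalities gives $\interpenv\interpj(\param_\numj)<\interpenv\interp(\param_\numj)$ and $\interpenv\interpj(\param_\numl)\leq\interpenv\interp(\param_\numl)$ for $\numl\neq\numj$, i.e. $\sch_1$ loops on $\sch_2$, which is the desired conclusion.

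The main obstacle is the bookkeeping behind $(\star)$ rather than the final chaining. Substituting $\trueformula$ for a pure literal can erase the last occurrence of a parameter, so $\purified(\sch_i)$ may have strictly fewer parameters than $\sch_i$; I would therefore read Proposition \ref{thm:pure_lit} as producing environments that still assign a value to \emph{all} parameters of the original schema (a parameter that disappears from the pattern being simply left unconstrained), so that the decrease condition of Definition \ref{def:looping_schemata} is stated over the common parameter list of $\sch_1$ and $\sch_2$ and is transported intact across the three correspondences. One should also record that purification terminates — finitely many literals, each eventually fixed as the process reaches a fixpoint — which is exactly what justifies the finite iteration of Proposition \ref{thm:pure_lit} used to establish $(\star)$.
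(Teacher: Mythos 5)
Your proof is correct and takes essentially the same route as the paper's: unfold Definition \ref{def:pure_ext}, transport a model of $\sch_1$ to $\purified(\sch_1)$ via Proposition \ref{thm:pure_lit}, invoke the looping hypothesis on the purified schemata, transport back to $\sch_2$ via the converse direction of Proposition \ref{thm:pure_lit}, and chain the parameter (in)equalities. The only difference is bookkeeping: you make explicit the finite iteration of Proposition \ref{thm:pure_lit} (one application per substituted pure literal) and the possible disappearance of parameters under purification, both of which the paper silently absorbs into a single application of that proposition.
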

          \begin{proof}
            Consider $\sch_1,\sch_2$ s.t. $\sch_1\refinement'\sch_2$,
            i.e. $\purified(\sch_1)\refinement\purified(\sch_2)$.
            Let $\interp$ be a model of $\sch_1$.
            By Proposition \ref{thm:pure_lit}, there exists a model $\interp'$ of $\purified(\sch_1)$ 
            s.t. $\interpenv{\interp'}(\param)=\interpenv\interp(\param)$ for every parameter $\param$ of $\sch_1$.
            Then, as $\refinement$ is a looping refinement and by Definition \ref{def:looping_schemata},
            there is a model $\interpj'$ of $\purified(\sch_2)$ 
            s.t. $\interpenv{\interpj'}(\param) < \interpenv{\interp'}(\param)$
            for some parameter $\param$ of $\purified(\sch_1)$ (and thus of $\sch_1$)
            and $\interpenv{\interpj'}(\param) \leq \interpenv{\interp'}(\param)$ for other parameters of $\sch_1$.
            Then by Proposition \ref{thm:pure_lit}, there exists a model $\interpj$ of $\sch_2$ 
            s.t. $\interpenv{\interpj'}(\param)=\interpenv\interpj(\param)$ for every parameter $\param$ of $\sch_2$.
            From a model $\interp$ of $\sch_1$, we constructed a model $\interpj$ of $\sch_2$
            s.t. $\interpenv\interpj(\param) < \interpenv\interp(\param)$ for some parameter $\param$ of $\sch_1$
            and $\interpenv\interpj(\param) \leq \interpenv\interp(\param)$ for other parameters,
            i.e. we proved that $\sch_1$ loops on $\sch_2$.
            \qed
          \end{proof}}

      \CourteLongue{Finally we describe the redundant constraint extension:}{\subsubsection{Redundant Constraints.}
        This extension is justified by the fact that \SchDP often leads to constraints of the form 
        $\param>0$, 
        then $\param>0\And\param-1>0$,
        then $\param>0\And\param-1>0\And\param-2>0$,
        etc.
        Such constraints contain redundant information,
        which can be an obstacle to the detection of cycles in a proof.}
        \begin{definition}
          Any normal form of a schema $\sch$ by the following rewrite rules:
          \Courte{\vspace*{-0.2cm}}
          \[\begin{aligned}
            \cstr_1\And\dotsb\And\cstr_\numk&\to\cstr_1\And\dotsb\And\cstr_{\numk-1}
            &&\text{if }\{\cstr_1,\dotsc,\cstr_{\numk-1}\}\models\cstr_\numk\\
            \cstr&\to\falseconstraint
            &&\text{if }\cstr\text{ is unsatisfiable}\\
            \end{aligned}
          \Courte{\vspace*{-0.2cm}}
          \]
          is called a \emph\cstrirred schema of $\sch$.
        \end{definition}
        By decidability of satisfiability in linear arithmetic,
        it is easy to compute a \cstrirred schema of $\sch$.

        \begin{definition}
          \label{def:cstr_ext}
          Let $\refinement$ be a looping refinement.
          We call \emph{the \cstrirred extension} of $\refinement$ the relation $\refinement'$ s.t.
          for all $\sch_1,\sch_2$,
          $\sch_1\refinement'\sch_2$ iff
          there exists $\sch_1'$ (resp. $\sch_2'$) a \cstrirred schema of $\sch_1$ (resp. $\sch_2$) s.t.
          $\sch_1'\refinement\sch_2'$.
        \end{definition}
        \CourteLongue
          {Once again $\refinement'$ is easily proved to be a looping refinement.}
          {\begin{proposition}
            \label{thm:cstr_ext_loops}
            The \cstrirred extension of a looping refinement is a looping refinement.
          \end{proposition}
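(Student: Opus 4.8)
The plan is to reduce the whole statement to one semantic fact: replacing a schema by a \cstrirred schema of it changes neither its parameters nor its set of models. Granting this, the conclusion is a short unfolding of Definitions \ref{def:cstr_ext} and \ref{def:looping_schemata}, entirely parallel to the proof for the pure extension (Proposition \ref{thm:pure_ext_loops}).

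So first I would isolate and prove the key lemma: if $\sch'$ is a \cstrirred schema of $\sch$, then $\sch$ and $\sch'$ have the same parameters and, for every interpretation $\interp$, $\interp$ is a model of $\sch$ iff $\interp$ is a model of $\sch'$. Since $\sch'$ is obtained from $\sch$ by finitely many rewrite steps, I would argue by induction on the length of the derivation, so it suffices to treat a single step. Both rules act only on constraints and leave the propositional skeleton of the pattern intact; hence the free variables of the pattern --- that is, the parameters --- are unchanged, and for a fixed environment $\anenv$ the realization $\envapp\sch\anenv$ is unaffected. It then remains to check that the set of environments is preserved. For the step $\cstr_1\And\dotsb\And\cstr_\numk\to\cstr_1\And\dotsb\And\cstr_{\numk-1}$ with $\{\cstr_1,\dotsc,\cstr_{\numk-1}\}\models\cstr_\numk$, the two constraints are logically equivalent, hence satisfied by exactly the same ground substitutions; for the step $\cstr\to\falseconstraint$ with $\cstr$ unsatisfiable, both constraints are unsatisfiable, so both schemata have no environment at all. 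In either case the models coincide, proving the lemma.

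With the lemma in hand I would conclude as follows. Assume $\sch_1\refinement'\sch_2$. By Definition \ref{def:cstr_ext} there are \cstrirred schemata $\sch_1'$ of $\sch_1$ and $\sch_2'$ of $\sch_2$ with $\sch_1'\refinement\sch_2'$, hence, $\refinement$ being a looping refinement, $\sch_1'$ loops on $\sch_2'$. By the lemma $\sch_1,\sch_1'$ (resp. $\sch_2,\sch_2'$) share parameters and models, so in particular $\sch_1$ and $\sch_2$ have the same parameters. Now take any model $\interp$ of $\sch_1$; it is a model of $\sch_1'$, so by Definition \ref{def:looping_schemata} there is a model $\interpj$ of $\sch_2'$ with $\interpenv\interpj(\param_\numj)<\interpenv\interp(\param_\numj)$ for some $\numj$ and $\interpenv\interpj(\param_\numl)\leq\interpenv\interp(\param_\numl)$ for every $\numl\neq\numj$; by the lemma $\interpj$ is a model of $\sch_2$. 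Thus $\sch_1$ loops on $\sch_2$, i.e. $\refinement'$ is contained in the looping relation.

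The only genuinely delicate point is the parameter-preservation half of the lemma, namely ensuring the constraint rewriting never erases a free variable occurring in the pattern. This is immediate when the rules are applied to $\schcstr\sch$, since the pattern is then left untouched; if one also allows the rules to rewrite an iteration domain, one must additionally verify that dropping a redundant conjunct keeps the domain logically equivalent --- so that the integers it selects, and hence every realization, are unchanged --- and that no free pattern variable is thereby eliminated. Everything else is a routine unfolding of the definitions.
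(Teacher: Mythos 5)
Your proof is correct and follows essentially the same route as the paper: a model-preservation fact for \cstrirred schemata (which the paper merely asserts as ``easy to show''), followed by the same unfolding of Definition \ref{def:looping_schemata} used for the pure extension in Proposition \ref{thm:pure_ext_loops}. Your lemma is in fact slightly stronger than the paper's --- models coincide exactly rather than merely corresponding with equal parameter values --- which is legitimate here since the constraint rewriting, unlike purification, leaves every propositional realization unchanged.
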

          \begin{proof}
            It is easy to show that if $\sch$ (resp. a \cstrirred of $\sch$) has a model $\interp$
            then any \cstrirred of $\sch$ (resp. $\sch$) has a model $\interpj$
            s.t. $\interpenv\interp(\param)=\interpenv\interpj(\param)$ for every parameter $\param$ of $\sch$.
            Then the proof goes exactly the same way as in the proof of Proposition \ref{thm:pure_ext_loops}.
            \qed
          \end{proof}}%

      \Longue{
        \subsubsection{Generalisation.}
          We can generalize Propositions \ref{thm:pure_ext_loops} and \ref{thm:cstr_ext_loops}:
          \begin{proposition}
            Let $\refinement$ be a looping refinement, and $\star$ a binary relation among schemata s.t. 
            for all schemata $\sch_1,\sch_2$ if $\sch_1\star\sch_2$
            then the satisfiability of $\sch_1$ is equivalent to the satisfiability of $\sch_2$,
            preserving the values of the parameters.
            The relation $\refinement'$ s.t.
            for all $\sch_1,\sch_2$,
            $\sch_1\refinement'\sch_2$ iff
            there exists $\sch_1',\sch_2'$ s.t. $\sch_1\star\sch_1'$, $\sch_2\star\sch_2'$ and
            $\sch_1'\refinement\sch_2'$,
            is a looping refinement.
          \end{proposition}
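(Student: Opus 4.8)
The plan is to reproduce, at the abstract level of $\star$, the argument already used for Proposition \ref{thm:pure_ext_loops}, since the only feature of $\purified$ actually exploited there was that it preserves satisfiability together with the values of the parameters --- which is precisely the hypothesis now imposed on $\star$. Concretely, I must show that $\refinement'$ is contained in the looping relation, i.e. that $\sch_1\refinement'\sch_2$ entails that $\sch_1$ loops on $\sch_2$ in the sense of Definition \ref{def:looping_schemata}.

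First I would unfold the definition of the relation: from $\sch_1\refinement'\sch_2$ I obtain witnesses $\sch_1',\sch_2'$ with $\sch_1\star\sch_1'$, $\sch_2\star\sch_2'$ and $\sch_1'\refinement\sch_2'$. Let $\interp$ be an arbitrary model of $\sch_1$. Using the hypothesis on $\star$ applied to $\sch_1\star\sch_1'$, I transport $\interp$ into a model $\interp'$ of $\sch_1'$ whose parameters take the same values as those of $\interp$, i.e. $\interpenv{\interp'}(\param)=\interpenv\interp(\param)$ for every parameter $\param$. Since $\refinement$ is a looping refinement, $\sch_1'$ loops on $\sch_2'$; applying Definition \ref{def:looping_schemata} to the model $\interp'$ yields a model $\interpj'$ of $\sch_2'$ such that $\interpenv{\interpj'}(\param)<\interpenv{\interp'}(\param)$ for some parameter $\param$ and $\interpenv{\interpj'}(\param)\leq\interpenv{\interp'}(\param)$ for all the others. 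Finally, transporting $\interpj'$ back across $\star$ via $\sch_2\star\sch_2'$ produces a model $\interpj$ of $\sch_2$ with $\interpenv\interpj(\param)=\interpenv{\interpj'}(\param)$ for every parameter. Chaining the three (in)equalities gives a model $\interpj$ of $\sch_2$ with $\interpenv\interpj(\param)<\interpenv\interp(\param)$ for some parameter and $\interpenv\interpj(\param)\leq\interpenv\interp(\param)$ for the others, which is exactly the defining condition for $\sch_1$ to loop on $\sch_2$.

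The delicate points are bookkeeping rather than conceptual. I must ensure that all four schemata share the same list of parameters, so that the comparisons in Definition \ref{def:looping_schemata} are meaningful; this follows from reading ``preserving the values of the parameters'' as asserting that $\star$-related schemata have the same parameters and that models can be moved in either direction while keeping those values fixed. The only genuine subtlety is the \emph{direction} in which the hypothesis on $\star$ is used: on the left I need ``model of $\sch_1$ $\Implies$ model of $\sch_1'$'', whereas on the right I need ``model of $\sch_2'$ $\Implies$ model of $\sch_2$''; both are available because the satisfiability equivalence provided by $\star$ is a biconditional assumed to preserve parameter values symmetrically, exactly as Proposition \ref{thm:pure_lit} did for $\purified$. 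I expect no further obstacle, the statement being a strict generalisation in which $\star$ plays the role that $\purified$ (resp. the \cstrirred transformation) played in Propositions \ref{thm:pure_ext_loops} and \ref{thm:cstr_ext_loops}.
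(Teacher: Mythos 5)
Your proposal is correct and is essentially the paper's own argument: the paper leaves this proposition unproved precisely because it is the abstraction of the proof given for Proposition \ref{thm:pure_ext_loops}, and your transport-across-$\star$ argument (model of $\sch_1$ to model of $\sch_1'$, apply $\refinement$, then back from $\sch_2'$ to $\sch_2$, chaining the parameter (in)equalities) reproduces that proof verbatim with $\star$ in place of $\purified$. Your remark about needing both directions of the satisfiability equivalence is exactly the role played by the two symmetric halves of Proposition \ref{thm:pure_lit}, so there is no gap.
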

          And we can generalize Definitions \ref{def:pure_ext} and \ref{def:cstr_ext}:
          $\refinement'$ is called the \emph{$\star$-extension} of $\refinement$.

          Of course this construction has an interest only if $\refinement'$ catches more looping cases than $\refinement$.
          It can be seen as working with normal forms of schemata w.r.t. $\star$
          which can be better suited to $\refinement$ than their non-normal counterparts.
          From the two previous definitions and from the requirement that satisfiability ``fits well'' with $\star$,
          it can be observed that extensions would be seen in some other context as just \emph{optimisations}
          (see e.g. the pure literal rule, or the remark about normal forms).
          In the context of schemata, those are generally more than 
          just optimizations as they may be required for \emph{termination}.
          Interestingly enough circumscribing those extensions to the looping rule allows us to
          keep a high-level description of the main proof system and a modular presentation of looping.
        }%

  \CourteLongue
  {\section{A Decidable Class: \Elementary Schemata}}
  {\section{Decidable Classes}}
    \label{sec:complete}
    \Longue{We now present some classes of schemata for which \SchDP terminates.
    \subsection{\Elementary Schemata}}%
      \label{sec:elem_def}
      \begin{definition}[\Elementary Schema]
        \label{def:monadic} \label{def:limited_progression}\label{def:aligned}
        An iteration $\aniteration$ is \emph{framed} iff 
        there are two expressions $\expr_1,\expr_2$ s.t. $\cstr\Equiv\expr_1\leq\var\And\var\leq\expr_2$.
        $[\expr_1..\expr_2]$ is called the \emph{frame} of the iteration.
        \Longue\par%
        A schema $\sch$ is:
        \begin{itemize}
          \item \emph{Monadic} iff all indexed propositions occurring in $\sch$ have only one index.
          \item \emph{Framed} iff all iterations occurring in it are framed.
          \item \emph{Aligned on $[\expr_1..\expr_2]$} iff it is framed and all iterations have the same frame $[\expr_1..\expr_2]$.
          \item \emph{\Arithmetic{}} iff it is \arithmetic w.r.t. every variable occurring in it.
          \item \emph\Elementary iff
            it has a unique parameter $\param$, it is monadic,
            \arithmetic
            and aligned on $[\intk..\param-\intl]$ for some $\intk,\intl\in\Z$.
        \end{itemize}
        \Longue{The definitions extend to a node $\node$ of a tableau $\tab$ by considering its schema $\nodesch\node\tab$.}%
      \end{definition}
      Notice that \elementary schemata allow the nesting of iterations.
      But they are too weak to express the binary multiplier presented in the Introduction
      (since only monadic propositions are considered).
      \CourteLongue{However $\schematicAnd\var1\param{\schematicOr\varj1\param{(\prop_\var\Implies\propq_\varj)}}
      \And\schematicAnd\var1\param{\neg\propq_\var}\And\schematicOr\var1\param{\prop_\var}$,
      for instance, is a \elementary schema.}
      {\begin{example}
        $\schematicAnd\var1\param{\schematicOr\varj1\param{(\prop_\var\Implies\propq_\varj)}}
        \And\schematicAnd\var1\param{\neg\propq_\var}\And\schematicOr\var1\param{\prop_\var}$
        is \elementary.
      \end{example}
      }

      We divide \constraintsplit into two disjoint rules:
      \emph{framed}-\constraintsplit (resp. \emph{non framed}-\constraintsplit)
      denotes \constraintsplit with the restriction that 
      $\cstrsplitlhm$ (following the notations of the rule) is framed (resp. not framed).
      We consider the following strategy $\strat$ for applying the extension rules
      on a \elementary schema:%
      \label{def:strat}
      \begin{enumerate}
        \item First only framed-\constraintsplit applies until irreducibility.
        \item Then all other rules except \instantiaterule apply until irreducibility
          with the restriction that \propsimpl rewrites $\prop_{\expr_1}$
          iff $\expr_1$ contains no variable other than the parameter of the schema%
          \Longue{ (notice that there is only one index because the schema is monadic)}.
        \item Finally only \instantiaterule applies until irreducibility,
          with the restriction that if the unfolded iteration is framed
          then $\IRupbnd$ 
          (in the definition of \instantiaterule)
          is the upper bound of the frame.
          We then go back to 1.
      \end{enumerate}
      For the \looping rule we use equality up to a shift with its pure and \cstrirred extensions%
      \Longue{ (it is trivial that the order in which the extensions are done does not matter)}.
      It is easy to prove that $\strat$ preserves completeness.

      \intervalise and \emptiness never apply when the input schema is \elementary.
      Indeed let $\aniteration$ be an iteration of the schema. 
      $\cstr$ cannot contain an expression of the form $\numk.\expr$, hence \intervalise cannot apply.
      No variable other than $\var$ or the parameter can be free in $\cstr$
      (due to the frame of the form $[\intk..\param-\intl]$),
      thus \emptiness cannot apply.
      However \propsimpl may introduce non framed iterations,
      but no variable other than $\var$ or the parameter can be free in $\cstr$
      because \propsimpl only applies if $\expr_1$ contains no variable other than the parameter of the schema.
      \Longue{All this shall become clear in the next section.}%

    \Longue{\subsection{Termination of \SchDP for \Elementary Schemata}}%
      \label{sec:elem_term}
      \Courte{
        \begin{theorem}
          \label{thm:terminate_regular}
          $\strat$ terminates on every \elementary schema.
        \end{theorem}
        \begin{proof}(Sketch)
      }%
      \CourteLongue{The proof}
      {The proof that $\strat$ terminates for \elementary schemata}
      goes by showing that the set 
      $\{\nodeschfull\node\tab\mid\node\text{ is a node of}\CourteLongue{$ $}{\text{ }}\tab\}$
      --- i.e. the set of schemata generated all along the procedure ---
      is \Longue{(roughly\footnote{This set will actually be restricted to \emph{alignment nodes},
      see Definition \ref{def:alignment_node}.}) }%
      finite up to the \cstrirred and pure extensions of equality up to a shift.
      As $\nodeschfull\node\tab=\schpat{\nodesch\node\tab}\And\schcstr{\nodesch\node\tab}\And\nodeinterpsch\node\tab$,
      this set is equal to
      $\{\schpat{\nodesch\node\tab}\And\schcstr{\nodesch\node\tab}\And\nodeinterpsch\node\tab\mid
      \node\text{ is a node of }\tab\}$.
      So the task can approximately be divided into four:
      prove that the set of patterns is finite up to a shift\Longue{ (Lemma \ref{lem:main})},
      prove that the set of constraints is finite up to a shift\Longue{ (Lemma \ref{thm:cstr_finite_up_to_shift})},
      prove that the set of partial interpretations is finite up to a shift
      \Longue{(Lemma \ref{thm:purity}, Corollary \ref{thm:lit_finite_up_to_shift}) }%
      and combine the three results thanks to Theorem \ref{thm:congruence}%
      \Longue{ (Corollary \ref{thm:combine})}.

      \Longue{\subsubsection{Tracing \SchDP.}}
      Among those tasks, the hardest is the first one,
      because it requires an induction on the structure of $\schpat{\nodesch\node\tab}$.
      For this induction to be achieved properly we need to ``trace'' 
      the evolution under $\strat$ of every subpattern of $\schpat{\nodesch\node\tab}$.
      A subpattern can be uniquely identified by its position.
      So we extend \SchDP into \TSchDP (for \Emph Traced \SchDP),
      by adding to the pair $(\nodesch\node\tab,\nodeinterp\node\tab)$ labelling nodes in \SchDP
      a third component containing a set of positions of $\schpat{\nodesch\node\tab}$.
      Along the execution of the procedure,
      this subpattern may be moved, duplicated, deleted, some context may be added around it, some of its subpatterns may be modified.
      Despite all those modifications, we are able to follow the subpattern thanks to the set of positions in the labels.

      As usual, a position is a finite sequence of natural numbers,
      $\emptyPos$ denotes the empty sequence,
      $\seq_1.\seq_2$ denotes the concatenation of $\seq_1$ and $\seq_2$ and $\posOrder$ denotes the prefix ordering. 
      The \emph{positions} of a pattern $\pat$ are defined as follows: 
      $\emptyPos$ is a position in $\pat$;
      if $\pos$ is a position in $\pat$ then $1.\pos$ is a position in 
      $\neg\pat$, $\schemaAnd\var\cstr\pat$ and $\schemaOr\var\cstr\pat$;
      let $\numi\in\{1,2\}$, 
      if $\pos$ is a position in $\pat_\numi$ then $\numi.\pos$ is a position in 
      $\pat_1\Or\pat_2$ and $\pat_1\And\pat_2$.

        For two sequences $\seq_1,\seq_2$ s.t. 
        $\seq_2$ is a prefix of $\seq_1$, $\relpos{\seq_2}{\seq_1}$ is the sequence 
        s.t. $\seq_2.(\relpos{\seq_2}{\seq_1})=\seq_1$.
        In particular for two positions $\pos_1,\pos_2$ s.t. $\pos_2$ is a prefix of $\pos_1$,
        $\relpos{\pos_2}{\pos_1}$ can be seen as the position \emph{relatively to $\pos_2$}
        of the subterm in position $\pos_1$\Longue{ in $\sch$}.
        \begin{definition}[\TSchDP]
          A \emph{\TSchDP tableau} $\tab$ is the same as a \SchDP tableau except that a node $\node$
          is labeled with a triple 
          $(\nodesch\node\tab,\nodeinterp\node\tab,\nodepos\node\tab)$
          where $\nodepos\node\tab$ is a set of positions in $\schpat{\nodesch\node\tab}$.
          \TSchDP keeps the behavior of \SchDP for $\nodesch\node\tab$ and $\nodeinterp\node\tab$,
          we only describe the additional behavior for $\nodepos\node\tab$ as follows:
          $\pos\to\pos_1,\dotsc,\pos_k$ means that $\pos$ is deleted and $\pos_1,\dotsc,\pos_k$ are added to $\nodepos\node\tab$.
          \begin{itemize}
            \item Splitting rules and the \propsimpl rewrite rule leave $\nodepos\node\tab$ as is.
            \item Rewrite rules. 
              We write $\aposq$ for the position of the subpattern of $\schpat\sch$ which is rewritten.
              We omit \emptiness as it never applies.
              \begin{itemize}
                \item \algebraic. For $\pos>\aposq$:
                  \Courte{\vspace*{-0.2cm}}%
                      \[\begin{aligned}
                        \pos&\to\aposq.(\relpos1{(\relpos\aposq\pos)}) &&
                        \parbox{7.8cm}{for rules where $\pat$ occurs on both sides of the rewrite
                        (following the notations of Definition \ref{def:rules}),
                        and if $\pos$ is the position of a subpattern of $\pat$}\\
                        \pos&\to\emptyset && \text{otherwise}\\
                      \end{aligned}\]
                  \Courte{\vspace*{-0.2cm}}
                \item \instantiaterule. 
                  \CourteLongue
                  {For $\pos>\aposq:\quad \pos\to\ \aposq.1.(\relpos\aposq\pos)\sep \aposq.2.1.(\relpos\aposq\pos)$}
                  {\[\pour \pos>\aposq:\quad \pos\to\ \aposq.1.(\relpos\aposq\pos)\sep \aposq.2.1.(\relpos\aposq\pos)\]}
              \end{itemize}
          \end{itemize}
          \Longue{%
          Let $\node,\nodeb$ be nodes of a \TSchDP-tableau $\tab$ s.t. $\nodeb\childparent\node$.
          For two patterns $\pat_1,\pat_2$, we write $\pat_1\trace^\nodeb_\tab\pat_2$ iff
          $\pat_1=\schpat{\nodesch\node\tab}|_{\pos_1}$
          and $\pat_2=\schpat{\nodesch\nodeb\tab}|_{\pos_2}$ 
          for some positions $\pos_1\in\nodepos\node\tab$ and $\pos_2\in\nodepos\nodeb\tab$.
          }
        \end{definition}
        \Courte{From now on, $\tab$ is a \TSchDP tableau whose root schema is \elementary of parameter $\param$
        and of alignment $[\intk..\param-\intl]$ for some $\intk,\intl\in\Z$.}%
        \Longue{
          Notice that $\strat$ is naturally extended to \TSchDP tableaux.

          The \emph{stripped} of a \TSchDP tableau
          is the tree obtained by removing the last component (i.e. the set of positions) of each of its nodes' label.
          The following proposition is trivial:
          \begin{proposition}
            \label{prop:schdptschdp}
            (i) If $\tab$ is a \TSchDP tableau then its stripped is a \SchDP tableau.
            (ii) Conversely if $\tab$ is a \SchDP tableau of root $(\sch,\emptyset,\trueconstraint)$,
            and $\pos$ is a position in $\schpat\sch$,
            then there is a unique \TSchDP tableau $\tab_\pos$ of root $(\sch,\emptyset,\trueconstraint,\{\pos\})$,
            s.t. the stripped of $\tab_\pos$ is equal to $\tab$.
          \end{proposition}
          $\tab_\pos$ is called the \emph{decorated} of $\tab$ w.r.t. $\pos$.
        }
  
      \Longue{\subsubsection{Alignment Nodes.}}
        The set $\{\nodeschfull\node\tab\mid\node\text{ is a node of }\tab\}$
        is actually \emph{not} finite up to a shift.
        We have to restrict ourselves to a particular kind of nodes, called \emph{alignment nodes}.
        \CourteLongue{Then}{Eventually,}
        $\{\nodeschfull\node\tab\mid\node\text{ is an \emph{alignment} node of }\tab\}$
        will indeed be finite up to a shift.
        \Longue{\par From now on, $\tab$ is a \TSchDP tableau whose root schema is \elementary of parameter $\param$
        and of alignment $[\intk..\param-\intl]$ for some $\intk,\intl\in\Z$.}%
        \Longue{\begin{definition}[Alignment Node]}%
          \label{def:alignment_node}
          A node of $\tab$ is an \emph{alignment node} iff it is irreducible by step \alignStepNb of $\strat$\Longue{ (see page \pageref{def:strat})}.
        \Longue{\end{definition}}%
        \CourteLongue{%
          It is easy to check that every alignment node is framed and aligned on $[\numk..\param-\numl-\numj]$ for some $\numj>0$. 
          Thus every alignment node is \elementary.
          Furthermore, by irreducility w.r.t. the \propsplit and \propsimpl rules,
          the parameter $\param$ only occurs in the domain of the iteration 
          (otherwise the corresponding literal would be added into $\literalset$).
        }
        {
          \begin{proposition}
            \label{thm:step1}
            Let $\node,\nodeb$ be nodes of $\tab$ 
            s.t. $\nodeb$ is obtained by applying step 3 on $\node$.
            (i) Every iteration that occurs in $\nodesch\nodeb\tab$ occurs in $\nodesch\node\tab$.
            (ii) Furthermore if $\node$ is aligned on $[\expr_1..\expr_2]$ for some expressions $\expr_1,\expr_2$,
            then either $\schcstr{\nodesch\nodeb\tab}\models\expr_1>\expr_2$
            or $\schcstr{\nodesch\nodeb\tab}\models\expr_1\leq\expr_2$.
          \end{proposition}
          \begin{proof}
            (i) is trivial as only \constraintsplit can apply.
            It applies only if $\schcstr{\nodesch\node\tab}\And\forall\var\neg\cstr$
            is satisfiable (following the notations of the rule).
            If it is not the case then we have immediately $\schcstr{\nodesch\node\tab}\models\expr_1\leq\expr_2$, hence (ii).
            Otherwise \constraintsplit can apply and (ii) is obvious.
            \qed
          \end{proof}
          \begin{proposition}
            \label{thm:step2}
            Let $\node,\nodeb$ be nodes of $\tab$ 
            s.t. $\nodeb$ is obtained by applying step 2 on $\node$.
            If an iteration $\aniteration$ occurs in $\nodesch\nodeb\tab$
            then there is $\pat'$ s.t. $\schemaOp\var\cstr\pat'$ occurs in $\nodesch\node\tab$.
          \end{proposition}
          \begin{proof}
            Either $\aniteration$ comes from the rewrite of $\pat$ into $\pat'$ by rules of step 2
            (in which case the result is obvious),
            or it is new and has been introduced by the rules.
            We show that the latter case is actually impossible.
            By observing the conclusion of each rule that can apply in step 2,
            only \propsimpl can introduce new iterations
            (as \emptiness and \intervalise cannot apply),
            so suppose that $\aniteration$ was introduced by \propsimpl.
            By definition of $\strat$, $\cstr$ must have the form:
            $\delta.\param+\intk_1\neq \param+\intk_2\And \var=0$ 
            where $\delta\in\{0,1\}$, $\intk_1,\intk_2\in\N$ (and $\param$ is the only parameter of the schema).
            But then (non framed) \constraintsplit must have applied on $\aniteration$
            (it can indeed apply because if the condition of application was not fulfilled,
            then the domain of the iteration would be valid,
            and \algebraic would have removed it).
            $\aniteration$ is removed in the right branch of \constraintsplit,
            so we focus on the left branch:
            due to the added constraint,
            $\context(\sch_1)\Implies\exists\var\cstr$ (following the notations of \algebraic)
            is valid.
            Furthermore, as $\var$ was a fresh variable when \propsimpl applied, $\pat$ does not contain $\var$.
            Thus \algebraic must have applied and removed the iteration.
            Consequently $\aniteration$ cannot have been introduced by \propsimpl.
            \qed
          \end{proof}
          \begin{proposition}
            \label{thm:step3}
            Let $\node,\nodeb$ be nodes of $\tab$ 
            s.t. $\nodeb$ is obtained by applying step 3 on $\node$.
            If $\node$ is aligned on $[\expr_1..\expr_2]$,
            and $\schcstr{\nodesch\node\tab}\models\expr_1\leq\expr_2$,
            then $\nodeb$ is aligned on $[\expr_1..\expr_2-\intq]$, for some $\intq>0$.
          \end{proposition}
          \begin{proof}
            As $\schcstr{\nodesch\node\tab}\models\expr_1\leq\expr_2$, \instantiaterule can apply,
            and thus turn all the frames into $[\expr_1..\expr_2-1]$.
            Notice that we may also have $\schcstr{\nodesch\node\tab}\models\expr_1\leq\expr_2-\intq$ for some $\intq>0$,
            in which case \instantiaterule can apply $\intq$ times more per iteration.
            \qed
          \end{proof}

          \begin{lemma}
            \label{cor:aligned}
            An alignment node $\node$ of $\tab$ is aligned on $[\intk..\param-\intl-\intj]$ for some $\intj\in\N$.
            Furthermore if an alignment node $\nodeb\childparent\node$
            is aligned on $[\intk..\param-\intl-\intj']$ for some $\intj'\in\N$,
            then $\intj'>\intj$.
          \end{lemma}
          \begin{proof}
            The result is proved by induction on the number of alignment nodes above $\node$.
            The base case follows from the fact that the root of $\tab$ is \elementary
            and thus aligned on $[\intk..\param-\intl]$.
            By Propositions \ref{thm:step1} (i) and \ref{thm:step2}
            applying step 1 and then step 2 preserves the alignment.
            Let $\node'$ be an alignment node s.t. $\node\childparent\node'$,
            and there is no alignment node between $\node$ and $\node'$.
            By induction $\node'$ is aligned on $[\intk..\param-\intl-\intj]$ for some $\intj\in\N$.
            Because $\node'$ is an alignment node, \constraintsplit must have applied between $\node'$ and $\node$.
            Thus we have either $\schcstr{\nodesch\node\tab}\models\intk>\param-\intl-\intj$
            or $\schcstr{\nodesch\node\tab}\models\intk\leq\param-\intl-\intj$,
            by Proposition \ref{thm:step1} (ii).
            In the first case there are no more iterations and every subsequent node is trivially aligned.
            In the second case, by Proposition \ref{thm:step3},
            every node after step 3 is aligned on $[\intk..\param-\intl-\intj']$ for some $\intj'>\intj$.
            Then, once again, by Propositions \ref{thm:step1} (i) and \ref{thm:step2},
            applying step 1 and step 2 preserves the alignment,
            so every next alignment node has the expected alignment.
            \qed
          \end{proof}
          When an alignment node $\node$ of $\tab$ is aligned on $[\intk..\param-\intl-\intj]$ for some $\intj\in\N$,
          we call $\node$ a \emph{$\intj$-alignment node}.
          \begin{corollary}
            \label{cor:aligned_node_elementary}
            Every alignment node of $\tab$ is \elementary.
          \end{corollary}
          \begin{proof}
            We have to check that no new parameter is introduced,
            that the schema is still monadic, still \arithmetic and still aligned on $[\intk..\param-\intl]$ for some $\intk,\intl\in\Z$.
            The alignment is an obvious consequence of Lemma \ref{cor:aligned}.
            The ``monadicity'' is trivially preserved.
            The only way a new parameter could be introduced is when a connective binding a variable is removed.
            But it is easily seen that each rule which removes such a connective
            also removes the pattern in which the variable is bound,
            so no bound variable can become free.
            Finally the schema remains \arithmetic because a new arithmetic expression
            can only be introduced in \SchDP via an instantiation in \instantiaterule
            (or \intervalise with $\intl.\expr_1$ and $\intk.\expr_2$, but it cannot apply).
            As a \elementary schema is translated w.r.t. every variable,
            every expression occurring in it is either an integer
            or has the form $\var+\intk$ where $\var$ is a variable and $\intk\in\Z$,
            Instantiating a variable in an integer of course does not change the integer.
            Instantiating $\var$ in $\var+\intk$ with an integer turns the expression into another integer.
            Instantiating $\var$ in $\var+\intk$ with another expression $\var'+\intk'$,
            turns the expression into $\var'+\intk'+\intk$, which preserves the form of the expression.
            Hence in all cases \arithmetic property of the schema is preserved.
            \qed
          \end{proof}
          \begin{lemma}
            \label{cor:aligned_node_uniform}
            Let $\tab$ be a tableau whose root schema is \elementary of parameter $\param$.
            For every alignment node $\node$ of $\tab$,
            $\param$ only occurs in the domains of iterations.
          \end{lemma}
          \begin{proof}
            We have to show that indices of all literals do not contain $\param$.
            Suppose that $\nodesch\node\tab$ contains a literal $\lit$ whose index contains $\param$.
            We first show that we have either $\lit\alwaysoccur\nodeinterpsch\node\tab\And\schcstr{\nodesch\node\tab}$
            or $\lit^c\alwaysoccur\nodeinterpsch\node\tab\And\schcstr{\nodesch\node\tab}$.
            Indeed suppose it is not the case.
            We show that \propsplit can apply,
            i.e. that $\lit\alwaysoccur{\nodesch\node\tab}$ or $\lit^c\alwaysoccur{\nodesch\node\tab}$ (1),
            and neither $\lit\maybelong\nodeinterpsch\node\tab\And\schcstr{\nodesch\node\tab}$
            nor $\lit^c\maybelong\nodeinterpsch\node\tab\And\schcstr{\nodesch\node\tab}$ (2):
            \begin{enumerate}
              \item Notice that this is not because $\lit$ occurs in $\nodesch\node\tab$
                that $\lit^c\alwaysoccur{\nodesch\node\tab}$ or $\lit\alwaysoccur{\nodesch\node\tab}$:
                indeed if $\lit$ occurs in an iteration, there can be an environment where this iteration is empty,
                so $\lit$ does not necessarily occur in the corresponding propositional realization.
                But as $\node$ is an alignment node,
                \constraintsplit has applied in step 1,
                adding the constraint that either all iterations were empty, or no iteration was empty.
                In the first case, no iteration remains 
                (because \algebraic must have applied in step 2)
                so $\lit$ necessarily occurs outside an iteration,
                and thus $\lit^c\alwaysoccur{\nodesch\node\tab}$ or $\lit\alwaysoccur{\nodesch\node\tab}$.
                In the second case, we know by Proposition \ref{thm:step2},
                that if the non-emptiness of iterations was true before step 2,
                then it is also true after step 2,
                i.e. at $\node$.
                So we have indeed $\lit^c\alwaysoccur{\nodesch\node\tab}$ or $\lit\alwaysoccur{\nodesch\node\tab}$,
                and \propsplit indeed applies.

              \item Suppose we have either $\lit\maybelong\nodeinterpsch\node\tab\And\schcstr{\nodesch\node\tab}$
                or $\lit^c\maybelong\nodeinterpsch\node\tab\And\schcstr{\nodesch\node\tab}$.
                As we supposed that neither $\lit\alwaysoccur\nodeinterpsch\node\tab\And\schcstr{\nodesch\node\tab}$
                nor $\lit^c\alwaysoccur\nodeinterpsch\node\tab\And\schcstr{\nodesch\node\tab}$,
                this means that there exists a literal $\lit'\in\nodeinterp\node\tab$ 
                satisfying the property \explicitref that
                it has the same propositional symbol as $\lit$,
                not the same index in general,
                but this index may be the same in some environments
                (e.g. $\lit=\prop_\param$ and $\nodeinterp\node\tab=\{\prop_1\}$).
                Then, as $\lit'\in\nodeinterp\node\tab$, \propsimpl has necessarily applied on $\lit$
                by stating the disequality of the indices of $\lit$ and $\lit'$.
                However it cannot be valid that those indices are the same,
                as this would entail
                $\lit\alwaysoccur\nodeinterpsch\node\tab\And\schcstr{\nodesch\node\tab}$
                or $\lit^c\alwaysoccur\nodeinterpsch\node\tab\And\schcstr{\nodesch\node\tab}$.
                So the disequality necessarily holds.
                This is easily seen that it is possible for one $\lit'$,
                but it is not possible \emph{for all} literals in $\nodeinterp\node\tab$ satisfying \explicitref.
                Indeed this would contradict the assumption that
                $\lit\maybelong\nodeinterpsch\node\tab\And\schcstr{\nodesch\node\tab}$
                or $\lit^c\maybelong\nodeinterpsch\node\tab\And\schcstr{\nodesch\node\tab}$.
                This can be done formally by an induction on the number of literals satisfying \explicitref.
                So if this was not possible then the iteration would have been turned into its neutral element by \algebraic,
                and so every occurrence of $\lit$ would have been removed.
                This contradicts the initial assumption on $\lit$.  
            \end{enumerate}
            So we suppose that \propsplit has applied.
            Now, by definition of $\strat$, every occurrence of $\lit$ found in $\nodesch\node\tab$
            satisfies the conditions for the application of \propsimpl in $\strat$
            (as the node is \arithmetic, an index cannot contain two distinct variables).
            As $\lit\alwaysoccur\nodeinterpsch\node\tab\And\schcstr{\nodesch\node\tab}$
            or $\lit^c\alwaysoccur\nodeinterpsch\node\tab\And\schcstr{\nodesch\node\tab}$,
            there are $\lit_1,\dots,\lit_\intq\in\nodeinterp\node\tab$ 
            of indices $\expr_1,\dots,\expr_\intq$
            s.t. all of them have the same propositional symbol as $\lit$, 
            and $\schcstr{\nodesch\node\tab}\Implies\bigvee_{\inti\in1..\intq}\expr=\expr_\inti$ is valid,
            where $\expr$ is the index of $\lit$.
            Thus \propsimpl must have applied on $\lit$ with all those literals,
            introducing iterations stating $\expr\neq\expr_1$, \dots, $\expr\neq\expr_\intq$.
            The outermost iteration has thus necessarily be removed by \algebraic
            and $\lit$ must also have been removed before we reach step 3.
            \qed
          \end{proof}
        }%
        \CourteLongue{%
          It can be shown (see \cite{rapport09} for details) that each of the steps 1-3 terminates. 
          Thus every branch $\abranch$ containing a node $\node$ is either finite 
          or contains an alignment node $\nodeb\childparent\node$,
          i.e. \emph{an alignment node is always reached}.
        }
        {\begin{lemma}
          \label{thm:each_step_terminates}
          Each of the steps 1, 2 and 3 terminates.
        \end{lemma}
          \begin{proof}
            \noindent
            \begin{itemize}
              \item Step 1: as already seen, framed-\constraintsplit applies at most once.
              \item Step 2: 
                \propsplit can add new literals to the set of literals of a node.
                However this is done finitely many times,
                as it is easily seen that there are finitely many literals $\lit$ s.t. $\lit\alwaysoccur\sch$ or $\lit^c\alwaysoccur\sch$.
                For each atom $\prop_{\expr_1}$ s.t. $\expr_1$ contains no variable other than the parameter of the schema,
                \propsimpl applies as many times as there are literals with proposition symbol $\prop$ in the set of literals.
                We just saw that this last number cannot grow infinitely,
                and the number of atoms in $\sch$ cannot increase because \instantiaterule is not allowed in step 2.
                Finally, non-framed \constraintsplit applies as many times as there are non-framed iterations which is precisely the number of times 
                where \propsimpl can apply.
              \item Step 3: only \instantiaterule can apply.
                This terminates because there are finitely many iterations in a schema,
                and because if $\expr_1,\expr_2$ are expressions, no constraint can entail $\expr_1\leq\expr_2-\intq$ for every $\intq\geq0$.
                Notice that if \constraintsplit could apply in the meantime it would not terminate because
                constraints could be modified and thus there could be infinitely many $\IRupbnd$ s.t.
                $\instantiateRuleCondition$ is valid
                (following the notations of \instantiaterule).
                \qed
            \end{itemize}
          \end{proof}
          \begin{corollary}
            \label{thm:alignment_reached}
            Let $\abranch$ be a branch of $\tab$
            containing a node $\node$ then either $\abranch$ is finite or it contains an alignment node $\nodeb\childparent\node$,
            i.e. an alignment node is always reached.
          \end{corollary}
        }

      \Longue{\subsubsection{Main Proof.}}
        \Courte{Once this preliminary work is done, we can tackle the four aforementioned tasks.
        Finiteness (up to a shift) of $\{\schpat{\nodesch\node\tab}\mid
        \node\text{ is an alignment node of }\tab\}$ is proved by induction:
        \TSchDP enables to reason by induction 
        and Theorem \ref{thm:congruence} enables to make use 
        of the inductive hypotheses to prove each inductive case.
        Finiteness of $\{\schcstr{\nodesch\node\tab}\mid \node\text{ is an alignment node of }\tab\}$
        is proved thanks to the \cstrirred extension of equality up to a shift.
        Finiteness of $\{\nodeinterpsch\node\tab\mid \node\text{ is an alignment node of }\tab\}$
        is proved thanks to the pure literal extension of equality up to a shift.
        Finally Theorem \ref{thm:congruence} enables to combine the three results.
        }
        \Longue{
        The unfolding rules of \SchDP may introduce infinitely many distinct literals, 
        e.g. from $\schematicAnd\var1\param\prop_\var$ we generate $\prop_\param, \prop_{\param-1},\ldots$. 
        In principle this obviously prevents termination, but the key point is that
        (as shown by Lemma \ref{thm:purity}) these literals will eventually become pure,
        which ensures that they will not be taken into account by the looping rule.
          \begin{definition}
            \label{def:maxit}
            Let $\sch$ be a \elementary schema.
            Let $A(\sch)$ be the set $\{\intq\in\Z\mid\intq\text{ is the index of a literal in }\sch\}$,
            we write $\minbase(\sch)$ for $\min(A(\sch))$ 
            and $\maxbase(\sch)$ for $\max(A(\sch))$.

            Let $B(\sch)$ be the set $\{\intq\in\Z\mid\var+\intq\text{ is the index of a literal in an iteration of }\sch\}$
            (it is a subset of $\Z$, by limited progression).
            We write $\minit(\sch)=\min(B(\sch))$
            and $\maxit(\sch)=\max(B(\sch))$.
          \end{definition}
        \begin{proposition}
          \label{thm:unfolding_inserts_literals}
          Let $\node,\nodeb$ be nodes of $\tab$ s.t. $\nodeb\childparent\node$.
          Then every literal of $\nodesch\nodeb\tab$ whose index is an integer occurs in $\nodesch\node\tab$.
          Any literal occurring in any node and whose index is an integer, occurs in the root schema $\sch$ of $\tab$.
          Consequently its index belongs to $[\minbase(\sch)..\maxbase(\sch)]$.
        \end{proposition}
        \begin{proof}
          First it easily seen that if a literal occurs after application of any rule 
          \emph{other than} \instantiaterule,
          then it already occurred before the application of the rule.
          This is not the case with \instantiaterule which can introduce a new literal,
          due to the substitution in its conclusion.
          Due to the restriction of \instantiaterule in $\strat$,
          this substitution replaces a variable with the last rank of an iteration.
          Furthermore \instantiaterule only applies on alignment nodes.
          By Lemma \ref{cor:aligned}, it is known that such nodes are aligned
          and that the last rank of their iterations depends on the parameter.
          Hence every literal that is introduced by substituting a variable with this last rank,
          cannot have an integer as index.
          So if a literal whose index is an integer occurs after application of any rule 
          (including \instantiaterule),
          then it already occurred before the application of the rule.

          Finally by induction on the length of the derivation,
          it is obvious that any literal occurring in any node and whose index is an integer,
          occurs in the root schema of $\tab$.
          \qed
        \end{proof}
          For the sake of simplicity we assume that \propsplit only applies on 
          $\prop_{\expr_1,\dots,\expr_\intk}$
          if $\prop_{\expr_1,\dots,\expr_\intk}$ occurs in $\nodesch\node\tab$
          (notice that we can have $\prop_{\expr_1,\dots,\expr_\intk}\alwaysoccur\nodesch\node\tab$
          without $\prop_{\expr_1,\dots,\expr_\intk}$ occurring in $\nodesch\node\tab$,
          e.g. $\prop_1\alwaysoccur\schematicAnd\var1\param\prop_\var\And\param\geq1$).
          This simplifies much some technical details,
          and it can be proved that this is not restrictive.
          \begin{lemma}
            \label{thm:purity}
            Let $\sch$ be the root schema of $\tab$.
            There is $\numj_0\in\N$ s.t. for every $\intj$-alignment node $\node$ of $\tab$,
            if $\intj\geq\intj_0$ then
            every literal in $\nodeinterp\node\tab$ of index $\param+\intq$
            where $\intq<\minit(\sch)-\intl-\intj$ or $\intq>\maxit(\sch)-\intl-\intj$
            is pure in $\nodesch\node\tab$.
          \end{lemma}
          \begin{proof}
            Let $\lit\in\nodeinterp\node\tab$.
            $\lit$ is pure in $\nodesch\node\tab$
            iff $\lit^c\not\maybelong\nodesch\node\tab$,
            i.e. iff $\exists\param(\schcstr{\nodesch\node\tab}\And\occurFormula_{\lit^c}(\nodesch\node\tab))$
            (where $\occurFormula_{\lit^c}(\nodesch\node\tab)$ is defined just before Proposition \ref{thm:always_occur_decidable})
            does \emph{not} hold, by Proposition \ref{thm:always_occur_decidable}.
            It is easily seen that, in our case
            (for the sake of simplicity we assume $\lit=\prop_{\param+\intq}$,
            the case $\neg\prop_{\param+\intq}$ is similar):
            \[\begin{aligned}
              \occurFormula_\lit(\nodesch\node\tab)=
              &\bigvee\{\exists\var(\intk\leq\var\And\var\leq\param-\intl-\intj\And\param+\intq=\var+\intq')
              \mid\neg\prop_{\var+\intq'}\litIn\nodesch\node\tab\}\\
              &\Or\bigvee\{\param+\intq=\intq'
              \mid\neg\prop_{\intq'}\litIn\nodesch\node\tab\}\\
              &\Or\bigvee\{\param+\intq=\param+\intq'
              \mid\neg\prop_{\param+\intq'}\litIn\nodesch\node\tab\}
            \end{aligned}\]
            But as $\node$ is an alignment node,
            if there were literals $\neg\prop_{\intq'}\litIn\nodesch\node\tab$
            (resp. $\neg\prop_{\param+\intq'}\litIn\nodesch\node\tab$),
            then \propsimpl would have applied.
            Thus either such literals would have been eliminated
            or the corresponding constraint $\param+\intq=\intq'$
            (resp. $\param+\intq=\param+\intq'$)
            would not hold in $\schcstr{\nodesch\node\tab}$.
            So it only remains to prove that the following does not hold:
            \[\exists\param\left(\schcstr{\nodesch\node\tab}\And\bigvee\{\exists\var(\intk\leq\var\And\var\leq\param-\intl-\intj\And\param+\intq=\var+\intq')
              \mid\neg\prop_{\var+\intq'}\litIn\nodesch\node\tab\}\right)\]
            This amounts to:
            \[\exists\param\left(\schcstr{\nodesch\node\tab}\And\bigvee\{\intk+\intq'\leq\param+\intq\And\param+\intq\leq\param-\intl-\intj+\intq'
              \mid\neg\prop_{\var+\intq'}\litIn\nodesch\node\tab\}\right)\]
            For every $\intq'$ s.t. $\neg\prop_{\var+\intq'}\litIn\nodesch\node\tab$,
            we have $\intq'\leq\maxit(\sch)$,
            by definition of $\maxit(\sch)$.
            So if $\intq>\maxit(\sch)-\intl-\intj$,
            then the above formula does not hold and we get the result.

            Now if $\intq<\minit(\sch)-\intl-\intj$ then $\lit$ is not pure in general, 
            however we can find $\numj_0\in\N$ s.t. if $\numj\geq\numj_0$ then it is actually impossible to have $\intq<\minit(\sch)-\intl-\intj$.
            We show that literals s.t. $\intq<\minit(\sch)-\intl-\intj$ can only be literals of the root schema $\sch$,
            so once all of them are pure, no other literal s.t. $\intq<\minit(\sch)-\intl-\intj$ will be introduced.
            Therefore we take $\intj_0$ to be the minimal $\intj$ s.t. $\param+\intq>\param+\intl-\intj+\maxit(\sch)$.
            First notice that, as $\lit$ has been introduced in $\nodeinterp\nodeb\tab$ by \propsplit at some node $\nodeb$,
            and thanks to the restriction made on \propsplit just before the lemma,
            $\lit$ was occurring in $\nodesch\nodeb\tab$.
            Now either this literal was already occurring in the root schema
            or it has been introduced by an \instantiaterule.
            As $\node$ is aligned on $\intk..\param-\intl-\intj$,
            all literals that have been introduced so far by \instantiaterule
            have an index of the form $\param-\intl-\intj'+\intq'$ 
            where $\intj'<\intj$ and $\intq'\in B(\sch)$
            (see Definition \ref{def:maxit} for the definition of $B(\sch)$).
            As $\minit(\sch)=\min(B(\sch))$ and $\intq<\minit(\sch)-\intl-\intj$,
            $\lit$ cannot have been introduced by \instantiaterule.
            Thus $\lit$ is indeed a literal of the root schema.
            Hence we can take $\intj_0$ as above
            (informally, iterations will be unfolded until all literals of the root schema are pure,
            when this is done we have our $\intj_0$).
            \qed
          \end{proof}
          \begin{corollary}
            \label{thm:lit_finite_up_to_shift}
            Let $\sch$ be a \elementary schema of parameter $\param$ and $\tab$ a tableau of root schema $\sch$,
            then $\left\{\nodeinterpsch\node\tab\mid\node\text{ is an alignment node}\right\}$
            is finite up to the pure extension of equality up to a shift on $\param$.
          \end{corollary}
          \begin{proof}
            It amounts to prove that
            $\literalset:= \bigAnd \{ \lit\in\nodeinterp\node\tab \mid
            \node\text{ is an alignment node},$ $\lit\text{ is not pure in } 
            \nodesch\node\tab\And\nodeinterpsch\node\tab\}$
            is finite up to a shift on $\param$.
            For every proposition symbol $\prop$ and
            every $\intq\in[\minit(\nodesch\node\tab)..\maxit(\nodesch\node\tab)]$,
            we define the set 
            $C(\intq,\prop):=\{\prop_{\param-\intl-\intj+\intq}\in\nodeinterp\node\tab\mid
            \text{ $\node$ is a $\numj$-alignment node},
            \numj\geq \numj_0\}$.
            $D(\intq,\prop)$ denotes the same set with $\neg\prop_{\param-\intl-\intj+\intq}$.
            $E$ is the set of literals that occurred before a $\numj$-alignment node with $\numj\leq\numj_0$.
            Finally $F:=\{
            \prop_\intq\in\nodeinterp\node\tab\mid
            \intq\in\Z,
            \node\text{ is a $\numj$-alignment node},
            \numj\geq\numj_0\}$.
            It is clear that:
            \[\literalset=\bigcup_{\intq,\prop}C(\intq,\prop)\cup\bigcup_{\intq,\prop}D(\intq,\prop)\cup E\cup F\]
            $C(\intq,\prop)$ and $D(\intq,\prop)$ are clearly finite up to a shift on $\param$.
            As there are finitely many $\prop$ and $\intq$, so are the sets
            $\bigcup_{\intq,\prop}C(\intq,\prop)$
            and
            $\bigcup_{\intq,\prop}D(\intq,\prop)$.
            $E$ is finite.
            Finally $F$ is finite because all its elements are literals of the root schema $\sch$
            thanks to Proposition \ref{thm:unfolding_inserts_literals}.
            Consequently $\literalset$ is indeed finite up to a shift.
            \qed
          \end{proof}
          \begin{lemma}
            \label{thm:cstr_finite_up_to_shift}
            Let $\sch$ be a \elementary schema of parameter $\param$ and $\tab$ a tableau of root schema $\sch$,
            then $\left\{\schcstr{\nodesch\node\tab}\middle|\node\text{ is an alignment node}\right\}$
            is finite up to the \cstrirred extension of equality up to a shift on $\param$.
          \end{lemma}
          \begin{proof}
            As \intervalise never applies, the only rule that introduces constraints is \constraintsplit.
            For a framed-\constraintsplit, 
            the only constraints that may be introduced in an alignment node are 
            of the form $\forall\var\neg(\intk\leq\var\And\var\leq\param-\intl-\intj)$
            or $\exists\var(\intk\leq\var\And\var\leq\param-\intl-\intj)$,
            for some $\intj\in\N$.
            Non-framed \constraintsplit introduces only constraints 
            that come from the emptiness of an iteration added by \propsimpl.
            Thus those constraints have the form $\expr\star\exprf$ where $\star\in\{=,\neq\}$,
            $\expr$ comes from a literal in $\nodesch\node\tab$
            and $\exprf$ comes from a literal in $\nodeinterp\node\tab$.
            Thus if we are in a $\intj$-alignment node and $\expr$ contains $\param$ then
            $\expr$ belongs to the set $[\param-\intl-\intj+\minit(\sch)..\param-\intl-\intj+\maxit(\sch)]$
            by Lemma \ref{thm:purity};
            if $\expr$ does not contain $\param$
            then it belongs to the set $[\minbase(\sch)..\maxbase(\sch)]$
            by Proposition \ref{thm:unfolding_inserts_literals};
            and $\exprf$ belongs to the set $[\minbase(\sch)..\maxbase(\sch)]\cup[\param-\intl-\intj+\minit(\sch)..\param-\intl+\maxit(\sch)]$.

            We now prove that the set of added constraints is finite up to the \cstrirred extension of equality up to a shift.
            We distinguish various cases depending on the shape of the introduced constraints.
            Finally, we will combine those results thanks to Theorem \ref{thm:congruence}.
            \begin{itemize}
              \item Framed constraint $\exists\var(\intk\leq\var\And\var\leq\param-\intl-\intj)$:
                the set of generated constraints of this form is:
                \[\hspace*{-0.1cm}
                \begin{array}{l}
                  \exists\var(\intk\leq\var\And\var\leq\param-\intl)\\
                  \exists\var(\intk\leq\var\And\var\leq\param-\intl)
                  \And\exists\var(\intk\leq\var\And\var\leq\param-\intl-1)\\
                  \exists\var(\intk\leq\var\And\var\leq\param-\intl)
                  \And\exists\var(\intk\leq\var\And\var\leq\param-\intl-1)
                  \And\exists\var(\intk\leq\var\And\var\leq\param-\intl-2)\\
                  \text{etc.}\\
                \end{array}\]
                but we can remove the redundant constraints and obtain:
                \[\begin{array}{l}
                  \exists\var(\intk\leq\var\And\var\leq\param-\intl)\\
                  \exists\var(\intk\leq\var\And\var\leq\param-\intl-1)\\
                  \exists\var(\intk\leq\var\And\var\leq\param-\intl-2)\\
                  \text{etc.}\\
                \end{array}\]
                which is trivially \finshift.
              \item Framed constraint $\forall\var\neg(\intk\leq\var\And\var\leq\param-\intl-\intj)$:
                Once this constraint is added, there are no more iterations in the schema,
                so no other constraint of this form will be added.
                Thus the set of all constraints of this form that may be added in all the nodes is
                $\{\forall\var\neg(\intk\leq\var\And\var\leq\param-\intl-\intj) \mid \intj\in\N\}$
                which is obviously finite up to a shift.
              \item Non-framed constraint with $\expr\in[\param-\intl-\intj+\minit(\sch)..\param-\intl-\intj+\maxit(\sch)]$
                and $\exprf\in[\param-\intl-\intj+\minit(\sch)..\param-\intl+\maxit(\sch)]$:
                then $\expr\star\exprf$ is either valid or unsatisfiable.
                If it is valid then it is of course redundant so we do not even need to consider it.
                If it is unsatisfiable then, by \cstrirredy,
                we can consider that it is $\falseconstraint$.
                When an unsatisfiable constraint is added, the branch is closed, so no other constraint may be added.
                Thus the set of such constraints generated in this case is just $\{\falseconstraint\}$, trivially finite.
              \item Non-framed constraint with $\expr\in[\minbase(\sch)..\maxbase(\sch)]$
                and $\exprf\in[\param-\intl-\intj+\minit(\sch)..\param-\intl-\intj-\intk+\maxbase(\sch)]$,
                i.e. the considered set of constraints is:
                \[A\eqdef\left\{\expr\star\exprf\middle|
                \begin{array}{l}
                  \expr\in[\minbase(\sch)..\maxbase(\sch)]\\
                  \exprf\in[\param-\intl-\intj+\minit(\sch)..\param-\intl-\intj-\intk+\maxbase(\sch)]\\
                  \intj\in\N\\
                \end{array}\right\}\]
                It is a finite union of sets of the form $\{\param-\intj+\intq\mid\intj\in\N\}$ where $\intq\in\Z$.
                All such sets are \finshift, so $A$ is \finshift.
                Then $[\minbase(\sch)..\maxbase(\sch)]$ is obviously \finshift.
                so we get the result by the third corollary of Theorem \ref{thm:congruence}
                (with deviation $0$ as no expression in $[\minbase(\sch)..\maxbase(\sch)]$ contains $\param$).
                Notice that the full interval on which $\exprf$ ranges
                ($[\param-\intl-\intj+\minit(\sch)..\param-\intl+\maxit(\sch)]$)
                has been split on purpose, so that $A$ can indeed be a finite union of \finshift sets.
              \item Non-framed constraint with $\expr\in[\minbase(\sch)..\maxbase(\sch)]$
                and $\exprf\in[\param-\intl-\intj-\intk+\maxbase(\sch)+1..\param-\intl+\maxit(\sch)]$,
                when $\star$ is $=$:
                this constraint states $\expr=\exprf$.
                However we know that $\intk\leq\param-\intl-\intj$,
                so $\param\geq\intk+\intl+\intj$.
                As $\expr=\exprf$, we have $\param=\param+\expr-\exprf$.
                So $\param+\expr-\exprf\geq\intk+\intl+\intj$,
                thus $\exprf\leq\param+\expr-\intk-\intl-\intj$.
                As $\expr\leq\maxbase(\sch)$, we obtain $\exprf\leq\param+\maxbase(\sch)-\intk-\intl-\intj$.
                This contradicts the above lower bound,
                so $\expr=\exprf$ is actually unsatisfiable and we get the result as in the third case.
              \item Non-framed constraint with $\expr\in[\minbase(\sch)..\maxbase(\sch)]$
                and $\exprf\in[\param-\intl-\intj-\intk+\maxbase(\sch)+1..\param-\intl]$,
                when $\star$ is $\neq$:
                This is the hard case, indeed we can easily obtain a set which is not \finshift.
                even with the \cstrirred extension.
                For instance the infinite set:
                \[\begin{array}{l}
                0\neq\param-\intl\\
                0\neq\param-\intl\And0\neq\param-\intl-1\\
                0\neq\param-\intl\And0\neq\param-\intl-1\And0\neq\param-\intl-2\\
                \text{etc.}\\
                \end{array}\]
                is not \finshift and, contrarily to the previous cases,
                we cannot use the \cstrirred extension to simplify it.
                However at node $\node$, $\schcstr{\nodesch\node\tab}$ entails $\param-\intl-\intj\geq\intk$
                (because $\node$ is aligned on $[\intk..\param-\intl-\intj]$)
                and thus $\param-\intl-\intj-\intk\geq0$.
                On the other hand $\exprf\geq\param-\intl-\intj-\intk+\maxbase(\sch)+1$,
                thus $\exprf\geq\maxbase(\sch)+1$.
                So, as $\expr\leq\maxbase(\sch)$: $\exprf>\expr$.
                Hence the constraint $\exprf\neq\expr$ is finally redundant.
            \end{itemize}
            Finally it is easily seen that combining all different cases preserves finiteness up to a shift
            by Theorem \ref{thm:congruence}.
            Simply because by inspecting all the cases,
            one can see that all the expressions of a constraint inserted at a $\intj$-alignment node,
            are of the form $\param-\intj+\intq$ for some $\intq$ belonging to a finite set.
            So all the cases are ``synchronized''.
            \qed
          \end{proof}
        \begin{lemma}[Main Lemma]
          \label{lem:main}
          Let $\sch$ be a \elementary schema of parameter $\param$ and $\tab$ a tableau of root schema $\sch$,
          then $\left\{\schpat{\nodesch\node\tab}\middle|\node\text{ is an alignment node}\right\}$
          is finite up to a shift on $\param$.
        \end{lemma}
        \begin{proof}
          We prove that 
          $\left\{
            \pat
          \middle|
            \schpat{\nodesch\node\tab}|_\pos\trace^\node_{\tab_\pos}\pat,
            \node\text{ is an alignment node},
            \pat\text{ is a pattern}
          \right\}$
          \newline($\tab_\pos$ is the decorated of $\tab$ w.r.t. $\pos$)
          is finite up to a shift on $\param$
          for every position $\pos$ in $\schpat{\nodesch\node\tab}$. 
          We get the intended result when $\pos=\emptyPos$,
          indeed it is easily seen that this position is invariant by \TSchDP
          hence if $\pat$ is s.t. 
          $\schpat{\nodesch\node\tab}|_\emptyPos\trace^\node_{\tab_\emptyPos}\pat$
          then $\pat=\schpat{\nodesch\node\tab}$
          (as $\tab_\emptyPos$ is the decorated of $\tab$ w.r.t. position $\emptyPos$,
          $\node$ may indifferently be considered as a node of $\tab$ or a node of $\tab_\emptyPos$).

          Let $\schpat{\nodesch\node\tab}|_\pos$ be a subpattern of $\schpat{\nodesch\node\tab}$ at some position $\pos$
          and $\pat'$ a pattern s.t. $\schpat{\nodesch\node\tab}|_\pos\trace^\node_{\tab_\pos}\pat'$.
          $\pat'$ is the result of applying some transformations to some other $\pat$ s.t. 
          $\schpat{\nodesch\node\tab}|_\pos\trace^{\node'}_{\tab_\pos}\pat$.
          Those transformations may be a combination of:
          (i) identity (if no rule applied to the subpattern between two alignment nodes),
          (ii) rewrite of a pattern above $\pat$,
          (iii) rewrite of a subpattern of $\pat$,
          (iv) rewrite of $\pat$ itself,
          or (v) instantiation of a variable (in case \instantiaterule applies somewhere above $\pat$).
          We have to check that none of those transformations can generate an infinite set of new schemata.
          This is trivial for (i).
          (ii) is invisible when tracing $\schpat{\nodesch\node\tab}|_\pos$ 
          (as the trace follows the moves of $\schpat{\nodesch\node\tab}|_\pos$)
          and thus is an identity as far as we are concerned (notice that this is why tracing was designed for).
          For the other cases the proof goes by induction on the structure of $\pat$:
          \begin{itemize}
            \item Suppose $\pat$ is a literal of index $\expr$.
              \begin{description}
                \item[(iii)]
                  Impossible.
                \item[(iv)]
                  Only \propsimpl can rewrite a literal.
                  This is possible only if no variable other than $\param$ occurs in $\expr$,
                  in which case \algebraic we apply then.
                  As seen multiple times, the introduced iterated connective
                  will necessarily be deleted in the next alignment node
                  (either by removing the full iteration,
                  or by removing only the connective).
                  Hence no schema is generated.
                \item[(v)]
                  This is possible only if there \emph{is} a variable other than $\param$ in $\expr$ 
                  (as $\param$ is never instantiated)
                  in which case $\pat$ is turned into a literal whose index does not refer to a variable other than $\param$,
                  then the expansion and algebraic simplifications rules apply as in Case (iv).
              \end{description}
            \item Suppose $\pat=\pat_1\Op\pat_2$ where $\Op\in\{\And,\Or\}$.
              \begin{description}
                \item[(iii)]
                  It implies that there are $\pat'_1,\pat'_2$ s.t.
                  $\pat_1\trace^\node_{\tab_{1.\pos}}\pat'_1$
                  and
                  $\pat_2\trace^\node_{\tab_{2.\pos}}\pat'_2$.
                  By Lemmata \ref{cor:aligned} and \ref{cor:aligned_node_uniform}
                  \emph{all} expressions involving $\param$ in both $\pat_1$ and $\pat_2$ have the form $\param-\intl-\intj$
                  hence $\deviation(\pat_1,\pat_2)=0$ (where $\deviation$ denotes the deviation, Section \ref{sec:eq_shift}).
                  By induction the sets of possible $\pat_1'$ and $\pat_2'$ are finite up to a shift,
                  so we can apply the first corollary of Theorem \ref{thm:congruence} and conclude.
                \item[(iv)]
                  The only possible rule is \algebraic in which case the result is obtained by induction.
                \item[(v)]
                  For every substitution $\asubstitution$,
                  $\pat\sigma=\pat_1\sigma\Op\pat_2\sigma$,
                  so if $\pat\trace\pat\sigma$
                  then $\pat_1\trace\pat_1\sigma$ and $\pat_2\trace\pat_2\sigma$,
                  and we conclude by induction.
              \end{description}
            \item Suppose $\pat=\schematicOp\var\intk{\param-\intl-\intj}\pate$ where $\bigOp\in\{\bigAnd,\bigOr\}$, $\intj\in\N$.
                  By Lemma \ref{cor:aligned}, we know that every iteration must have this form.
                  \begin{description}
                    \item[(iii)]
                      This is handled as in the previous case
                      except that we use the second corollary of Theorem \ref{thm:congruence} 
                      instead of the first one.
                    \item[(iv)]
                      The only rewrite can be \instantiaterule.
                      For every $\intp\in\N$, when \instantiaterule applies $\intp$ times,
                      $\pat$ is turned into $\pate_1\Op\dots\Op\pate_\intp\Op\schematicOp\var\intk{\param-\intl-\intj-\intp}\pate$.
                      But $\pat\trace^{\alpha'}_\tab\pate_1$, \dots, $\pat\trace^{\node'}_\tab\pate_\intp$
                      so by induction hypothesis on $\pat$ they all belong to the same \finshift set.
                      So if $\intp$ is big enough,
                      there are patterns of the form $\pate_\intq$ that will loop on each other ($\intq\in1..\intp$).
                      Formally there is $\intq_0\in\N$ s.t.
                      for every $\intp\in\N$ and every $\intq\in1..\intp$, 
                      if $\intq>\intq_0$ then there is a $\intq'\leq\intq_0$ s.t. 
                      $\pate_\intq\stdeqshift\pate_{\intq'}$.
                      By Lemmata \ref{cor:aligned} and \ref{cor:aligned_node_uniform},
                      only iterations contain $\param$ and all of them are aligned,
                      thus there is actually no shift on $\param$ meaning that
                      $\pate_{\intq}=\pate_{\intq'}$.
                      Hence, by \algebraic, $\pate_1\Op\dotsb\Op\pate_\intp$
                      simplifies into $\pate_1\Op\dotsb\Op\pate_{\intq_0}$ at worst.
                      Finally all schemata obtained from $\pat$ are of the form
                      $\pate_1\Op\dots\Op\pate_{\intq_0}\Op\schematicOp\var\intk{\param-\intl-\intj-\intp}\pate$.
                      There are finitely many such schemata by induction hypothesis on $\pate$
                      (and thus on $\pate_1,\dots,\pate_{\intq_0}$),
                      by the first and second corollaries of Theorem \ref{thm:congruence} 
                      (the deviation is null),
                      and because $\intq_0$ is a constant.
                    \item[(v)]
                      As $\param-\intl-\intj$ does not contain other variables than $\param$ it is not affected by the instantiation.
                      All bound variables are assumed distinct so the instantiation cannot replace $\var$.
                      Thus, writing $\asubstitution$ for the substitution,
                      $\pat\asubstitution=\schematicOp\var\intk{\param-\intl-\intj}{(\pate\asubstitution)}$,
                      and we conclude by induction.
                      \qed
                  \end{description}
          \end{itemize}
        \end{proof}
        \begin{corollary}
          \label{thm:combine}
          Let $\sch$ be a \elementary schema of parameter $\param$ and $\tab$ a tableau of root schema $\sch$,
          then $\left\{\nodeschfull\node\tab\mid\node\text{ is an alignment node}\right\}$
          is finite up to the \cstrirred and pure extensions of equality up to a shift on $\param$.
        \end{corollary}
        \begin{proof}
          This follows from Definition \ref{def:looping_nodes}
          and from Theorem \ref{thm:congruence} applied to the results of Corollary \ref{thm:lit_finite_up_to_shift},
          Lemma \ref{thm:cstr_finite_up_to_shift} and Main Lemma.
          Lemma \ref{thm:purity} ensures that the deviation is lower than $\maxit(\sch)-\minit(\sch)$.
          \qed
        \end{proof}
        \begin{theorem}
          \label{thm:terminate_regular}
          $\strat$ terminates on every \elementary schema.
        \end{theorem}
        \begin{proof}
          It easily follows from the previous Corollary and the fact that $\strat$ uses the pure extension of equality up to a shift.
          Corollary \ref{thm:alignment_reached} is also required to ensure that it is indeed sufficient to restrict ourselves to alignment nodes.
          \qed
        \end{proof}
        }%
      \Courte{
          \par
          See \cite{rapport09} for the detailed proof.
          \qed
        \end{proof}
      }

    \Longue{
      \subsection{Extensions}
        In the light of the previous proof, we can easily extend the class of \elementary schemata to broader terminating classes.
        First we can relax a little the alignment condition:
        \begin{definition}
          A schema $\sch$ is:
          \begin{itemize}
            \item \emph{down-aligned} iff it is framed and 
                  the frames of all iterations have the same lower bound $\intk\in\Z$
                  and have an upper bound of the form $\param-\intl$, where $\intl\in\Z$.
            \item \emph{up-aligned} iff it is framed and 
                  the frames of all iterations have the same upper bound $\param-\intl$, where $\intl\in\Z$
                  and have any $\intk\in\Z$ as their lower bound.
            \item \emph{broadly aligned} iff 
                  all iterations of $\sch$ have frames of the form $[\intk_1..\param-\intk_2]$, $\intk_1,\intk_2\in\Z$.
          \end{itemize}
        \end{definition}
        \begin{theorem}
          \label{thm:terminate_down_regular}
          $\strat$ terminates on every schema which is monadic, of limited progression and down-aligned.
        \end{theorem}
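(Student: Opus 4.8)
The plan is to follow the architecture of the proof of Theorem~\ref{thm:terminate_regular} almost verbatim, isolating the single place where down-alignment differs from full alignment. Recall that an \elementary schema forces \emph{all} iterations to share one frame $[\intk..\param-\intl]$, whereas a down-aligned schema fixes only the common lower bound $\intk$ and the \emph{shape} $\param-\intl$ of the upper bounds, letting the offset $\intl$ vary from one iteration to another. Since the root schema is finite, only finitely many offsets $\intl_1,\dots,\intl_\numn$ occur in it, and, crucially, down-alignment means that every frame is a pure function of $\param$ and never mentions a bound variable. Limited progression (equivalently, the \arithmetic property of Definition~\ref{def:monadic}) likewise confines every literal index to the form $\var+\intq$ or $\param+\intq$, so that the offsets introduced by unfolding stay within a fixed window. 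The net effect is that the \emph{deviation} of every generated pattern, which was identically $0$ in the \elementary case by Lemma~\ref{cor:aligned_node_uniform}, is now bounded by the \emph{constant} $D\eqdef\max_\numi\intl_\numi-\min_\numi\intl_\numi$ (together with the literal window $\maxit(\sch)-\minit(\sch)$, already bounded by purity). This is the only quantitative change, and because Theorem~\ref{thm:congruence} and its corollaries accept \emph{any} fixed bound $\numk$ on the deviation, replacing $0$ by $D$ disturbs none of the finiteness arguments.

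Concretely, I would first re-prove the down-aligned analogue of Lemma~\ref{cor:aligned}: every alignment node has each of its surviving iterations framed as $[\intk..\param-\intl_\numi-\intj]$ for a \emph{common} progress counter $\intj$ that strictly increases between consecutive alignment nodes, while the offset $\intl_\numi$ is inherited from the root. The argument is unchanged in spirit. Step~1 (framed-\constraintsplit) resolves, for each frame, whether it is empty; on a given branch the empty iterations are removed by \algebraic in step~2 and the surviving ones keep their offset. Step~3 then unfolds every surviving iteration by exactly one rank per round, because with $\IRupbnd$ set to the frame's upper bound the validity condition of \instantiaterule coincides with the non-emptiness established in step~1; hence $\intj$ advances uniformly. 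Steps~1 and~2 preserve the frame shape, and since frames never mention the unfolded variable, unfolding an \emph{outer} iteration leaves every \emph{inner} frame untouched. Corollary~\ref{cor:aligned_node_elementary} and Lemma~\ref{cor:aligned_node_uniform} then carry over, so that at every alignment node $\param$ occurs only inside these aligned inner domains.

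With this structure in hand the three finiteness results are re-run with $D$ in place of $0$. Lemma~\ref{thm:purity} is insensitive to the change: literals whose index leaves the window $[\param-\intl-\intj+\minit(\sch)\,..\,\param-\intl-\intj+\maxit(\sch)]$ eventually become pure, so the effective literal deviation stays bounded and Corollary~\ref{thm:lit_finite_up_to_shift} holds. Lemma~\ref{thm:cstr_finite_up_to_shift} is likewise essentially unchanged, the only difference being that the several expression windows it combines are now centred on distinct offsets $\param-\intl_\numi-\intj$; as these offsets differ by at most $D$, the final combination still falls within $\bowl{\numk}$ for $\numk=D$ and Theorem~\ref{thm:congruence} applies. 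The Main Lemma~\ref{lem:main} is proved by the same structural induction, invoking in every combining step (the $\pat_1\Op\pat_2$ case and the iteration case) the corollaries of Theorem~\ref{thm:congruence} with deviation bound $D$ rather than $0$. Finally Corollary~\ref{thm:combine} combines the three, and termination follows exactly as in Theorem~\ref{thm:terminate_regular}, using Corollary~\ref{thm:alignment_reached} to reduce attention to alignment nodes and the fact that $\strat$ uses the pure extension of equality up to a shift.

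The delicate point, worth double-checking rather than treating as bookkeeping, is the iteration case of the Main Lemma, where an outer iteration is unfolded into ranks $\pate_1,\dots,\pate_\intp$. In the \elementary proof the ranks were argued to be \emph{literally equal} because $\param$ sat only inside a single shared frame. Under down-alignment I must instead show that, after step-2 processing, each rank carries $\param$ only inside aligned inner domains $[\intk..\param-\intl'-\intj]$ whose offsets span at most $D$. This is exactly where limited progression is needed: it guarantees that substituting the unfolded variable $\var$ cannot manufacture unbounded offsets inside any surviving inner iteration, since frames are variable-free and every literal index $\var+\intq$ that acquires $\param$ after substitution is removed by \propsimpl and \algebraic before the next alignment node. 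Once this is confirmed the ranks are equal up to a shift with deviation at most $D$, the telescoping collapse of Lemma~\ref{lem:main} leaves finitely many representatives, and the induction closes.
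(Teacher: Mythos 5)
There is a genuine gap, and it sits exactly at the lemma you propose to ``first re-prove''. You claim that step~3 of $\strat$ unfolds every surviving iteration by exactly one rank per round, so that frames keep their root offsets, $[\intk..\param-\intl_\numi-\intj]$ with a \emph{common} counter $\intj$, and the frame deviation stays equal to $D$ forever. This misreads the strategy: step~3 applies \instantiaterule \emph{until irreducibility}, and the side condition $\instantiateRuleCondition$ is re-evaluated against the \emph{current} frame after each unfolding, not against ``the non-emptiness established in step~1''. On the branch where step~1 has declared every frame non-empty, the recorded constraints amount to $\param\geq\intk+\intl_{\max}$, where $\intl_{\max}\eqdef\max_\numi\intl_\numi$; hence an iteration framed on $[\intk..\param-\intl_\numi-c]$ can be unfolded as long as $\intl_\numi+c\leq\intl_{\max}$, i.e. an iteration with $\intl_\numi<\intl_{\max}$ is unfolded \emph{again and again within the same pass of step~3}, until its upper bound reaches $\param-\intl_{\max}-1$. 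Consequently, already at the second alignment node all frames coincide: this is precisely what the paper calls \emph{rectification}, and it is the heart of the paper's proof. Your invariant is therefore false (aligned frames $[\intk..\param-\intl_{\max}-\intj]$ cannot be written with distinct inherited offsets and a common $\intj$), and the chain you build on it --- the purity windows centred on distinct offsets, the constraint analysis, and the deviation-$D$ Main Lemma --- analyses a family of schemata that $\strat$ never generates while failing to describe the ones it does.

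The paper's proof exploits rectification to get a much shorter argument: after the first pass of steps 1--3, the schema is aligned and one is literally back in the \elementary case of Theorem~\ref{thm:terminate_regular}; the only other branches are those where step~1 declares the shortest frame empty, in which case the constraints bound $\param$ (one has $\intk+\intl_\numi\leq\param<\intk+\intl_{\max}$ for any surviving frame), so only finitely many unfoldings can occur --- a case your proposal glosses over. Ironically, the permanent-bounded-mis-alignment machinery you develop is exactly the paper's argument for \emph{up-aligned} schemata (Theorem~\ref{thm:terminate_up_regular}), where the differing \emph{lower} bounds really are never touched by unfolding; for down-aligned schemata the differing \emph{upper} bounds are exactly what unfolding consumes, which is why they rectify. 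Your proof could be repaired either by adopting the rectification argument, or by weakening your structural lemma to a pure deviation bound (deviation at most $D$ during the first pass, $0$ afterwards) and redoing case~(iv) of the Main Lemma accordingly; but as written, it rests on a false description of $\strat$ and does not establish the theorem.
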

        \begin{proof}(Sketch)
          Such a schema is almost \elementary except that down-alignment is substituted to alignment.
          It is easily seen that, after the first passing in step 2,
          either the constraint $\intk\leq\param-\intl$ or $\intk>\param-\intl$ has been added to the node,
          where $\intl=\min\{\intl'\mid\param-\intl'\text{ is the upper bound of an iteration in }\sch\}$.
          If it is $\intk\leq\param-\intl$ then it implies that $\intk\leq\param-\intl'$ for every $\intl'\geq\intl$.
          In step 3, all iterations are unfolded until no longer possible.
          Hence here, all iterations will be unfolded until their upper bound reaches $\param-\intl-1$ 
          (even those of frames $[\intk..\param-\intl']$, $\intl'>\intl$).
          As a consequence all iterations are now aligned and we are back in the same case as for \elementary schemata.
          We call this phase, where all iterations progressively become aligned, the \emph{rectification}.
          Rectification terminates because of a similar argument 
          to the one proving the termination of Step 3 in the proof of Lemma \ref{thm:each_step_terminates}.
          In the case where $\intk>\param-\intl$ has been added,
          it is easily seen that there will be finitely many unfoldings of iterations of frame $[\intk..\param-\intl']$, $\intl'>\intl$
          (actually there will be at most $\intm-\intl'$ such unfoldings per iteration, 
          where $\intm=\max\{\intl'\mid\param-\intl'\text{ is the upper bound of an iteration in }\sch\}$)
          then all iterations will be empty.
          \qed
        \end{proof}
        \begin{theorem}
          \label{thm:terminate_up_regular}
          $\strat$ terminates on every schema which is monadic, of limited progression and up-aligned.
        \end{theorem}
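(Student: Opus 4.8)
The plan is to mirror the strategy used for the down-aligned case (Theorem~\ref{thm:terminate_down_regular}), exploiting the fact that an up-aligned schema is ``almost'' \elementary: all its iterations already share the \emph{same} $\param$-dependent upper bound $\param-\intl$, and they differ only in their lower bounds $\intk_1<\dots<\intk_r$, which are \emph{constants}. Because the restriction of \instantiaterule in $\strat$ always unfolds an iteration at the upper bound of its frame, and because every surviving iteration carries the common upper bound, step~3 decrements all upper bounds synchronously, $\param-\intl-\intj\to\param-\intl-\intj-1$, exactly as in the \elementary case (Lemma~\ref{cor:aligned}). Thus the schema stays up-aligned along the whole derivation, and no rectification of upper bounds --- the extra phase needed in the down-aligned proof --- is required here. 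As in the \elementary case, \intervalise and \emptiness never apply, and the non-framed iterations created by \propsimpl are eliminated by \constraintsplit and \algebraic before the next alignment node; this part carries over verbatim from Proposition~\ref{thm:step2}, since it uses only monadicity and the \arithmetic form of the indices.

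The one genuinely new phenomenon is that the distinct constant lower bounds make the iterations empty at \emph{distinct} ranks: an iteration of frame $[\intk_i..\param-\intl-\intj]$ vanishes precisely when $\param-\intl-\intj<\intk_i$, and since the $\intk_i$ are constants these thresholds are crossed one after another, in order of decreasing lower bound, as $\intj$ grows. First I would record that this produces at most $r$ \emph{emptying events} along any branch, each triggered by a framed-\constraintsplit in step~1 (whose empty branch is closed by \algebraic) and each strictly decreasing the number of active iterations. Hence every branch is partitioned into at most $r+1$ \emph{phases}, a phase being a maximal segment during which the set of non-empty iterations is fixed; moreover these active sets are nested, so the whole tableau exhibits at most $r+1$ distinct ones.

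Within a single phase the analysis is identical to the \elementary one. The surviving iterations all have the shape $\schematicOp\var{\intk_i}{\param-\intl-\intj}{\pate_i}$ with a common upper bound, so the substitution $\substitution{\param-1/\param}$ sends the schema of one round to that of the next: the constant lower bounds $\intk_i$ are untouched by the shift, while the common upper bound and the peeled-off literal indices shift by one, exactly as equality up to a shift demands, and the only $\param$-dependent subexpressions are these upper bounds. I would therefore re-run, \emph{relative to each phase}, the three finiteness results --- the Main Lemma~\ref{lem:main} for patterns, Lemma~\ref{thm:cstr_finite_up_to_shift} for constraints, and Corollary~\ref{thm:lit_finite_up_to_shift} for partial interpretations --- whose proofs adapt with no change since the distinct constant lower bounds play no role in the shift. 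Combining them through Theorem~\ref{thm:congruence} as in Corollary~\ref{thm:combine} shows that the schemata generated within one phase form a \finshift set. As a branch contains finitely many phases and the tableau only finitely many active-iteration sets, the union of all these families is again \finshift, and termination follows exactly as in Theorem~\ref{thm:terminate_regular}: the \looping rule (with the pure and \cstrirred extensions of equality up to a shift) must eventually fire on any infinite branch.

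The main obstacle I expect is making precise, across an emptying event, the bookkeeping of what remains shift-stable. Between phases the structural form of the schema changes abruptly --- one iteration is deleted and replaced by its neutral element --- so equality up to a shift relates schemata only \emph{within} a phase; the argument must therefore verify that the finitely many phases cover every generated schema and that the transition schemata occurring at the events, being finite in number, do not spoil the global \finshift property. Checking that every phase boundary is genuinely produced by a framed-\constraintsplit, and that no stray rule reintroduces a differently bounded iteration, is where Proposition~\ref{thm:step2} and Lemma~\ref{cor:aligned} have to be invoked with care.
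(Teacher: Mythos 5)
Your proposal rests on a structural claim that is false, and it is exactly the phenomenon you rule out that the paper's proof is about. You assert that, under $\strat$, step~3 decrements all upper bounds synchronously, so that the schema stays up-aligned (common upper bound) along the whole derivation and ``no rectification\dots is required''. Consider two iterations $\schematicOp\var{\intk_1}{\param-\intl}\pat$ and $\schematicOpbis\varj{\intk_2}{\param-\intl}{\pat'}$ with $\intk_1<\intk_2$. In the step~1 branch where neither is declared empty, the accumulated constraint entails $\intk_2\leq\param-\intl$. Step~3 then applies \instantiaterule \emph{until irreducibility}: the second iteration is unfolded once, down to the frame $[\intk_2..\param-\intl-1]$, but the first one can be unfolded again as long as the constraint entails $\intk_1\leq\expr$ for its current upper bound $\expr$, and $\intk_2\leq\param-\intl$ entails $\intk_1\leq\param-\intl-\intq$ for every $\intq\leq\intk_2-\intk_1$. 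Hence the first iteration is unfolded $\intk_2-\intk_1+1$ times in this single pass, down to $[\intk_1..\param-\intl-(\intk_2-\intk_1)-1]$. After this first round the upper bounds are permanently mis-aligned, the gap being exactly the gap $\intk_2-\intk_1$ between the lower bounds, and the generated schemata are never up-aligned again. Your ``phases'' picture also does not match what happens: in the branches where only some iterations are emptied, the added constraints pin $\param$ into a finite interval (e.g. $\intk_1\leq\param-\intl<\intk_2$), so these branches terminate trivially, while in the surviving branch the mis-alignment synchronizes the emptiness conditions, so the remaining iterations never empty one after another.

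As a consequence, the part of your argument that you claim ``adapts with no change'' is precisely the part that breaks. In the proof of Lemma~\ref{lem:main}, item (iv) of the iteration case, one invokes Lemmata~\ref{cor:aligned} and~\ref{cor:aligned_node_uniform} (all iterations aligned, $\param$ occurring only in their domains) to conclude that two peeled-off copies with $\pate_\intq\stdeqshift\pate_{\intq'}$ are syntactically \emph{equal}, so that the idempotence rules of \algebraic collapse the accumulated prefix $\pate_1\Op\dotsb\Op\pate_\intp$ to one of bounded length; this is the key to the \finshift property of the generated patterns. Once the upper bounds are mis-aligned, shift-equality no longer forces equality, and the collapse argument needs a substitute. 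That is the ``hard point'' the paper's proof addresses: the mis-alignment is confined to a finite set --- it never exceeds the deviation of the original schema --- so the sequence of accumulated prefixes still cannot grow infinitely. Your proposal never confronts this issue, because it denies that mis-alignment arises at all; as written, the per-phase reduction to the \elementary case is unsound.
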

        \begin{proof}(Sketch)
          In this case schemata will, in general, never become aligned:
          suppose we have two iterations $\schematicOp\var{\intk_1}{\param-\intl}\pat$
          and $\schematicOpbis\varj{\intk_2}{\param-\intl}\pat'$ with $\intk_1<\intk_2$.
          Then any constraint $\intk_1\geq\param-\intl-\intj$ implies $\intk_2\geq\param-\intl-\intj-\intk_1+\intk_2$ so 
          when 
          $\schematicOp\var{\intk_1}{\param-\intl}\pat$
          will be unfolded until $\param-\intl-\intj$,
          $\schematicOpbis\varj{\intk_2}{\param-\intl}\pat'$
          will be unfolded until $\param-\intl-\intj-\intk_1+\intk_2$.
          We will never reach alignment.
          However it is easily seen that the difference between two upper bounds (here $\intk_2-\intk_1$)
          will always remain lower than the deviation of the original schema.
          Hence slight modifications in the proof of Main Lemma enable to conclude.
          The hard point lies in the application of \algebraic in the item $(iv)$ of the iteration case,
          indeed now we cannot conclude from 
          $\pat'_{\exprf-\intk+\intq}\stdeqshift\pat'_{\exprf-\intk+\intq'}$
          that 
          $\pat'_{\exprf-\intk+\intq}=\pat'_{\exprf-\intk+\intq'}$
          as there is no alignment.
          However as the ``mis-alignment'' is confined to a finite set,
          the sequence $(\pat'_{\exprf-\intk+1}\Op\dotsb\Op\pat'_\exprf)_{\intk\in\N}$ still cannot grow infinitely.
          \qed
        \end{proof}
        \begin{theorem}
          \label{thm:terminate_broadly_regular}
          $\strat$ terminates on every schema which is monadic, of limited progression and broadly-aligned.
        \end{theorem}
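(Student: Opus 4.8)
The plan is to treat broad alignment as the simultaneous presence of the two obstructions that Theorems~\ref{thm:terminate_down_regular} and~\ref{thm:terminate_up_regular} dispose of in isolation. In a broadly-aligned schema every iteration has a frame $[\intk_1..\param-\intk_2]$ in which \emph{both} the lower bound $\intk_1$ and the upper bound $\param-\intk_2$ may vary from one iteration to another; the down-aligned case is exactly the sub-case where all $\intk_1$ coincide, the up-aligned case the one where all $\param-\intk_2$ coincide. Since the root is monadic, of limited progression and \arithmetic, I would first re-establish the analogues of Lemmata~\ref{cor:aligned} and~\ref{cor:aligned_node_uniform}: that $\param$ occurs only inside the domains of iterations, and that every frame generated along a branch keeps the shape $[\intk_1..\param-\intk_2-\intj]$ for offsets $\intk_1,\intk_2$ drawn from the finite set fixed by the root. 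Consequently the \emph{deviation} of every generated schema remains bounded by $\deviation(\sch)$, which is what ultimately lets Theorem~\ref{thm:congruence} recombine subpatterns.

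First I would run the \emph{rectification} argument of Theorem~\ref{thm:terminate_down_regular} to deal with the upper bounds. For the values of $\param$ that matter for looping (the large ones), the opening cycle of $\strat$ forces, through framed-\constraintsplit and \algebraic, the constraints that settle the (non-)emptiness of the iterations, after which step~3 unfolds each iteration from the top by \instantiaterule. An iteration of upper bound $\param-\intk_2$ is unfolded a number of extra times bounded by $\deviation(\sch)$ before its upper bound reaches the common smallest value; since this is a \emph{bounded} number of unfoldings, rectification terminates and every schema it produces stays inside a single $\bowl\numk$ with $\numk=\deviation(\sch)$. Crucially, rectification removes only top elements and never touches the lower bounds, so after it the surviving iterations share a common upper bound and the configuration is up-aligned.

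Once the upper bounds are synchronised, I would invoke the reasoning of Theorem~\ref{thm:terminate_up_regular}: the residual mis-alignment now lives entirely in the lower bounds $\intk_1$, whose pairwise differences never exceed $\deviation(\sch)$. I would replay the iteration case of the Main Lemma (Lemma~\ref{lem:main}) with this bounded lower-bound offset, using Theorem~\ref{thm:congruence} to glue the subpatterns inside the fixed $\bowl\numk$. This shows that $\{\schpat{\nodesch\node\tab}\mid\node\text{ is an alignment node}\}$ is finite up to a shift; together with the finiteness up to a shift of the constraints (Lemma~\ref{thm:cstr_finite_up_to_shift}) and of the partial interpretations (Corollary~\ref{thm:lit_finite_up_to_shift}), Corollary~\ref{thm:combine} then yields that $\{\nodeschfull\node\tab\mid\node\text{ is an alignment node}\}$ is finite up to the \cstrirred and pure extensions of equality up to a shift. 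As in Theorem~\ref{thm:terminate_regular}, the \looping rule must therefore eventually apply, so $\strat$ terminates.

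The hard part will be the interface between the two phases. Because the lower bounds differ, the iterations with larger upper bounds do not all reach the common upper bound at the same moment, and for small $\param$ some iterations are emptied before rectification is complete. I expect to have to carry out the iteration case of the Main Lemma once with \emph{two} simultaneously bounded offsets --- a residual upper-bound offset coming from incomplete rectification and a lower-bound offset coming from up-alignment --- rather than treating the two phases as cleanly sequential. The real obstacle is to verify that these two offsets are independent and jointly bounded by $\deviation(\sch)$, so that a single application of Theorem~\ref{thm:congruence} with $\numk$ large enough absorbs both; once this joint bound is secured, the remaining bookkeeping is identical to the elementary case.
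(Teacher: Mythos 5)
Your closing paragraph is, in substance, the paper's proof, and your final verdict (a single Main-Lemma-style induction with jointly bounded offsets, one application of Theorem~\ref{thm:congruence} with $\numk$ large enough, then Lemma~\ref{thm:cstr_finite_up_to_shift}, Corollary~\ref{thm:lit_finite_up_to_shift}, Corollary~\ref{thm:combine} and the argument of Theorem~\ref{thm:terminate_regular}) is the right one. But the paper reaches it far more directly: its whole proof is the observation that the argument for Theorem~\ref{thm:terminate_up_regular} never actually uses the hypothesis that the upper bounds coincide --- that proof already runs on the invariant ``every frame has the form $[\intk_1..\param-\intk_2-\intj]$ with all spreads bounded by the deviation of the root'', which \emph{is} the broadly-aligned invariant --- so it transfers verbatim, with no rectification phase and no interface between phases to manage.

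Your two-phase plan, taken at face value, rests on a false intermediate claim: rectification cannot synchronize the upper bounds when the lower bounds differ, because under $\strat$ the applicability of \instantiaterule is governed by entailed non-emptiness, i.e.\ by the \emph{length} of a frame, not by its upper bound. Concretely, take two iterations framed on $[\intk_1..\param-\intl_1]$ and $[\intk_2..\param-\intl_2]$; in the branch of framed-\constraintsplit where both are non-empty the context entails $\param\geq M$ with $M=\max(\intk_1+\intl_1,\intk_2+\intl_2)$, and step 3 unfolds the first iteration exactly until its frame is $[\intk_1..\param-M+\intk_1-1]$ and the second until $[\intk_2..\param-M+\intk_2-1]$: the \emph{lengths} equalize (both $\param-M$), while the upper bounds end up differing by exactly $\intk_1-\intk_2$. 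This happens for \emph{every} value of $\param$ and persists at every later pass, so it is not a matter of iterations reaching the common bound ``at different moments'' or only ``for small $\param$'': up-alignment is never reached, and what you call the residual upper-bound offset is permanent and equal to the lower-bound offset. This, however, is precisely what settles the obstacle you leave open: the two offsets are not merely ``independent and jointly bounded'' --- they are identical, hence trivially bounded by $\deviation(\sch)$, which is the ``easily seen'' fact the paper's sketch relies on. One last point your bookkeeping glosses over: without syntactic alignment you cannot conclude $\pate_\intq=\pate_{\intq'}$ from $\pate_\intq\stdeqshift\pate_{\intq'}$, so the collapse of $\pate_1\Op\dotsb\Op\pate_\intp$ by the idempotence rule of \algebraic is unavailable; as in Theorem~\ref{thm:terminate_up_regular}, it must be replaced by the argument that this sequence stays finite up to a shift because the mis-alignment ranges over a finite set. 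With the rectification phase discarded and these two points made explicit, your proposal coincides with the paper's proof.
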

        \begin{proof}(Sketch)
          This proof is close to the previous one.
          Actually we do not really need the fact that the upper bound is the same in the previous proof.
          \qed
        \end{proof}
        \begin{definition}
          A schema $\sch$ is:
          \begin{itemize}
            \item \emph{variable-aligned} on $[\expr_1..\expr_2]$, for two \linears $\expr_1,\expr_2$ iff 
                  every iteration of $\sch$ is framed either on $[\expr_1..\expr_2]$, or on $[\expr_1..\var+\intq]$ 
                  where $\var$ is a non-parameter variable and $\intq\in\Z$.
            \item \emph{simply variable-aligned} iff it is variable-aligned and $\intq=0$.
            \item \emph{positively variable-aligned} iff it is variable-aligned and $\intq\geq0$.
            \item \emph{negatively variable-aligned} iff it is variable-aligned and $\intq\leq0$.
            \item \emph{broadly variable-aligned} iff 
                  all iterations of $\sch$ have frames of the form $[\intk_1..\param-\intk_2]$, or $[\intk_1..\var-\intk_2]$, 
                  where $\intk_1,\intk_2\in\Z$.
          \end{itemize}
          An iteration of frame $[\expr_1..\var+\intq]$ is called an \emph{$\var$-iteration}.
          Let $\strat'$ be the strategy $\strat$ except that \emptiness is disallowed.
        \end{definition}
        \begin{theorem}
          \label{thm:terminate_variable_regular}
          $\strat'$ terminates on every schema which is monadic, of limited progression and 
          simply variable-aligned on $[\intk..\param-\intl]$ for some $\intk,\intl\in\Z$.
        \end{theorem}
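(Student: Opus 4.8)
The plan is to mirror the termination proof for \elementary schemata (Theorem~\ref{thm:terminate_regular}): I would show that the set $\{\nodeschfull\node\tab\mid\node\text{ is an alignment node of }\tab\}$ is \finshift up to the \cstrirred and pure extensions of equality up to a shift, so that on every branch the \looping rule must eventually apply. First I would adapt the alignment-node machinery to the two kinds of iterations now present: the \emph{parameter iterations}, framed on $[\intk..\param-\intl-\intj]$, and the \emph{variable iterations}, framed on $[\intk..\var]$ where $\var$ is a bound variable of an enclosing iteration. As in the \elementary case \intervalise still never applies (no domain carries a coefficient $\numk.\expr$), and since we use $\strat'$ the rule \emptiness is forbidden, which is exactly what prevents a variable iteration from being split along the free variable $\var$ occurring in its domain and thus keeps the family of admissible iteration shapes under control. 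Re-proving the analogues of Lemmas~\ref{cor:aligned} and~\ref{cor:aligned_node_uniform}, together with the step-termination argument (Lemma~\ref{thm:each_step_terminates}), then shows that every alignment node is again built only from these two kinds of iterations, that $\param$ occurs only in iteration domains, and that an alignment node is always reached.

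The decisive new phenomenon concerns \instantiaterule. When $\strat'$ unfolds a parameter iteration framed on $[\intk..\param-\intl]$ using $\IRupbnd=\param-\intl$ (its upper bound), every generated copy $\pat\substitution{(\param-\intl-\intq)/\var}$ freezes each inner variable iteration $\schemaOp\varj{\intk\leq\varj\And\varj\leq\var}{\pat'}$ into the parameter iteration $\schematicOp\varj\intk{\param-\intl-\intq}{\pat'}$. Hence the successive copies $\pate_0,\pate_1,\dots$ are no longer syntactically identical, as they were for \elementary schemata, but satisfy $\pate_{\intq+1}=\pate_\intq\substitution{\param-1/\param}$, so that they all lie in a single chain and are pairwise comparable under $\stdeqshift$.

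This is where the main obstacle lies, and it is precisely the \algebraic step in case~(iv) of the iteration case of the Main Lemma (Lemma~\ref{lem:main}). In the \elementary proof one concludes that the unfolded copies coincide and collapse through $\pat\Op\pat\to\pat$, which bounds both their number and the deviation $\deviation$ of the node. Here the copies differ by a genuine shift, \algebraic cannot merge them, and a naive count makes both the length of $\pate_0\Op\dots\Op\pate_\intp$ and the deviation of the node grow without bound, so that Theorem~\ref{thm:congruence} --- which requires the deviation to stay within a fixed $\bowl\numk$ --- cannot be applied directly. The resolution, as in the up-aligned case (Theorem~\ref{thm:terminate_up_regular}), is the pure extension: I would establish an analogue of Lemma~\ref{thm:purity} stating that the literals created by the \emph{earliest} unfoldings (those of largest index, near the top of the frame) fall outside the active window $[\minit(\sch)-\intl-\intj..\maxit(\sch)-\intl-\intj]$ and therefore become pure. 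Under $\purified$ only boundedly many copies survive up to a shift, so the deviation of the purified node is again bounded (by essentially $\maxit(\sch)-\minit(\sch)$), and the purified sequence of alignment nodes becomes \finshift.

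Finally I would assemble the three finiteness results exactly as in Corollary~\ref{thm:combine}: the constraint set stays \finshift up to the \cstrirred extension (the argument of Lemma~\ref{thm:cstr_finite_up_to_shift} carries over, since the only new domains, $[\intk..\param-\intl-\intq]$, are of the already-treated shape), the literal set stays \finshift up to the pure extension (as in Corollary~\ref{thm:lit_finite_up_to_shift}), and Theorem~\ref{thm:congruence} combines them once the deviation has been bounded by the purity argument. The role of the hypothesis \emph{simply} (i.e. $\intq=0$) is exactly to force each variable iteration's upper bound to be $\var$ rather than $\var+\intq$, so that the instantiated bounds have the clean form $\param-\intl-\intq$ and the active-window bookkeeping that caps the deviation goes through; relaxing $\intq$ would make this window analysis --- the genuinely hard part --- considerably more delicate, which is why the statement is confined to the simply variable-aligned case and to the strategy $\strat'$.
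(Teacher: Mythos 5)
You have the right skeleton (alignment nodes; separate finiteness of patterns, constraints and literal sets; recombination via Theorem~\ref{thm:congruence}), and you correctly note that \intervalise never applies and that disallowing \emptiness keeps the non-instantiated $\var$-iterations intact. But your central claim about the dynamics is false, and it is exactly the negation of the paper's key observation. You assert that when a parameter iteration is unfolded, each inner $\var$-iteration survives in the instantiated copy as a ``frozen'' iteration $\schematicOp\varj\intk{\param-\intl-\intq}{\pat'}$ with a stale upper bound, so that successive copies differ by genuine shifts of $\param$ and the deviation of the node grows without bound. Under $\strat'$ this cannot happen: step~3 applies \instantiaterule \emph{until irreducibility}, and the frozen copy --- being framed on $[\intk..\expr]$, the very frame the enclosing iteration had just before its own unfolding (this is where the hypothesis \emph{simply}, i.e. upper bound exactly $\var$, is used) --- satisfies the same application condition $\context(\sch)\Implies\intk\leq\expr$ that just licensed that unfolding. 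So it is unfolded within the same step~3, down to the common upper bound at which the condition fails, like every other framed iteration of the node; the cascade of such instantiations is finite because each one moves one level down in the nesting. This catch-up is the heart of the paper's proof: variable-alignment and the \arithmetic property w.r.t.\ $\param$ are preserved, at every alignment node $\param$ occurs only in domains of iterations that are all aligned, so in case~(iv) of Lemma~\ref{lem:main} copies equal up to a shift are in fact equal and collapse by \algebraic, exactly as for \elementary schemata. The unbounded-deviation phenomenon you describe never arises, and no new purity argument is needed for the patterns.

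Moreover, even if the dynamics were as you describe, your repair would not close the gap. The pure extension acts through $\purified$, which substitutes $\trueformula$ for pure \emph{literals} whose free variables are parameters; it touches neither iteration domains nor literals indexed by bound variables. A pattern simultaneously containing iterations with upper bounds $\param-\intl,\param-\intl-1,\dots,\param-\intl-\intq$ therefore still has deviation $\intq$ after purification; since a shift preserves deviation, patterns with pairwise distinct deviations are pairwise incomparable under $\stdeqshift$, hence all of them would be \representative{}s and the set would not be \finshift. In the paper, purity (Lemma~\ref{thm:purity}, Corollary~\ref{thm:lit_finite_up_to_shift}) tames the accumulated literal sets $\nodeinterp\node\tab$, not the pattern copies; and in Theorem~\ref{thm:terminate_up_regular}, which you invoke as your model, the pattern-side mis-alignment is handled by showing it stays bounded by the deviation of the \emph{original} schema, again not by purity. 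So the missing idea is precisely the catch-up unfolding of instantiated $\var$-iterations; without it your argument has no mechanism to bound the deviation of patterns at alignment nodes.
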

        \begin{proof}(Sketch)
          It is easily seen that variable-alignment is preserved all along the procedure 
          (this fact plays the same role as Lemma \ref{cor:aligned}):
          indeed, the only way an $\var$-iteration $\schematicOp\varj\intk\var\pat$
          may be unfolded is by unfolding the iteration binding $\var$
          (which necessarily exists as $\var$ is not a parameter).
          Let us write it $\schematicOpbis\var\intk\expr{\pat'}$,
          $\expr$ is either a non-parameter variable or a \linear of the form $\param-\intl-\intj$.
          When this iteration is unfolded, it is turned into $\schematicOpbis\var\intk{\expr-1}{\pat'}\Opbis\pat\substitution{\expr/\var}$.
          We have now two copies of $\schematicOp\varj\intk\var\pat$:
          one inside $\schematicOpbis\var\intk{\expr-1}{\pat'}$,
          and one inside $\pat\substitution{\expr/i}$.
          The last one has actually been instantiated: $\schematicOp\varj\intk\expr\pat$.
          As this iteration has the same frame as $\schematicOpbis\var\intk\expr{\pat'}$ it also meets the requirements to be unfolded,
          which indeed happens, turning the iteration into $\schematicOp\varj\intk{\expr-1}\pat$.
          This new iteration is framed on $[\intk..\expr-1]$ like every other non $\var$-iteration in the node.
          As \emptiness is disallowed all non-instantiated $\var$-iterations are kept as is.
          Finally there is a finite number of such instantiations as each time the number of iterations below the observed iteration decreases.
          As a consequence all generated schemata are \arithmetic w.r.t. $\param$, and the proof is then very similar to the \elementary case.
          \qed
        \end{proof}
        \begin{theorem}
          \label{thm:terminate_positive_regular}
          $\strat'$ terminates on every schema which is monadic,
          of limited progression and positively variable-aligned on $[\intk..\param-\intl]$ for some $\intk,\intl\in\Z$.
        \end{theorem}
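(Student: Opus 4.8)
The plan is to obtain this case as the combination of two results already in hand: the simply variable-aligned case of Theorem~\ref{thm:terminate_variable_regular}, which is exactly the present statement with every offset $\intq$ forced to $0$, and the mis-alignment analysis sketched for Theorem~\ref{thm:terminate_up_regular}. Allowing $\intq>0$ introduces a single new phenomenon: when an $\var$-iteration is instantiated, its frame overshoots the current bound by $\intq$, so the iterations no longer stay aligned. The whole difficulty is to show that this overshoot remains confined to a constant window determined by the root schema, after which the finiteness-up-to-a-shift machinery of the \elementary{} case applies verbatim.

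First I would verify that $\strat'$ preserves positive variable-alignment, in the role played by Lemma~\ref{cor:aligned}. As in Theorem~\ref{thm:terminate_variable_regular}, an $\var$-iteration $\schematicOp\varj\intk{\var+\intq}\pat$ can only be unfolded through the unfolding of the iteration $\schematicOpbis\var\intk\expr{\pat'}$ binding $\var$; instantiating $\var\mapsto\expr$ turns the inner copy into $\schematicOp\varj\intk{\expr+\intq}\pat$. If $\expr$ is the aligned bound $\param-\intl-\intj$ this detached copy is framed on $[\intk..\param-\intl-\intj+\intq]$, and if $\expr$ is another bound variable $\var'+\intq'$ the copy remains an $\var'$-iteration with the still nonnegative offset $\intq'+\intq$. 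Either way the upper bound keeps the form $\param+c$ or $(\text{bound variable})+c$, so every generated schema stays \arithmetic{} w.r.t. $\param$; and since \emptiness is disallowed, uninstantiated $\var$-iterations are never split and keep their shape. Exactly as in Theorem~\ref{thm:terminate_variable_regular}, each such instantiation strictly decreases the number of iterations nested below the binding one, so only finitely many instantiation events occur along a branch and every offset traces back to one of the finitely many offsets present in the root schema.

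From this I would bound the deviation (Definition~\ref{def:deviation}) of every generated schema by a constant $\numk$ fixed by the root schema --- morally $\maxit(\sch)-\minit(\sch)$ enlarged by the maximal sum of offsets accumulated along a nesting chain. The crucial estimate is that offsets cannot pile up without bound: because instantiation strictly shrinks the nesting below a binding iteration, only finitely many offsets ever compose, and all of them come from the root schema, so the gaps between the upper bounds of distinct iterations stay below $\numk$. This constant is precisely the bound fed into the $\bowl\numk$ restriction of Theorem~\ref{thm:congruence} when the three finiteness results are recombined.

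With the deviation pinned to $\numk$, I would re-run the Main Lemma (Lemma~\ref{lem:main}), the constraint finiteness of Lemma~\ref{thm:cstr_finite_up_to_shift} and the literal finiteness of Corollary~\ref{thm:lit_finite_up_to_shift}, recombining them through Theorem~\ref{thm:congruence} as in Corollary~\ref{thm:combine}, now with $\bowl\numk$ for this enlarged $\numk$. The hard part will be item (iv) of the iteration case of the Main Lemma: there the \elementary{} argument passed from $\pate\stdeqshift\pate'$ to $\pate=\pate'$ using that perfect alignment forbids any genuine shift on $\param$, whereas here consecutive peeled copies differ by real shifts of size up to the maximal offset. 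I would close this gap exactly as sketched for Theorem~\ref{thm:terminate_up_regular}: since the mis-alignment is confined to the finite window bounded by $\numk$, the growing prefix of peeled copies still cannot increase indefinitely, so after finitely many unfoldings each new copy loops up to a shift on an earlier one and \algebraic collapses the prefix up to a shift. Once this subcase is settled the Main Lemma goes through, and termination of $\strat'$ follows by combining it with Corollary~\ref{thm:combine} and Corollary~\ref{thm:alignment_reached} precisely as in Theorem~\ref{thm:terminate_regular}.
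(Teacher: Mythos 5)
Your proof is correct in essence, but it follows a genuinely different route from the paper's. The paper handles the positive case by combining Theorem \ref{thm:terminate_variable_regular} with the \emph{rectification} mechanism of the down-aligned case (Theorem \ref{thm:terminate_down_regular}): because the offsets $\intq$ are nonnegative and the constraint at an alignment node entails $\intk\leq\param-\intl-\intj$, an overshooting copy $\schematicOp\varj\intk{\param-\intl-\intj+\intq}\pat$ always satisfies the application condition of \instantiaterule, so step 3 of $\strat'$ unfolds it back down to the common bound; alignment nodes therefore stay genuinely aligned, the deviation among iteration bounds stays zero, and the Main Lemma applies essentially unchanged. You instead decline to exploit this and import the bounded-deviation, mis-alignment-tolerance machinery of the up-aligned case (Theorem \ref{thm:terminate_up_regular}), which the paper reserves for the \emph{negative} variant --- the case where rectification is genuinely impossible, since nothing ensures $\intk\leq\param-\intl-\intj-\intq$ and undershooting iterations cannot be unfolded. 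Your route works: the deviation bound you extract (offsets compose at most once per nesting level, hence are bounded by a constant of the root schema) is sound, and with it the adaptation of item (iv) of the iteration case of Lemma \ref{lem:main} goes through exactly as the paper sketches it for the up-aligned case. What your approach buys is uniformity --- essentially one argument covers both the positive and negative variants, which is in fact how the paper proves Theorem \ref{thm:terminate_negative_regular} --- at the price of doing harder work in the Main Lemma than the positive case actually requires. One small imprecision to fix: your claim that ``only finitely many instantiation events occur along a branch'' is circular (that is what termination would give you); what you actually need, and also state, is the per-lineage fact that each instantiation strictly decreases the nesting below the binding iteration, so offsets compose at most nesting-depth many times and the instantiated iterations range over finitely many shapes up to a shift.
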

        \begin{proof}(Sketch)
          It is a combination of the previous proof and proof of Theorem \ref{thm:terminate_down_regular}.
          Except that now rectification not only occurs at the beginning of the procedure but each time an $\var$-iteration is unfolded.
          Indeed each time $\var$ is instantiated in an $\var$-iteration $\schematicOp\varj\intk{\var+\intq}\pat$, this iteration has to be rectified.
          There are still finitely many schemata that are generated as instantiating $\var$-iterations can only lead to finitely many different iterations
          up to a shift.
          \qed
        \end{proof}
        \begin{theorem}
          \label{thm:terminate_negative_regular}
          $\strat'$ terminates on every schema which is monadic,
          of limited progression and negatively variable-aligned on $[\intk..\param-\intl]$ for some $\intk,\intl\in\Z$.
        \end{theorem}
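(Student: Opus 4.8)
The plan is to mirror the proof of Theorem~\ref{thm:terminate_positive_regular}, replacing its appeal to down-aligned rectification (Theorem~\ref{thm:terminate_down_regular}) by the bounded-mis-alignment argument of the up-aligned case (Theorem~\ref{thm:terminate_up_regular}). First I would show that negative variable-alignment is an invariant of $\strat'$-derivations, the analogue of Lemma~\ref{cor:aligned}. As in the simply variable-aligned case, the key point is that an $\var$-iteration $\schematicOp\varj\intk{\var+\intq}\pat$ (with $\intq\leq0$) can be unfolded only once the iteration binding $\var$ has itself been unfolded; since $\strat'$ forbids \emptiness, there is no other way to touch such an iteration. When the binding iteration, of upper bound $\expr$ of the form $\param-\intl-\intj$, is unfolded and $\var$ is instantiated with $\expr$, the instantiated copy becomes $\schematicOp\varj\intk{\expr+\intq}{\pat\substitution{\expr/\var}}$, whose upper bound $\expr+\intq$ lies below the ambient bound $\expr$ by $|\intq|$.

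Because $\intq\leq0$ we cannot unfold such a copy back up to the ambient bound (contrary to the positive case, where down-aligned rectification restores alignment); instead the copy stays permanently below it. This is exactly the situation of Theorem~\ref{thm:terminate_up_regular}: alignment is never fully restored, but the gap between the upper bound of an instantiated $\var$-iteration and the ambient bound equals $|\intq|$, hence is bounded by the deviation $\deviation(\sch)$ of the root schema, and since the root schema has finitely many iterations only finitely many offsets $\intq$ occur. I would then replay Lemmata~\ref{cor:aligned_node_uniform} and~\ref{thm:purity} up to this bounded offset, so that $\param$ still occurs only in iteration domains and literals far enough from the ambient bound are pure, yielding finiteness up to a shift of the constraint and partial-interpretation sets exactly as in Corollary~\ref{thm:lit_finite_up_to_shift} and Lemma~\ref{thm:cstr_finite_up_to_shift}.

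The core is again a variant of the Main Lemma~\ref{lem:main}, modified as in Theorem~\ref{thm:terminate_up_regular}. The only sensitive point is subcase (iv) of the iteration case: after repeated applications of \instantiaterule an iteration unfolds into $\pate_1\Op\dotsb\Op\pate_\intm\Op\schemaOp\varj\cstr\pat$, and \algebraic must collapse the copies $\pate_\numk$ that loop on one another. In the aligned case, equality up to a shift degenerated to literal equality, which let \algebraic delete duplicates directly; here the offset $|\intq|$ breaks literal equality, but since the mis-alignment is confined to a finite set the partial unfoldings $(\pate_{\expr-\intk+1}\Op\dotsb\Op\pate_\expr)_\intk$ still cannot grow without bound, so the collected patterns remain finite up to a shift.

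The step I expect to be the main obstacle is controlling the \emph{joint} deviation. Two offsets now coexist: the one produced by unfolding the iterations that bind the relevant variables (as in the variable case) and the negative index offset $\intq$ (as in the up-aligned case). I must verify that their combination stays bounded by a constant depending only on the root schema, so that the intersection with $\bowl\numk$ in Theorem~\ref{thm:congruence} --- applied to combine the finite-up-to-a-shift sets of patterns, constraints and partial interpretations --- does not discard any generated schema. Once this uniform deviation bound is secured, termination follows exactly as in Theorem~\ref{thm:terminate_regular}.
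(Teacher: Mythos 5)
Your proposal follows essentially the same route as the paper: its proof of this theorem is exactly a combination of Theorems~\ref{thm:terminate_variable_regular} and~\ref{thm:terminate_up_regular} --- the invariance of variable-alignment under $\strat'$ (resting, as you say, on \emptiness being disallowed, so an $\var$-iteration is only touched when the iteration binding $\var$ is unfolded) together with the bounded mis-alignment argument of the up-aligned case.

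One correction, however, on the point you flag as the main obstacle. The mis-alignment gap is \emph{not} bounded by $\deviation(\sch)$ of the root schema: by Definition~\ref{def:deviation}, $\deviation$ only records differences between expressions of the form $\param+\intk$, and in the root schema the upper bounds $\var+\intq$ of the $\var$-iterations contain no occurrence of $\param$ at all, so the offsets $\intq$ are invisible to $\deviation(\sch)$. They enter the deviation only after \instantiaterule replaces $\var$ by $\param-\intl-\intj$, turning the bound into $\param-\intl-\intj+\intq$, which (as you correctly observe) can never be unfolded further since nothing ensures $\intk\leq\param-\intl-\intj+\intq$. The paper's resolution of your ``joint deviation'' obstacle is precisely this: the constant $\numk$ fed to Theorem~\ref{thm:congruence} (the $\bowl\numk$ one intersects with) is taken to be the deviation of $\sch$ \emph{in which all iterations have been unfolded once}, rather than the deviation of $\sch$ itself. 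This once-unfolded deviation is still a constant depending only on the root schema, it accounts simultaneously for the instantiated $\var$-iteration bounds and for the literal indices $\var+\intq'$ that become $\param$-expressions under the same instantiation, and no further drift accumulates because the mis-aligned copies are frozen. With that constant substituted for $\deviation(\sch)$, the rest of your plan matches the paper's argument.
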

        \begin{proof}(Sketch)
          It is a combination of the proofs of Theorems \ref{thm:terminate_variable_regular} and \ref{thm:terminate_up_regular}.
          Except that now the maximum deviation used in Theorem \ref{thm:congruence} will not be the deviation of the original schema $\sch$,
          but rather the deviation of $\sch$ in which \emph{all iterations have been unfolded once}.
          Indeed schemata are not aligned anymore, even after rectification:
          when Step 2 terminates, the constraint $\intk<\param-\intl-\varj$
          where $\intl=\min\{\intl'\mid\param-\intl'\text{ is the upper bound of an iteration in }\sch\}$
          has been added.
          Hence if an $\var$-iteration $\schematicOp\varj\intk{\var-\intq}\pat$, $\intq>0$, is instantiated,
          we get $\schematicOp\varj\intk{\param-\intl-\intj-\intq}\pat$
          which cannot be unfolded as nothing ensures that $\intk\leq\param-\intl-\intj-\intq$.
          So we have to deal with mis-alignment.
          As in the proof of Theorem \ref{thm:terminate_up_regular},
          it is easily seen that this is not a problem as we have a maximum deviation as noted above.
          \qed
        \end{proof}
        Finally the following theorem is obtained by combining all previous proofs:
        \begin{theorem}
          \label{thm:terminate_broadly_variable_regular}
          $\strat'$ terminates on every schema which is monadic, of limited progression and broadly variable-aligned.
        \end{theorem}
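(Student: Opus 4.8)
The plan is to realise Theorem~\ref{thm:terminate_broadly_variable_regular} as the common generalisation of the variable-aligned and $\param$-aligned classes already treated, and to merge their proofs. A broadly variable-aligned schema is one all of whose iterations have a frame $[\intk_1..\param-\intk_2]$ or $[\intk_1..\var-\intk_2]$ with $\intk_1,\intk_2\in\Z$, so it simultaneously permits pairwise distinct lower bounds, as in Theorem~\ref{thm:terminate_broadly_regular}, and variable upper bounds of either sign, as in Theorems~\ref{thm:terminate_positive_regular} and~\ref{thm:terminate_negative_regular}. Since \emptiness is forbidden in $\strat'$, I would first show that broad variable-alignment is an invariant of every rule of $\strat'$, exactly as for Theorem~\ref{thm:terminate_variable_regular}: an $\var$-iteration $\schematicOp\varj{\intk_1}{\var-\intk_2}\pat$ can be unfolded only when the iteration binding $\var$ is itself unfolded by \instantiaterule, and each such event instantiates a fresh copy of the $\var$-iteration (its upper bound $\var-\intk_2$ becoming $\expr-\intk_2$ for the substituted rank $\expr$) while strictly decreasing the number of iterations nested below the binder. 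Hence only finitely many instantiation events occur along any branch, and every generated schema stays \arithmetic w.r.t.\ $\param$, every index being an integer or of the form $\param+\intk$ or $\var+\intk$.

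Second, I would carry over the bounded-misalignment argument of Theorems~\ref{thm:terminate_up_regular} and~\ref{thm:terminate_negative_regular}. Here the frames no longer share a common upper bound, so alignment is never attained; the decisive quantitative point is that the pairwise differences of the upper bounds stay below a single global constant, namely the deviation $\deviation(\sch')$ of the schema $\sch'$ obtained from $\sch$ by unfolding every iteration once. Using this once-unfolded schema rather than $\sch$ itself is essential, exactly as in Theorem~\ref{thm:terminate_negative_regular}, because instantiating an $\var$-iteration re-introduces variable upper bounds whose spread must already be witnessed by $\sch'$. With such a uniform bound the induction of the Main Lemma (Lemma~\ref{lem:main}) goes through: in the iteration case~$(iv)$, repeated applications of \instantiaterule yield a sequence $\pate_1\Op\dotsb\Op\pate_\intp$ whose members all belong, by the induction hypothesis, to a common \finshift set, and since the misalignment window is finite, \algebraic collapses the looping copies so that only boundedly many survive.

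Third, I would combine the three finiteness statements --- for patterns via the Main Lemma, for constraints via Lemma~\ref{thm:cstr_finite_up_to_shift}, and for partial interpretations via Corollary~\ref{thm:lit_finite_up_to_shift} --- through Theorem~\ref{thm:congruence}, in the manner of Corollary~\ref{thm:combine}, taking the admissible deviation to be $\deviation(\sch')$ so that all the cartesian products lie in the fixed $\bowl{\deviation(\sch')}$. Termination of $\strat'$ then follows, since $\{\nodeschfull\node\tab\mid\node\text{ is an alignment node}\}$ becomes finite up to the \cstrirred and pure extensions of equality up to a shift, forcing the \looping rule to fire on every sufficiently deep branch.

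The hard part will be the interplay of the two mechanisms. In the purely $\param$-aligned classes, rectification to a common upper bound happens once at the start, whereas here it is re-triggered each time an $\var$-iteration is instantiated, and the instantiated copy inherits an upper bound $\expr-\intk_2$ that may itself be variable and of either sign. I expect the main obstacle to be proving that, despite this recurring rectification, one global bound $\deviation(\sch')$ suffices \emph{uniformly} over all alignment nodes, so that Theorem~\ref{thm:congruence} applies with a fixed rather than a node-dependent deviation; concretely, one must check that instantiating $\var$ by a rank $\param-\intl-\intj$ or $\var'-\intk'$ never enlarges the spread of upper-bound deviations beyond what $\sch'$ already displays, so that the \finshift sets of the three components remain synchronised as required by the closure corollaries of Theorem~\ref{thm:congruence}.
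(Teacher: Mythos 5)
Your proposal is correct and takes essentially the same approach as the paper: the paper's entire ``proof'' of Theorem~\ref{thm:terminate_broadly_variable_regular} is the single remark that it is \emph{obtained by combining all previous proofs}, and your three stages --- invariance of broad variable-alignment under $\strat'$ (as in Theorem~\ref{thm:terminate_variable_regular}), bounded misalignment controlled by the deviation of the once-unfolded schema (as in Theorems~\ref{thm:terminate_up_regular} and~\ref{thm:terminate_negative_regular}, with rectification re-triggered at each instantiation as in Theorem~\ref{thm:terminate_positive_regular}), and the final combination via Theorem~\ref{thm:congruence} in the manner of Corollary~\ref{thm:combine} --- are precisely that combination. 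Your write-up, including the identification of the uniform-deviation issue as the delicate point, is in fact more explicit than what the paper provides.
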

      }%

  \section{Conclusion}
    \label{sec:conclusion}
    We have presented a proof procedure, called \SchDP, for reasoning with propositional formula schemata. 
    The main originality of our calculus is that the inference rules may apply at a deep position in a formula,
    a feature that is essential for handling nested iterations.
    A looping mechanism is introduced to improve the termination behavior. 
    We defined an abstract notion of looping which is very general,
    then instantiated this relation into a more concrete version that is decidable,
    but still powerful enough to ensure termination in many cases.

    We identified a class of schemata, called \elementary schemata, for which \SchDP always terminates.
    This class is much more expressive than the class of regular schemata handled in \cite{tab09}.
    The principle of the termination proof is (to the best of our knowledge) original:
    it goes by investigating how a given subformula is affected by the application of expansion rules on the ``global'' schema. 
    This is done by defining a ``traced'' version of the calculus in which 
    additional information is provided concerning the evolution of a specific subformula 
    (or set of subformulae, since a formula may be duplicated).
    This also required a thorough investigation of the properties of the looping relation. 
    We believe that these ideas could be reused to prove termination of other calculi,
    sharing common features with \SchDP
    (namely calculi that operate at deep levels inside a formula and that allow cyclic proofs).

    We do not know of any similar work in automated deduction.
    Schemata have been studied in logic (see e.g. \cite{COR06,BZ94,ORE91})
    but our approach is different from these (essentially proof theoretical) works
    both in the particular kind of targeted schemata
    and in the emphasis on the automation of the proposed calculi.
    However one can find similarities with other works.

    Iterations can obviously recall of fixed-point constructions,
    in particular in the (modal) $\mu$-calculus%
    \footnote{In which many temporal logics e.g. CTL, LTL, and CTL* can be translated.}
    \cite{handbook_modal_mucalculus}
    (with $\stdAnd\aformula$ translated into something like $\mu X.\aformula\And X$).
    However the semantics are very different:
    that of iterated schemata is restricted to \emph{finite models}
    (since every parameter is mapped to an integer, the obtained interpretation is finite),
    whereas models of the $\mu$-calculus may be infinite.
    Hence the involved logic is very different from ours and actually simpler from a theoretical point of view:
    the $\mu$-calculus admits complete proof procedures and is decidable,
    whereas schemata enjoy none of those properties.
    The relation between schemata and the $\mu$-calculus \CourteLongue{is}{might actually be} analogous to the relation
    between finite model theory \cite{fmt} and classical first-order logic.
    The detailed comparison of all those formalisms is
    worth investigating but out of the scope of the present work.
    Other fixed-point logics exist that can embed schemata such as  
    least fixpoint logic \cite{immerman_lfp} or the first-order $\mu$-calculus \cite{firstorder_mucalculus}.
    However they are essentially studied for their theoretical properties
    i.e. complete or decidable classes are seldom investigated.
    Actually the only such study that we know of is in \cite{baelde}
    and iterated schemata definitely do not lie in the studied class nor can be reduced to it.

    One can also translate schemata into first-order logic by turning the iterations into (bounded) quantifications
    i.e. $\stdAnd\aformula$ (resp. $\stdOr\aformula$) becomes $\forall\var(1\leq\var\leq\param\Implies\aformula)$
    (resp. $\exists\var(1\leq\var\leq\param\And\aformula)$).
    This translation is completed by quantifying universally
    on the parameters and by axiomatizing first-order linear arithmetic.
    Then automated reasoning is achieved through a first-order theorem prover. 
    As arithmetic is involved, useful results would probably be
    obtained only with inductive theorem provers \cite{COM01,handbook_bundy}.
    However there are very few decidability results that can be used with such provers.
    Moreover most of those systems are designed to prove formulae
    of the form $\forall\vec x.\aformula$ where $\aformula$ is \emph{quantifier-free}.
    The translation sketched above clearly shows most translated schemata do not match this form.
    \Longue{Actually this is already the case of any schema involving only one iterated \emph{disjunction}.
    Indeed adding existential quantification in inductive theorem proving is known to be a difficult problem.
    Notice finally that this translation completely hides the \emph{structure} of the original problem.}

    \Longue{
      Finally, as we have seen in Section \ref{sec:termination},
      decidability of \elementary schemata lies in the detection of \emph{cycles} during the proof search.
      This idea is not new, it is used e.g.
      in tableaux methods dealing with modal logics in transitive frames
      \cite{gore_tableau_methods}, or $\mu$-calculi \cite{mucalc_tab_cleaveland}.
      However our cycle detection is quite different because we use it to actually prove by induction.
      Notice in particular that, contrarily to the mentioned tableaux methods, we cannot in general ensure termination.
      It is more relevant to consider our case as a particular instance of \emph{cyclic proofs},
      which are studied in proof theory precisely in the context of proofs by induction.
      Both \cite{Brotherston} and \cite{InductiveReasoningFossacs}
      show that cyclic proofs seem as powerful as systems dealing classically with induction.
      A particular advantage of cyclic proofs is that finding an invariant is not needed,
      making them particularly suited to automation.
      This is also extremely useful for the formalization of mathematical proofs, because it allows one to express
      a potentially infinite proof steps sequence, thus avoiding the explicit use of the induction principle.
      This last feature has been used to avoid working with more expressive logical formalisms \cite{HLWP08}.
      However once again studies on cyclic proofs are essentially theoretical and no complete class is identified at all.
    }

    Future work includes the implementation of the \SchDP calculus 
    and the investigation of its practical performances%
    \footnote{An implementation of the less powerful but simpler \SchCal procedure is available at 
    \texttt{http://regstab.forge.ocamlcore.org}.}.
    It would also be interesting to extend the termination result in Section \ref{sec:complete} to non monadic schemata%
    \CourteLongue.{ so as to be able to express e.g. the binary multiplier of the Introduction.}
    Extension of the previous results to more powerful logics (such as first-order logic or modal logic) naturally deserves to be considered.
    \Longue{Finally the proof of Theorem \ref{thm:terminate_regular} seems to be a powerful tool.
    We hope that the underlying ideas could be useful in other proof systems.
    In particular investigating more thoroughly the looping relation could give rise to interesting connections.}

  \bibliographystyle{plain}
  \bibliography{rapport}

  \Longue{\newpage}%
  \appendix
  \section{An Example of a \SchDP Proof}
    \label{sec:example}

    We want to prove that $A+0=A$ where $+$ denotes the addition specified 
    by the schema $Adder$ described in the Introduction.
    A SAT-solver can easily prove this for a fixed $\param$ (say $\param=9$). 
    We show how to prove it for all $\param\in\N$ with \SchDP.
    This simple example has been chosen for the sake of conciseness,
    but commutativity or associativity of the adder could have been proven too.

    We express the fact that the second operand is null:
    \CourteLongue
    {$\schematicAnd\var1\param \neg B_\var$,}
    {\[\schematicAnd\var1\param \neg B_\var\]}
    and the conjecture i.e. the fact that the result equals the first operand:
    \CourteLongue
    {$\schematicAnd\var1\param A_\var\Leftrightarrow S_\var$.}
    {\[\schematicAnd\var1\param A_\var\Leftrightarrow S_\var\]}
    We negate the conjecture in order to prove it by refutation:
    \CourteLongue
    {$\schematicOr\var1\param A_\var\Xor S_\var$.}
    {\[\schematicOr\var1\param A_\var\Xor S_\var\]}
    Finally we want to refute%
    \CourteLongue
    { $Adder\And\schematicAnd\var1\param\neg B_\var\And\schematicOr\var1\param A_\var\Xor S_\var$.}
    {:\[Adder\And\schematicAnd\var1\param\neg B_\var\And\schematicOr\var1\param A_\var\Xor S_\var\]}

    The following figure is only a sketch of the real tableau:
    several rules are often applied at once,
    denoted by vertical dots labelled with the names of the used rules.
    \CourteLongue{C}{We use the conventions that c}losed leaves are marked by $\closed$,
    leaves looping on a node $\node$ 
    \Courte{(see Section \ref{sec:looping_detection})}
    by $\cycle\node$.
    Changed parts of a node are underlined,
    ``$\same$'' means ``same value as parent node's''.
    \Longue{We recall that }$\sch_1\Equiv \sch_2$ and $\sch_1\Xor \sch_2$ are shorthands
    for $(\sch_1\Implies \sch_2)\And(\sch_2\Implies \sch_1)$ and $\neg(\sch_1\Equiv \sch_2)$ respectively.
    All bound variables should be renamed so as to have different names,
    this is not done for the sake of readability.

    \everymath{\scriptsize}
    \Tree 
      [.{
        {$(1)$}\\
        $\begin{array}c
          (\schematicAnd\var1\param{Sum_\var}\And\schematicOr\var1\param{A_\var\Xor S_\var}\And\neg C_1\\
          \And\schematicAnd\var1\param{Carry_\var}\And\schematicAnd\var1\param{\neg B_\var}\And\param\geq1,\emptyset)\end{array}$\\
          \vdots\labeled\instantiaterule\\
        $\begin{array}c
          (\Mathem{\schematicAnd\var1{\param-1}{Sum_\var}\And
          Sum_\param\And\schematicOr\var1{\param-1}{A_\var\Xor S_\var}\Or(A_\param\Xor S_\param)\And\neg C_1}\\
          \Mathem{\And\schematicAnd\var1{\param-1}{Carry_\var}\And Carry_\param\And\schematicAnd\var1{\param-1}{\neg B_\var}\And\neg B_\param}
          \And\param\geq1,\same)\\
        \end{array}$}
        [.{$(\same,\{\Mathem{A_\param}\})$}
        [.{$(\same,\{A_\param,\Mathem{\neg S_\param}\})$\\
            \vdots\labeled\propsimpl\\
            $(2)$\gauche}
          ]
          !{\qsetw{-1cm}}
          [.{$(3)$\\$(\same,\{A_\param,\Mathem{S_\param}\})$\\
            \droite\vdots\labeled\propsimpl\\
                $\begin{array}c
                  (\schematicAnd\var1{\param-1}{Sum_\var}\And Sum_\param\And\schematicOr\var1{\param-1}{A_\var\Xor S_\var}\\
                  \Or(\Mathem{\schemaAnd\varj{\param\neq\param\And\varj=0}{A_\param}\Xor
                  \schemaAnd\varj{\param\neq\param\And\varj=0}{S_\param}})\And\neg C_1\\
                  \schematicAnd\var1{\param-1}{Carry_\var}\And Carry_\param\And\schematicAnd\var1{\param-1}{\neg B_\var}\And\neg B_\param
                  \And\param\geq1,\same)\\
                \end{array}$\\
                \droite\vdots\labeled\algebraic\\
                $(\schematicAnd\var1{\param-1}{Sum_\var}\And Sum_\param\And\schematicOr\var1{\param-1}{A_\var\Xor S_\var}\And\neg C_1$\\
                $\schematicAnd\var1{\param-1}{Carry_\var}\And Carry_\param\And\schematicAnd\var1{\param-1}{\neg B_\var}\And\neg B_\param
                \And\param\geq1,\same)$\\
                }
                {
                $(\dots\And\Mathem{\param-1\geq1},\same)$\\
                  \vdots\\
                  \labeled{\substack{\propsplit,\ \propsimpl\\
                  \text{and }\algebraic\text{; so that}\\
                  Sum_\param\text{ and }Carry_\param\text{ are removed;}\\
                  \text{branches trivially closed are omitted}
                  }}\\
                  $\cycle1$\\
                }
                { $(\schematicAnd\var1{\param-1}{Sum_\var}\And Sum_\param\And\Mathem\falseformula\And\neg C_1$\\
                  $\schematicAnd\var1{\param-1}{Carry_\var}\And Carry_\param\And\schematicAnd\var1{\param-1}{\neg B_\var}\And\neg B_\param
                  \And\Mathem{\param-1<1},\same)$\\
                  \ \droite\droite\vdots\labeled\algebraic\\
                 \!\!$\closed$
                }
          ]
        ]
        !{\qsetw{8.5cm}}
        [.{$(\same,\{\Mathem{\neg A_\param}\})$}
          [.{$(\same,\{\neg A_\param,\Mathem{S_\param}\})$\\
            \vdots\labeled\propsimpl\\
            $(2')$\gauche\\}
          ]
          [.{
            $(\same,\{\neg A_\param,\Mathem{\neg S_\param}\})$\\
            $(3')$\\
            }
          ]
        ]
      ]

    \Tree 
      [.{
        { $(2)$}\\
        $\begin{array}c
          (\schematicAnd\var1{\param-1}{Sum_\var}\And Sum_\param\And\schematicOr\var1{\param-1}{A_\var\Xor S_\var}\Or
          (\Mathem{\schemaAnd\varj{\param\neq\param\And\varj=0}{A_\param}\Xor
          \schemaOr\varj{\param\neq\param\And\varj=0}{S_\param}})\And\neg C_1\\
          \schematicAnd\var1{\param-1}{Carry_\var}\And Carry_\param\And\schematicAnd\var1{\param-1}{\neg B_\var}\And\neg B_\param
          \And\param\geq1,\same)\\
        \end{array}$\\
        \droite\vdots\labeled\algebraic\\
        $(\schematicAnd\var1{\param-1}{Sum_\var}\And Sum_\param\And\neg C_1\And
          \schematicAnd\var1{\param-1}{Carry_\var}\And Carry_\param\And\schematicAnd\var1{\param-1}{\neg B_\var}\And\neg B_\param
          \And\param\geq1,\same)
        $}
        {$(\same,\{A_\param,\neg S_\param,\Mathem{B_\param}\})$\gauche\\
        \droite\vdots\labeled{\substack{\propsimpl\text{ on }B_\param,\\\algebraic}}\\$\falseformula$\gauche\\$\closed$\gauche}
        !{\qsetw{1cm}}
        [.{$(\same,\{A_\param,\neg S_\param,\Mathem{\neg B_\param}\})$}
        [.{$(\same,\{A_\param,\neg S_\param,\neg B_\param,\Mathem{C_\param}\})$}
            !{\qsetw{1cm}}
            {$(\same\And\Mathem{\param-1<1},\same)$\\
            \vdots\\
            \labeled{\substack{\propsimpl\text{ on }\neg C_1,\\\algebraic}}
            \\$\closed$}
            {$(4)$}
          ]
          {$(\dots\And Sum_\param\And\dots,$\\$\{A_\param,\neg S_\param,\neg B_\param,\Mathem{\neg C_\param}\})$\\
          \hspace*{2cm}\vdots\labeled{\substack{\propsimpl,\\ \algebraic}}\\
           $(\dots\And\Mathem\falseformula\And\dots,$\\$\{A_\param,\neg S_\param,\neg B_\param,\neg C_\param\})$\\
           \hspace*{2cm}\vdots\labeled\algebraic\\
           $\closed$\quad\ \ 
          }
        ]
      ]
    
    \bigskip
    \Tree 
      [.{
          { $(4)$}\\
          $(\schematicAnd\var1{\param-1}{Sum_\var}\And Sum_\param\And\neg C_1\And$\\
          $\schematicAnd\var1{\param-1}{Carry_\var}\And Carry_\param\And\schematicAnd\var1{\param-1}{\neg B_\var}\And\neg B_\param
          \And\Mathem{\param-1\geq1},\same)$\\
          \hspace*{1.5cm}$|\ \scriptscriptstyle(\text{one }\instantiaterule)$\\
          $(\schematicAnd\var1{\param-1}{Sum_\var}\And Sum_\param\And\neg C_1\And$\\
          $\Mathem{\schematicAnd\var1{\param-2}{Carry_\var}\And Carry_{\param-1}}
          \And Carry_\param\And\schematicAnd\var1{\param-1}{\neg B_\var}\And\neg
          B_\param\And\param-1\geq1,\same)$\\
          \droite\droite\vdots\labeled{\substack{\propsimpl\text{ on }C_\param\text{ in }Carry_{\param-1}\\\algebraic}}\\
          $(\dots\And\left(\Mathem{(A_{\param-1}\And B_{\param-1})\Or(C_{\param-1}\And A_{\param-1})\Or(C_{\param-1}\And B_{\param-1})}\right)\And\dots
          \And\param-1\geq1,\same)$\\
        }
        {$(\same,\{\dotsc,\Mathem{B_{\param-1}}\})$\\
        \vdots\\
        \labeled{\substack{\instantiaterule\text{ of }\schematicAnd\var1{\param-1}{\neg B_\var}\\
        \propsimpl\text{ on }\neg B_{\param-1}\\
        \algebraic}}\\
        $\closed$}
        !{\qsetw{-1cm}}
        [.{$(\same,\{\dotsc,\Mathem{\neg B_{\param-1}}\})$\\
        \vdots\\
        \labeled{\substack{\propsplit,\ \propsimpl,\ \algebraic;\\\text{ branches trivially closed are omitted}}}\\
          $(\same,\{\dotsc,\Mathem{C_{\param-1},A_{\param-1}}\})$}
          !{\qsetw{1cm}}
          [.{$(\same,\{\dotsc,\Mathem{\neg S_{\param-1}}\})$}
            {$(\same\And\Mathem{\param-2\geq1},\dotsc)$\\
            \labeled\instantiaterule\vdots\gauche\\
            $\cycle4$}
            {$(\same\And\Mathem{\param-2<1},\dotsc)$\\
            \vdots\\
            \labeled{\substack{\text{the constraint imposes }\param=2\text{, hence }C_{\param-1}=C_1\\
            \to\text{ contradiction with }\neg C_1\\
            \text{formally: }\propsimpl\text{ on }C_1,\ \algebraic}}\\
            $\closed$}
          ]
        !{\qsetw{4.9cm}}
          {$(\same,\{\dotsc,\Mathem{S_{\param-1}}\})$\\
          \vdots\\
          \labeled{\substack{\propsimpl,\ \algebraic\\\text{ inside }Sum_\param}}\\
          $\closed$}
        ]
      ]

    \bigskip
    $(2')$ and $(4')$ are very similar to $(2)$ and $(4)$.

\end{document}